\pdfoutput=1
\documentclass[a4paper, UKenglish, cleveref, autoref, thm-restate]{lipics-v2021}
\bibliographystyle{plainurl}

\author{Bartosz Bednarczyk}
{Computational Logic Group, Technische Universit{\"a}t  Dresden, Germany \and 
Institute of Computer Science, University of Wroc\l aw, Poland}
{bartosz.bednarczyk@cs.uni.wroc.pl}
{https://orcid.org/0000-0002-8267-7554}
{supported by “Diamentowy. Grant” no. DI2017 006447.}

\author{Maja Or\l{}owska }
{Institute of Computer Science, University of Wroc\l aw, Poland}
{}
{}
{}

\author{Anna Pacanowska}
{Institute of Computer Science, University of Wroc\l aw, Poland}
{}
{}
{}

\author{Tony Tan}
{Department of Computer Science and Information Engineering, National Taiwan University, Taiwan}
{tonytan@csie.ntu.edu.tw}
{}
{supported by
Taiwan Ministry of Science and Technology under grant no. 109-2221-E-002-143-MY3
and National Taiwan University under grant no.~109L891808.}

\authorrunning{B. Bednarczyk and M. Or\l{}owska and A. Pacanowska and T. Tan}
\Copyright{Bartosz Bednarczyk and Maja Or\l{}owska and Anna Pacanowska and Tony Tan}
\ccsdesc[500]{Theory of computation~Logic and databases}

\keywords{
statistical reasoning, 
knowledge representation,
satisfiability,
fragments of first-order logic,
guarded fragment,
two-variable fragment,
(un)decidability
}

\relatedversion{Extended version of the paper~\cite{ArxivVersion} is available on \href{https://arxiv.org/abs/2106.15250}{arxiv.org/abs/2106.15250}. Results from the 3rd section of the paper already appeared in the bachelor thesis~\cite{PracaLic} of A. Pacanowska and M. Or\l{}owska, written under an informal supervision of B. Bednarczyk.}

\EventEditors{Miko{\l}aj Boja\'{n}czyk and Chandra Chekuri}
\EventNoEds{2}
\EventLongTitle{41st IARCS Annual Conference on Foundations of Software Technology and Theoretical Computer Science (FSTTCS 2021)}
\EventShortTitle{FSTTCS 2021}
\EventAcronym{FSTTCS}
\EventYear{2021}
\EventDate{December 15--17, 2021}
\EventLocation{Virtual Conference}
\EventLogo{}
\SeriesVolume{213}
\ArticleNo{5}

\usepackage{lmodern}

\frenchspacing
\usepackage[abbreviations,british]{foreign} 

\usepackage{etoolbox}

\usepackage{xcolor}
\definecolor{darkmidnightblue}{rgb}{0.0, 0.2, 0.4}
\definecolor{persianplum}{rgb}{0.44, 0.11, 0.11}

\hypersetup{
  colorlinks  = true, 
  citecolor   = persianplum, 
  urlcolor    = persianplum, 
  linkcolor   = persianplum
}

\usepackage{tikz}
\usepackage{todonotes}
\usepackage{xspace}

\usepackage{amssymb} 
\usepackage{amsmath}
\usepackage{amsthm}
\usepackage{mathtools}

\usetikzlibrary{shadows}
\tikzset{
diagonal fill/.style 2 args={fill=#2, path picture={
\fill[#1, sharp corners] (path picture bounding box.south west) -|
                         (path picture bounding box.north east) -- cycle;}},
reversed diagonal fill/.style 2 args={fill=#2, path picture={
\fill[#1, sharp corners] (path picture bounding box.north west) |- 
                         (path picture bounding box.south east) -- cycle;}}
}

\usetikzlibrary{hobby,backgrounds,calc,trees}

\usepackage{tikz-cd}
\usetikzlibrary{matrix}

\usepackage{caption}
\usepackage{subcaption}


\newtheorem{fact}{Fact}


\newcommand{\complexityclass}[1]{\textsc{#1}} 

\newcommand{\ExpTime}{\complexityclass{ExpTime}} 
\hyphenation{Exp-Time} 
\newcommand{\ThreeNExpTime}{\complexityclass{3NExpTime}} 


\newcommand{\FO}{\mathrm{FO}}
\newcommand{\FOt}{\FO^2}
\newcommand{\Ct}{\mathrm{C}^2}

\newcommand{\FOtpercloc}{\FOt_{\textit{loc}\%}}
\newcommand{\FOtpercgl}{\FOt_{\textit{gl}\%}}

\newcommand{\GF}{\mathrm{GF}}
\newcommand{\GFt}{\GF^2}

\newcommand{\GFtpercloc}{\GF^2_{\textit{loc}\%}}
\newcommand{\GFtpercgl}{\GF^2_{\textit{gl}\%}}
\newcommand{\GFtpres}{\GF^2_{\textit{pres}}}

\newcommand{\GFkpercloc}{\GF^k_{\textit{loc}\%}}
\newcommand{\GFkpercgl}{\GF^k_{\textit{gl}\%}}
\newcommand{\GFkpres}{\GF^k_{\textit{pres}}}
\newcommand{\GFpres}{\GF_{\textit{pres}}}

\newcommand{\GFpercloc}{\GF_{\textit{loc}\%}}
\newcommand{\GFpercgl}{\GF_{\textit{gl}\%}}


\newcommand{\DL}[1]{\ensuremath{\mathcal{#1}}}   
\newcommand{\ALCI}{\DL{ALCI}}                    
\newcommand{\ALCIHbself}{\DL{ALCIH}b^{\textsf{self}}}     
\newcommand{\ALCISCCplusplus}{\DL{ALCISCC}^{++}} 
\newcommand{\ALCSCC}{\DL{ALCSCC}}                    

\newcommand{\existsperc}[2]{\exists^{{#1}{#2}\%}}
\newcommand{\existspercwithrel}[3]{\exists^{{#1}{#2}\%}_{#3}}
\newcommand{\Maj}{\existsperc{=}{50}}

\newcommand{\card}[3]{|#1|_{#2}^{#3}}

\newcommand{\SHTP}{\ensuremath{\mathsf{SHTP}}}

\newcommand{\str}[1]{{\mathcal{#1}}}

\newcommand{\cA}{\mathcal{A}}

\newcommand{\N}{{\mathbb{N}}}
\newcommand{\Z}{{\mathbb{Z}}} 
 
\newcommand{\Var}{\mathrm{Var}}



\newcommand{\phiequality}{\varphi_{\text{eq}}}
\newcommand{\phihalves}{\varphi_{\text{halves}}}
\newcommand{\phientry}{\varphi_{\text{entry}}^{\varepsilon_i}}
\newcommand{\phipartition}{\varphi_{\text{parti}}}
\newcommand{\phiaddition}{\varphi_{\text{add}}}
\newcommand{\phimultiplication}{\varphi_{\text{mult}}}
\newcommand{\philink}{\varphi_{\text{link}}}
\newcommand{\phireduction}[1]{\varphi_{\text{red}}^{#1}}
\newcommand{\phicount}{\varphi_{\text{count}}}
\newcommand{\phibfunc}{\varphi_{\text{bfunc}}}
\newcommand{\phivar}{\varphi_{\text{var}}^{\varepsilon}}
\newcommand{\phiuequalone}{\varphi_{u_i{=}1}}
\newcommand{\Avar}[1]{A_{#1}}
\newcommand{\FHalf}[1]{\textrm{FHalf}^{[#1]}}
\newcommand{\SHalf}[1]{\textrm{SHalf}^{[#1]}}
\newcommand{\Mult}[2]{\textrm{Mult}_{#1}{#2}}

\newcommand{\predHalf}{\textrm{Half}}
\newcommand{\predH}{\textrm{H}}
\newcommand{\predR}{\textrm{R}}
\newcommand{\predJ}{\textrm{J}}
\newcommand{\predU}{\textrm{U}}


\newcommand{\cN}{\mathcal{N}}

\newcommand{\vc}{\bar{c}}
\newcommand{\vu}{\bar{u}}
\newcommand{\vv}{\bar{v}}
\newcommand{\vx}{\bar{x}}

\newcommand{\offset}[1]{\text{offset}(#1)}
\newcommand{\prd}[1]{\text{prd}(#1)}

\newcommand{\etanull}{\eta_{\text{null}}}

\nolinenumbers
\title{On Classical Decidable Logics extended with Percentage Quantifiers and Arithmetics}
\titlerunning{On Classical Decidable Logics extended with Percentage Quantifiers and Arithmetics}

\begin{document}
\maketitle

\begin{abstract}
During the last decades, a lot of effort was put into identifying decidable fragments of first-order logic. 
Such efforts gave birth, among the others, to the two-variable fragment and the guarded fragment, 
depending on the type of restriction imposed on formulae from the language. 
Despite the success of the mentioned logics in areas like formal verification and knowledge representation, 
such first-order fragments are too weak to express even the simplest statistical constraints, 
required for modelling of influence networks or in statistical reasoning.

In this work we investigate the extensions of these classical decidable logics with percentage quantifiers, 
specifying how frequently a formula is satisfied in the indented model. 
We show, surprisingly, that all the mentioned decidable fragments become undecidable under such extension, 
sharpening the existing results in the literature. 
Our negative results are supplemented by decidability of the two-variable guarded fragment 
with even more expressive counting, namely Presburger constraints.
Our results can be applied to infer decidability of various modal and description logics, 
\eg Presburger Modal Logics with Converse or $\ALCI$, with expressive cardinality constraints.
\end{abstract}

\section{Introduction} \label{sec:intro}

Since the works of Church, Turing and Trakhtenbrot,
it is well-known that the (finite) satisfiability and validity problems for the First-Order Logic ($\FO$) are undecidable~\cite{trakhtenbrot}.
Such results motivated researchers to study restricted classes of $\FO$ that come with decidable
satisfiability problem, such as the prefix classes~\cite{BorgerGG1997}, 
fragments with fixed number of variables~\cite{Scott62}, 
restricted forms of quantification~\cite{AndrekaNB98,Quine76} and 
the restricted use of negation~\cite{BaranyCS15}. 
These fragments have found many applications in the areas of 
knowledge representation, automated reasoning and  program verification, just to name a few.
To the best of our knowledge, none of the known decidable logics incorporate 
a feature that allows for stating even a very modest statistical property.
For example, one may want to state that
``to qualify to be a major, one must have at least 51\% of the total votes'', 
which may be useful to formalise, \eg the voting~systems.

\textbf{Our results.}
In this paper, we revisit the satisfiability problem for some of the most prominent fragments of $\FO$, 
namely the two-variable fragment $\FOt$ and the guarded fragment~$\GF$.
We extend them with the so-called \emph{percentage} quantifiers, 
in two versions: \emph{local} and \emph{global}.
Global percentage quantifiers are quantifiers of the form $\exists^{=q\%} x\ \varphi(x)$,
which states that the formula $\varphi(x)$ holds on exactly $q\%$ of the domain elements.
Their local counterparts are quantifiers of the form $\exists^{=q\%}_R y\ \varphi(x,y)$,
which intuitively means that exactly $q\%$ of the $R$-successors of an element $x$ satisfy $\varphi$.

In this paper, we show that both $\FOt$ and $\GF$ become undecidable 
when extended with percentage quantifiers of any type. 
In fact, the undecidability of $\GF$ already holds for its three variable fragment~$\GF^3$. 
Our results strengthen the existing undecidability proofs of~$\ALCISCCplusplus$ from~\cite{BaaderBR20} and 
of~$\FOt$ with equicardinality statements (implemented via the H{\"a}rtig quantifier) from~\cite{Gradel:undec} and 
contrast with the decidability of $\FOt$ with counting quantifiers ($\Ct$)~\cite{GradelOR97,PacholskiST00,Pratt-Hartmann05} and 
modulo and ultimately-periodic counting quantifiers~\cite{BenediktKT20}.

Additionally, we show that the decidability status of $\GF$ can be regained 
if we consider $\GF^2$, \ie the intersection of $\GF$ and $\FOt$, 
which is still a relevant fragment of $\FO$ that captures standard description logics up to $\ALCIHbself$~\cite{DLBook,Gradel:DL}.
We in fact show a stronger result here:
$\GF^2$ remains decidable when extended with \emph{local Presburger quantifiers},
which are essentially Presburger constraints on the neighbouring elements, 
\eg we can say that the number of red outgoing edges plus twice the number of blue outgoing edges 
is at least three times as many as the number of green incoming edges.

We stress here that the semantics of global percentage quantifiers makes sense only over finite domains and hence, 
we study the satisfiability problem over finite models only. 
Similarly, the semantics of local percentage quantifiers only makes sense
if the models are finitely-branching. 
While we stick again to the finite structures, 
our results on local percentage quantifiers also can be transferred to the case of (possibly infinite) finitely-branching structures.

\textbf{Related works.}
Some restricted fragments of $\GFt$ extended with arithmetics, 
namely the (multi) modal logics, were already studied in the literature~\cite{DemriL10,KupkeP10,Baader17,BaaderBR20}, 
where the decidability results for their finite and unrestricted satisfiability were obtained.
However, the logics considered there do not allow the use of the inverse of relations.
Since $\GFt$ captures the extensions of all the aforementioned logics with the inverse relations,
our decidability results subsume those in~\cite{DemriL10,KupkeP10,Baader17,BaaderBR20}.
We note that prior to our paper, it was an open question whether any of these decidability results
still hold when inverse relations are allowed~\cite{BaaderBR20}.
In our approach, despite the obvious difference in expressive power, 
we show that $\GFt$ with Presburger quantifiers can be encoded directly into the two-variable logic with counting quantifiers~\cite{GradelOR97,PacholskiST00,Pratt-Hartmann05}, which we believe is relatively simple and avoids cumbersome reductions of the satisfiability problem into integer programming. 


\section{Preliminaries}

We employ the standard terminology from finite model theory~\cite{Libkin04}.
We refer to structures/models with calligraphic letters $\str{A}, \str{B}, \str{M}$ and 
to their universes with the corresponding capital letters $A,B,M$. 
We work only on structures with \emph{finite} universes over purely relational (\ie constant- and function-free) signatures of arity $\leq 2$ containing the equality predicate $=$.
We usually use $a, b, \ldots$ to denote elements of structures, 
$\bar{a}, \bar{b}, \ldots$ for tuples of elements, $x, y, \ldots$ for variables and
$\bar{x}, \bar{y}, \ldots$ for tuples of variables (all of these possibly with some decorations). 
We write $\varphi(\bar{x})$ to indicate that all free variables of~$\varphi$ are in $\bar{x}$.
We write $\str{M},x/a \models \varphi(x)$ to denote that $\varphi(x)$ holds in the structure $\str{M}$ when
the free variable $x$ is assigned with element $a$.
Its generalization to arbitrary number of free variables is defined similarly.
The (finite) satisfiability problem is to decide whether an input formula has a (finite) model.

\subsection{Percentage quantifiers}
For a formula $\varphi(x)$ with a single free-variable $x$,
we write $\card{\varphi(x)}{\str{M}}{}$ to denote the total number of elements of $\str{M}$ satisfying $\varphi(x)$.
Likewise, for an element $a \in M$ and a formula $\varphi(x,y)$ with free variables $x$ and $y$,
we write $\card{\varphi(x,y)}{\str{M}}{x/a}$ to denote the total number of elements 
$b \in M$ such that $(a,b)$ satisfies $\varphi(x,y)$.

The \emph{percentage quantifiers} are quantifiers of the form  
$\existsperc{=}{q}x \ \varphi(x,y) $, 
where $q$ is a rational number between $0$ and $100$, stating that 
exactly $q\%$ of domain elements satisfy $\varphi(x,y)$ with $y$ known upfront.
Formally:
\[
\str{M}, y/a \models \existsperc{=}{q}x \; \varphi(x,y) \qquad \text{iff} \qquad  \card{\varphi(x,y)}{\str{M}}{y/a} = \frac{q}{100} \cdot |M|.
\]
Percentage quantifiers for other thresholds (\eg for $<$) are defined analogously.
We stress here that the above quantifiers count \emph{globally}, 
\ie they take the whole universe of $\str{M}$ into account.
This motivates us to define their local counterpart, as follows: 
for a binary\footnote{Local percentage quantifiers for predicates of arity higher than two 
can also be defined but we will never use them. 
Hence, for simplicity, we define such quantifiers only for binary relations.} relation $R$ and a rational $q$ between $0$ and $100$,
we define the quantifier $\existspercwithrel{=}{q}{R}y\;\varphi(x, y)$, 
which evaluates to true whenever exactly $q$\% of $R$-successors $y$ of $x$ satisfy $\varphi(x,y)$.
Formally,
\[
\str{M},x/a \; \models\; \existspercwithrel{=}{q}{R}y \ \varphi(x, y) \qquad \text{iff} \qquad
\card{R(x,y) \land \varphi(x, y)}{\str{M}}{x/a} = \frac{q}{100} \cdot \card{R(x,y)}{\str{M}}{x/a}.
\]
We define the percentage quantifiers w.r.t. $R^-$ (\ie the inverse of $R$) and for other thresholds analogously.

\subsection{Local Presburger quantifiers}\label{subsec:local-presb-quantifiers}

The \emph{local Presburger quantifiers} are expressions of the following form:
\[
\sum_{i=1}^n\ \lambda_i\cdot \#_y^{r_i}[\varphi_i(x,y)] \quad \circledast \quad \delta
\]
where $\lambda_i$, $\delta$ are integers;
$r_i$ is either $R$ or $R^{-}$ for some binary relation $R$;
$\varphi_i(x,y)$ is a formula with free variables $x$ and $y$;
and $\circledast$ is one of $=$, $\neq$, $\leq$, $\geq$, $<$, $>$, $\equiv_d$ or $\not\equiv_d$,
where $d \in \N_+$.
Here $\equiv_d$ denotes the congruence modulo $d$.
Note that the above formula has one free variable~$x$.

Intuitively, the expression $\#_y^{r_i}[\varphi_i(x,y)]$ denotes the number of $y$'s that satisfy $r_i(x,y)\wedge\varphi_i(x,y)$ and evaluates to true on $x$, if the (in)equality $\circledast$ holds.
Formally, 
\[  \str{M},x/a   \quad \models \quad  \sum_{i=1}^n \lambda_i\cdot \#_y^{r_i}[\varphi_i(x,y)] \ \circledast \ \delta \quad \text{iff} \quad
\sum_{i=1}^n\ \ \lambda_i\cdot \card{r_i(x,y)\wedge \varphi_i(x,y)}{\str{M}}{x/a} \quad \circledast \quad \delta \]
Note that local percentage quantifiers can be expressed with Presburger quantifiers, \eg
$\exists^{50\%}_Ry \varphi(x,y)$
can be expressed as local Presburger quantifier:
$\#_y^R[\varphi(x,y)] - \frac{1}{2} \#_y^R[\top] = 0$.

\subsection{Logics}

In this paper we mostly consider two fragments of first-order logic, 
namely \emph{the two-variable fragment} $\FOt$ and \emph{the guarded fragment} $\GF$.
The former logic is a fragment of $\FO$ in which we can only use the variables $x$ and $y$.
By allowing local and global percentage quantifiers in addition to the standard universal and existential quantifiers, 
we obtain the logics $\FOtpercloc$ and $\FOtpercgl$.
The latter logic is defined by relativising quantifiers with relations. 
More formally, $\GF$ is the smallest set of first-order formulae such that the following holds.
\begin{itemize}
\item 
$\GF$ contains all atomic formulae $R(\bar{x})$ and equalities between variables.
\item 
$\GF$ is closed under boolean connectives.
\item 
If $\psi(\bar{x}, \bar{y})$ is in $\GF$ and $\gamma(\bar{x}, \bar{y})$ is a relational atom containing all free variables of $\psi$,
then both $\forall{\bar{y}} \; \gamma(\bar{x}, \bar{y}) \to \psi(\bar{x}, \bar{y})$ and 
$\exists{\bar{y}} \; \gamma(\bar{x}, \bar{y}) \land \psi(\bar{x}, \bar{y})$ are in $\GF$.
\end{itemize} 
By allowing global percentage quantifiers additionally in place of existential ones, we obtain the logic $\GFpercgl$. 
We obtain the logic $\GFpercloc$ by extending $\GF$'s definition with the rule:\footnote{Note that 
$R$ in the subscript of a quantifier serves the role of a ``guard''.}
\begin{itemize}
\item 
$\existspercwithrel{=}{q}{R}y \; \varphi(x, y)$ is in $\GFpercloc$ iff $\varphi(x,y)$ in $\GF$ with free variables $x,y$.
\end{itemize}
Similarly, we obtain $\GFpres$ by extending $\GF$'s definition with the rule:
\begin{itemize}
\item 
$\sum_{i=1}^n\ \lambda_i\cdot \#_y^{r_i}[\varphi_i(x,y)] \; \circledast \; \delta$ is in $\GFpres$ iff $\varphi_i(x,y)$ are in $\GF$ with free variables $x,y$.
\end{itemize}
Finally, we use $\GFkpercgl$, $\GFkpercloc$ and $\GFkpres$ to denote the $k$-variable fragments of the mentioned logics. 
Specifically, we use $\GFtpercgl$, $\GFtpercloc{}$ and $\GFtpres{}$ for the two-variable fragments.

\subsection{Semi-linear sets}

Since we will exploit the semi-linear characterization of Presburger constraints,
we introduce some terminology.
The term \emph{vector} always means \emph{row vectors}.
For vectors $\vv_0,\vv_1,\ldots,\vv_k \in \N^{\ell}$,
we write $L(\vv_0;\vv_1,\ldots,\vv_k)$ to denote the set:
\begin{eqnarray*}
L(\vv_0;\vv_1,\ldots,\vv_k) & := &
\left\{
\begin{array}{l|l}
\vu\in \N^{\ell} & \vu= \vv_0 + \sum_{i=1}^k n_i\vv_i\ \text{for some}\ n_1,\ldots,n_k \in \N
\end{array}
\right\}
\end{eqnarray*}
A set $S\subseteq \N^{\ell}$ is a \emph{linear} set, if 
$S = L(\vv_0;\vv_1,\ldots,\vv_k)$, for some $\vv_0,\vv_1,\ldots,\vv_k\in \N^{\ell}$.
In this case, the vector $\vv_0$ is called the {\em offset} vector of $S$,
and $\vv_1,\ldots,\vv_k$ are called the {\em period} vectors of $S$.
We denote by $\offset {S}$ the offset vector of $S$, \ie $\vv_0$
and $\prd {S}$ the set of period vectors of $S$, \ie $\{\vv_1,\ldots,\vv_k\}$.
A \emph{semilinear} set is a finite union of linear sets.

The following theorem is a well-known result by Ginsburg and Spanier~\cite{semilinear}
which states that every set $S\subseteq \N^{\ell}$ definable by Presburger formula is a semilinear set.
See~\cite{semilinear} for the formal definition of Presburger formula.

\begin{theorem}[\cite{semilinear}]\label{theo:semilinear}
For every Presburger formula $\varphi(x_1,\ldots,x_{\ell})$ 
with free variables $x_1,\ldots,x_{\ell}$,
the set $\{\vu\in \N^{\ell}\ |\ \varphi(\vu)\ \text{holds in}\ \N\}$ is semilinear.
Moreover, given the formula $\varphi(x_1,\ldots,x_{\ell})$,
one can effectively compute a set of tuples of vectors 
$
    \{(\vv_{1,0},\ldots,\vv_{1,k_1}),\ldots,(\vv_{p,0},\vv_{p,1},\ldots,\vv_{p,k_p})\}
$
such that $\{\vu\in \N^{\ell}\ |\ \varphi(\vu)\ \text{holds in}\ \N\}$ is equal to $\textstyle\bigcup_{i=1}^{p}\ L(\vv_{i,0};\vv_{i,1},\ldots,\vv_{i,k_i})
$.
\end{theorem}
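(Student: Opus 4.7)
The plan is to establish both assertions simultaneously by structural induction on the Presburger formula $\varphi$. Atomic formulas are linear (in)equalities $\sum_{i=1}^{\ell} a_i x_i \sim b$ with $\sim \in \{=, \leq\}$ and congruences $\sum_i a_i x_i \equiv_d b$ over $\Z$; the inductive cases cover the Boolean connectives and existential quantification. Throughout, for each subformula we maintain a finite list of tuples of vectors witnessing semilinearity of its solution set, and show how to compute the list from those produced for its subformulas; this design yields the effectiveness claim as a byproduct.

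For the atomic cases, the solution set in $\N^{\ell}$ of a single linear (in)equality is semilinear: by Dickson's lemma the componentwise-minimal solutions form a finite set, and every solution decomposes as a minimal solution plus a non-negative integer combination of minimal solutions to the associated homogeneous system. Known effective bounds on the sizes of such minimal solutions (e.g.\ Papadimitriou-style estimates for integer programming) make the construction effective. Congruences are handled analogously, or by rewriting them as equalities with an extra existentially quantified variable.

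For Boolean combinations, union is immediate by concatenating representation lists. Intersection of two linear sets $L(\vv_0;\vv_1,\ldots,\vv_k)$ and $L(\vu_0;\vu_1,\ldots,\vu_m)$ is handled by observing that a vector lies in the intersection iff there exist $n_1,\ldots,n_k,m_1,\ldots,m_m \in \N$ satisfying the linear system $\vv_0 + \sum_i n_i \vv_i = \vu_0 + \sum_j m_j \vu_j$; the solution set of this system is semilinear by the atomic case, and its image under the linear map $(n_1,\ldots,m_m) \mapsto \vv_0 + \sum_i n_i \vv_i$ is semilinear as well. Complementation is the hardest step: following Ginsburg and Spanier, one reduces to complementing a single linear set $L(\vv_0;\vv_1,\ldots,\vv_k)$, which is done by partitioning $\N^{\ell}$ into residue classes modulo the lattice generated by the period vectors and describing, inside each class, the finite collection of witnesses that lie outside $L$.

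For existential quantification, projection distributes over finite unions, so it suffices to note that projecting a linear set simply drops the corresponding coordinate from the offset and period vectors, again producing a linear set. Effectiveness propagates through every step, which gives the moreover part of the statement. The main obstacle in carrying out this plan is the complementation case, which demands the most delicate combinatorial analysis; all other steps are algebraically clean once the atomic semilinearity is established. An alternative route would be to invoke Presburger's quantifier-elimination procedure in the extended signature with divisibility predicates and then extract a semilinear representation from the resulting quantifier-free formula, which sidesteps the explicit complementation argument at the price of more syntactic bookkeeping.
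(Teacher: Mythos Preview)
The paper does not prove this theorem at all: it is stated with attribution \texttt{[\textbackslash cite\{semilinear\}]} and used as a black box (the classical Ginsburg--Spanier result). There is therefore no ``paper's own proof'' to compare your attempt against.

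Your sketch is a faithful outline of the standard structural-induction proof of the Ginsburg--Spanier theorem, and the identification of complementation as the crux is correct. Two remarks on the sketch itself. First, your description of the complementation step (``partitioning $\N^{\ell}$ into residue classes modulo the lattice generated by the period vectors and describing, inside each class, the finite collection of witnesses that lie outside $L$'') is too loose to be a plan one could execute: when the period vectors do not span a full-rank sublattice, the quotient is infinite, and even in the full-rank case the complement of $L$ inside a residue class need not be finite---it is itself semilinear, and showing this is essentially the whole difficulty. The actual Ginsburg--Spanier argument for complementation is more delicate (or, as you note, can be bypassed via Presburger quantifier elimination). Second, your handling of intersection silently uses that the image of a semilinear set under an affine map with $\N$-coefficients is semilinear; this is true and easy, but worth flagging since you invoke it before having closure under projection.
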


\subsection{Types and neighbourhoods}
A \emph{$1$-type} over a signature $\Sigma$ is a maximally consistent set of unary predicates from $\Sigma$ or their negations,
where each atom uses only one variable $x$.
Similarly, a \emph{$2$-type} over $\Sigma$ is a maximally consistent set of binary predicates from $\Sigma$ or their negations
containing the atom $x\neq y$,
where each atom or its negation uses two variables $x$ and $y$.\footnote{We should remark here
that the standard definition of 2-type, such as in~\cite{GradelKV97,Pratt-Hartmann05},
a 2-type also contains unary predicates or its negation involving variable $x$ or $y$.
However, for our purpose, it is more convenient to define a 2-type as consisting of only
binary predicates that strictly use both variables $x$ and $y$.
Note also that we view a binary predicate such as $R(x,x)$ as a unary predicate.}

Note that $1$-types and $2$-types can be viewed as quantifier-free formulae that are the conjunction of their elements.
We will use the symbols $\pi$ and $\eta$ (possibly indexed) to denote 1-type and 2-type, respectively.
When viewed as formula, we write $\pi(x)$ and $\eta(x,y)$, respectively.
We write $\pi(y)$ to denote formula $\pi(x)$ with $x$ being substituted with $y$.
The $2$-type that contains only the negations of atomic predicates is called the \emph{null} type,
denoted by $\etanull$. Otherwise, it is called a \emph{non-null} type.

For a $\Sigma$-structure $\str{M}$, 
the \emph{type of an element} $a \in M$ is the unique $1$-type $\pi$ that $a$ satisfies in~$\str{M}$.
Similarly, the type of a pair $(a,b)\in M\times M$, where $a\neq b$, 
is the unique $2$-type that $(a,b)$ satisfies in $\str{M}$.
For an element $a\in M$, the {\em $\eta$-neighbourhood} of $a$,
denoted by $\cN_{\str{M},\eta}(a)$,
is the set of elements $b$ such that $\eta$ is the 2-type of $(a,b)$.
Formally, 
\[
\cN_{\str{M},\eta}(a) \; :=\;  
\left\{
\begin{array}{l|l}
b \in M & \str{M},x/a,y/b\models \eta(x,y)
\end{array}
\right\}.
\]
The {\em $\eta$-degree} of $a$, denoted by $\deg_{\str{M},\eta}(a)$, 
is the cardinality of $\cN_{\str{M},\eta}(a)$.

Let $\eta_1,\ldots,\eta_{\ell}$ be an enumeration of all the non-null types.
The {\em degree of $a$ in $\str{M}$} is defined as the vector $
\deg_{\str{M}}(a)\;  :=\;  
(\deg_{\str{M},\eta_1}(a), \cdots ,\deg_{\str{M},\eta_{\ell}}(a))
$.
Intuitively, $\deg_{\str{M}}(a)$ counts the number of elements adjacent to $a$
with non-null type.
We note that our logic can be easily extended with atomic predicates of the form
of a linear constraint $C$ over the variables $\deg_{\eta}(x)$'s or  
$\deg(x)\in S$, where $S$ is a semilinear set.
Semantically, $\str{M},x/a \models C$ iff the linear constraint $C$
evaluates to true when each $\deg_{\eta}(x)$ is substituted with $\deg_{\str{M},\eta}(a)$
and $\str{M},x/a\models \deg(x)\in S$ iff $\deg_{\str{M}}(a)\in S$.
We stress that these atomic predicates will only be used to facilitate the proof of our decidability result.


\section{Negative results}\label{sec:negative_results}

In this section we turn our attention to the negative results announced in the introduction.

\subsection{Two-Variable Fragment}

We start by proving that the two-variable fragment of $\FO$ extended with percentage quantification has undecidable finite satisfiability problem. Actually, in our proof, we will only use the $\Maj$ quantifier.  
Our results strengthen the existing undecidability proofs of~$\ALCISCCplusplus$ from~\cite{BaaderBR20} and of~$\FOt$ with equicardinality statements (implemented via the H{\"a}rtig quantifier) from~\cite{Gradel:undec}. 
Roughly speaking, our counting mechanism is weaker: we cannot write arbitrary Presburger constraints (as it is done in~\cite{BaaderBR20}) nor compare sizes of any two sets (as it is done in~\cite{Gradel:undec}). 
Nevertheless, we will see that in our framework we can express ``functionality'' of a binary relation and ``compare'' cardinalities of sets, but under some technical assumptions of dividing the intended models into halves. 
Due to such technicality, we cannot simply encode the undecidability proofs of~\cite{BaaderBR20,Gradel:undec} and we need to prepare our proof ``from scratch''.  

Our proof relies on encoding of Hilbert's tenth problem, whose simplified version is introduced below.
In the classical version of \emph{Hilbert's tenth problem} we ask whether a \emph{diophantine equation}, \ie a polynomial equation with integer coefficients, has a solution over~$\N$. 
It is well-known that such problem is undecidable~\cite{Matiyasevich}.
By employing some routine transformations (\eg by rearranging terms with negative coefficients, by replacing exponentiation by multiplication and by introducing fresh variable for partial results of multiplications or addition), one can reduce any diophantine equation to an equi-solvable system of equations, where the only allowed operations are addition or multiplication of two variables or assigning the value one to some of them.
We refer to the problem of checking solvability (over~$\N$) of such systems of equations as~$\SHTP$ (\emph{simpler Hilbert's tenth problem}) and present its precise definition next.
Note that, by the described reduction, $\SHTP$ is undecidable.

\begin{definition}[$\SHTP$]
An input of~$\SHTP$ is a system of equations~$\varepsilon$, where each of its entries~$\varepsilon_i$ is in one of the following forms: 
(i)~$u_i = 1$, 
(ii)~$u_i = v_i + w_i$, 
(iii)~$u_i = v_i \cdot w_i$, 
where~$u_i,v_i,w_i$ are pairwise distinct \emph{variables} from some countably infinite set~$\Var$. 
In~$\SHTP$ we ask whether an input system of equations~$\varepsilon$, as described before, has a solution over~$\N$. 
\end{definition}


\subsubsection{Playing with percentage quantifiers}\label{subsec:playing-with}
Before reducing $\SHTP$ to $\FOtpercgl$, let us gain more intuitions of $\FOtpercgl$ and introduce a useful trick employing percentage quantifiers to express equi-cardinality statements.
Let $\str{M}$ be a finite structure and let $\predHalf, \predR, \predJ$ be unary predicates.
We say that $\str{M}$ is \emph{$(\predHalf, \predR, \predJ)$-separated}  whenever it satisfies the following conditions:
(a) exactly half of the domain elements from $\str{M}$ satisfy $\predHalf$
(b) the satisfaction of $\predR$ implies the satisfaction of $\predHalf$
(c) the satisfaction of $\predJ$ implies the non-satisfaction of $\predHalf$.
Roughly speaking, the above conditions entail that the elements satisfying $\predR$ and 
those satisfying~$\predJ$ are in different halves of the model.
We show that under these assumptions one can enforce the equality $\card{\predR(x)}{\str{M}}{} = \card{\predJ(x)}{\str{M}}{}$.
Indeed, such a property can be expressed in $\FOtpercgl$ with the following formula $\phiequality(\predHalf, \predR, \predJ)$:

\begin{center}
    \tikzset{every picture/.style={line width=0.75pt}} 

    \begin{tikzpicture}[x=0.75pt,y=0.75pt,yscale=-1,xscale=1]

    \draw  [fill={rgb, 255:red, 126; green, 211; blue, 33 }  ,fill opacity=0.61 ] (141,29.33) -- (207,29.33) -- (207,98) -- (141,98) -- cycle ;
    \draw   (207,29.33) -- (273,29.33) -- (273,98) -- (207,98) -- cycle ;
    \draw   (154.75,68.17) .. controls (154.75,58.55) and (163.37,50.75) .. (174,50.75) .. controls (184.63,50.75) and (193.25,58.55) .. (193.25,68.17) .. controls (193.25,77.79) and (184.63,85.58) .. (174,85.58) .. controls (163.37,85.58) and (154.75,77.79) .. (154.75,68.17) -- cycle ;
    \draw  [fill={rgb, 255:red, 208; green, 2; blue, 27 }  ,fill opacity=0.65 ] (220.75,68.17) .. controls (220.75,58.55) and (229.37,50.75) .. (240,50.75) .. controls (250.63,50.75) and (259.25,58.55) .. (259.25,68.17) .. controls (259.25,77.79) and (250.63,85.58) .. (240,85.58) .. controls (229.37,85.58) and (220.75,77.79) .. (220.75,68.17) -- cycle ;

    \draw (105,56) node [anchor=north west][inner sep=0.75pt]    {$\str{A} :=$};

    \draw (284,56) node [anchor=north west][inner sep=0.75pt]    {$\models \phiequality(\predHalf, \predR, \predJ) := \Maj{x}\; (\predHalf(x) \wedge \neg \predR(x)) \vee \predJ(x)$};
    \draw (160,32) node [anchor=north west][inner sep=0.75pt]    {$\predHalf$};
    \draw (220,32) node [anchor=north west][inner sep=0.75pt]    {$\neg \predHalf$};
    \draw (168,60.5) node [anchor=north west][inner sep=0.75pt]    {$\predR$};
    \draw (237,60.5) node [anchor=north west][inner sep=0.75pt]    {$\predJ$};

    \end{tikzpicture}
\end{center}

For intuitions on $\phiequality(\predHalf, \predR, \predJ)$, consult the above picture. 
We basically take all the elements satisfying $\predHalf$ (so exactly half of the domain elements, indicated by the green area). 
Next, we discard the elements labelled with $\predR$ (so we get the green area without the circle inside) and replace them with the elements satisfying $\predJ$ (the red circle, note that $\predJ^{\str{A}}$ and $\predR^{\str{A}}$ are disjoint!). 
The total number of selected elements is equal to half of the domain, thus $|\predJ^{\str{M}}| = |\predR^{\str{M}}|$.
The following fact is a direct consequence of the semantics of $\FOtpercgl$.
\begin{fact}\label{fact:equality}
For $(\predHalf, \predR, \predJ)$-separated $\str{M}$ we have $\str{M} \models \phiequality(\predHalf, \predR, \predJ)$ iff $\card{\predR(x)}{\str{M}}{} = \card{\predJ(x)}{\str{M}}{}$.
\end{fact}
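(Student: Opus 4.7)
The plan is to unfold the semantics of the $\Maj$ quantifier and then reduce the claim to a one-line counting computation, using each of the three separation conditions exactly once.

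First I would observe that, by the semantics of global percentage quantifiers, $\str{M} \models \phiequality(\predHalf, \predR, \predJ)$ is equivalent to saying that the set $A := \{a \in M \mid \str{M}, x/a \models (\predHalf(x) \wedge \neg \predR(x)) \vee \predJ(x)\}$ has cardinality exactly $\tfrac{1}{2}|M|$. By condition~(a) of $(\predHalf, \predR, \predJ)$-separation, we have $|\predHalf^{\str{M}}| = \tfrac{1}{2}|M|$, so the target equation becomes simply $|A| = |\predHalf^{\str{M}}|$.

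Next I would decompose $A$ as the union $(\predHalf^{\str{M}} \setminus \predR^{\str{M}}) \cup \predJ^{\str{M}}$, and note two things. First, this union is disjoint: by condition~(c), every element of $\predJ^{\str{M}}$ falsifies $\predHalf$, so in particular lies outside $\predHalf^{\str{M}} \setminus \predR^{\str{M}}$. Second, condition~(b) gives $\predR^{\str{M}} \subseteq \predHalf^{\str{M}}$, hence $|\predHalf^{\str{M}} \setminus \predR^{\str{M}}| = |\predHalf^{\str{M}}| - |\predR^{\str{M}}|$. Combining the two observations yields $|A| = |\predHalf^{\str{M}}| - |\predR^{\str{M}}| + |\predJ^{\str{M}}|$, so $|A| = |\predHalf^{\str{M}}|$ if and only if $|\predR^{\str{M}}| = |\predJ^{\str{M}}|$, which is precisely the desired equivalence.

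Since the argument is a one-step cardinality manipulation, there is no real obstacle; the only point requiring care is to check that each of the three separation conditions is invoked at the right place, namely (a) to pass from the $50\%$ threshold to $|\predHalf^{\str{M}}|$, (b) to handle the set difference cleanly, and (c) to guarantee disjointness of the two pieces making up $A$.
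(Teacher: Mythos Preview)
Your argument is correct and is essentially the same as the paper's: both unfold the semantics of $\Maj$ and reduce to the equation $\tfrac{1}{2}|M| = |\predHalf^{\str{M}}| - |\predR^{\str{M}}| + |\predJ^{\str{M}}|$ via the three separation conditions. Your presentation is in fact slightly cleaner, handling both implications at once and making explicit which condition is used at each step, whereas the paper treats the two directions separately.
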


\subsubsection{Undecidability proof}
Until the end of this section, let us fix $\varepsilon$, a valid input of $\SHTP$. 
By $\Var(\varepsilon) = \{ u, v, w, \ldots \}$ we denote the set of all variables appearing in $\varepsilon$, and with~$|\varepsilon|$ we denote the total number of entries in $\varepsilon$.
Let $\str{M}$ be a finite structure.

The main idea of the encoding is fairly simple: in the intended model $\str{M}$ some elements will be labelled with $\Avar{u}$ predicates, ranging over variables $u \in \Var(\varepsilon)$, and the number of such elements will indicate the value of $u$ in an example solution to $\varepsilon$. 
The only tricky part here is to encode multiplication of variables. 
Once $\varepsilon$ contains an entry $w = u \cdot v$, we need to ensure that $\card{\Avar{w}(x)}{\str{M}}{} = \card{\Avar{u}(x)}{\str{M}}{} \cdot \card{\Avar{v}(x)}{\str{M}}{}$ holds. 
It is achieved by linking, via a binary relation $\Mult{}{}^{\str{M}}$, each element from $\Avar{u}^{\str{M}}$ with exactly~$\card{\Avar{v}(x)}{\str{M}}{}$ elements satisfying~$\Avar{w}$, which relies on imposing equicardinality statements. 
To ensure that the performed multiplication is correct, each element labelled with~$\Avar{w}^{\str{M}}$ has exactly one predecessor from~$\Avar{u}^{\str{M}}$ and hence the relation~$\Mult{}{}^{\str{M}}$ is backward-functional.

We start with a formula inducing a labelling of elements with variable predicates and ensuring that all elements of~$\str{M}$ satisfy at most one variable predicate. 
Note that it can happen that there will be auxiliary elements that are not labelled with any of the variable predicates.

\begin{center}
\begin{description} \label{form:phivar}
\itemsep 0 cm 
\item[($\phivar$)] \; $\forall{x} \bigwedge_{u \neq v \in \Var(\varepsilon)} 
\neg (\Avar{u}(x) \wedge \Avar{v}(x) )$.
\end{description}
\end{center}

%
We now focus on encoding the entries of~$\varepsilon$. 
For an entry $\varepsilon_i$ of the form~$u_i=1$ we write:
\begin{center}
\begin{description} \label{form:phiuequalone}
\itemsep 0 cm 
\item[($\phiuequalone$)] \;
    $
     \exists{x} \; \Avar{u_i}(x) \wedge \forall{x}\forall{y} \; 
     (\Avar{u_i}(x) \wedge \Avar{u_i}(y)) \rightarrow x=y
    $
\end{description}
\end{center}
\begin{fact} \label{fact:phiuequalone}
$\str{M} \models \phiuequalone$ holds iff there is exactly one element in~$\str{M}$ satisfying~$\Avar{u}(x)$.
\end{fact}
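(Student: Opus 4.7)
The plan is to unfold the semantics of $\phiuequalone$ and read off both directions of the equivalence. The formula is a conjunction of two clauses: an existential statement asserting that $\Avar{u_i}$ is non-empty, and a universally quantified implication asserting that $\Avar{u_i}$ contains at most one element. Together they pin down the cardinality of $\Avar{u_i}^{\str{M}}$ to be exactly one, which is precisely what the fact claims.

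For the ($\Rightarrow$) direction, I would assume $\str{M} \models \phiuequalone$. From $\str{M} \models \exists x\ \Avar{u_i}(x)$ I obtain some witness $a \in M$ with $\str{M}, x/a \models \Avar{u_i}(x)$, so $\Avar{u_i}^{\str{M}}$ is non-empty. For uniqueness, I would take any $b \in \Avar{u_i}^{\str{M}}$ and instantiate the second conjunct with $x/a$ and $y/b$; since both $\Avar{u_i}(a)$ and $\Avar{u_i}(b)$ hold, the implication forces $a=b$, so $\Avar{u_i}^{\str{M}} = \{a\}$.

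For the ($\Leftarrow$) direction, I would assume $\Avar{u_i}^{\str{M}} = \{a\}$ for the unique witness $a$. The existential conjunct is then witnessed by $a$. For the universal conjunct, pick arbitrary $b, c \in M$; if either fails $\Avar{u_i}$ the implication is vacuously satisfied, otherwise both $b$ and $c$ lie in $\Avar{u_i}^{\str{M}} = \{a\}$, so $b = c = a$ and equality holds.

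There is no substantive obstacle here: the statement is a direct consequence of the standard semantics of the classical quantifiers $\exists$ and $\forall$ in $\FO$, and the fact is recorded only to make the later combinatorial encoding of $\SHTP$ inside $\FOtpercgl$ easier to read. The only thing worth a sentence of care is that the two quantifier blocks share the variable name $x$ but are syntactically disjoint, so no variable-capture issue arises; one could equivalently rewrite $\phiuequalone$ as $(\exists x\ \Avar{u_i}(x)) \wedge (\forall x\forall y\ \Avar{u_i}(x) \wedge \Avar{u_i}(y) \to x=y)$, and this rewriting is what I would use in the formal argument.
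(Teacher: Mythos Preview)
Your proposal is correct and follows the same approach as the paper, which simply records the fact as ``an immediate consequence of the semantics of $\FOtpercgl$'' without further elaboration. Your explicit unfolding of both directions is more detailed than what the paper provides, but the underlying reasoning is identical.
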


\noindent To deal with entries $\varepsilon_i$ of the form~$w_i = u_i+v_i$ or $w_i = u_i \cdot v_i$ we need to ``prepare an area'' for the encoding, similarly to~\cref{subsec:playing-with}.  First, we cover domain elements of $\str{M}$ by \emph{layers}. 
The $i$-th layer is divided into halves with $\FHalf{i}$ and $\SHalf{i}$ predicates with:
\begin{center}
\begin{description} \label{form:phihalves}
\itemsep 0 cm 
\item[($\phihalves^{i}$)] \;
        $
        \forall{x} \left( \FHalf{i}(x) \leftrightarrow \neg \SHalf{i}(x) \right)
        \wedge \Maj{x} \ \FHalf{i}(x)
        $
\end{description}
\end{center}
\begin{fact} \label{fact:phihalves}
$\str{M} \models \phihalves^{i}$ holds iff exactly half of the domain elements from~$\str{M}$ are labelled with~$\FHalf{i}$ and the other half of elements are labelled with~$\SHalf{i}$.
\end{fact}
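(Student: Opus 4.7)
The statement is a direct unfolding of the semantics of the $\Maj$ quantifier together with a propositional observation about the biconditional, so the plan is to verify both implications separately and note there is no real obstacle.

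For the left-to-right direction, suppose $\str{M} \models \phihalves^i$. The first conjunct, $\forall x\,(\FHalf{i}(x) \leftrightarrow \neg \SHalf{i}(x))$, forces the extensions $\FHalf{i}^{\str{M}}$ and $\SHalf{i}^{\str{M}}$ to partition $M$ (they are disjoint and cover the universe). The second conjunct is $\Maj{x}\,\FHalf{i}(x)$, which by definition of $\existsperc{=}{50}$ unfolds to $\card{\FHalf{i}(x)}{\str{M}}{} = \tfrac{50}{100}\cdot|M|$, \ie exactly half of the domain satisfies $\FHalf{i}$. Combining this with the partition property gives that exactly the other half satisfies $\SHalf{i}$.

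For the converse, assume exactly half of the elements of $\str{M}$ are labelled $\FHalf{i}$ and the other half are labelled $\SHalf{i}$. This immediately yields $\card{\FHalf{i}(x)}{\str{M}}{} = \tfrac{1}{2}|M|$, so by the semantics of the percentage quantifier $\str{M} \models \Maj{x}\,\FHalf{i}(x)$. Moreover, because the two halves cover $M$ disjointly, for every $a \in M$ we have $a \in \FHalf{i}^{\str{M}}$ iff $a \notin \SHalf{i}^{\str{M}}$, so the first conjunct of $\phihalves^i$ is satisfied. Hence $\str{M} \models \phihalves^i$.

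There is no genuine difficulty here; the only thing worth making explicit is that $|M|$ must be even whenever $\phihalves^i$ has a model, which follows from $\tfrac{1}{2}|M|$ being an integer cardinality. This observation will be reused in the sequel when stacking several layers via $\phihalves^i$ for different~$i$.
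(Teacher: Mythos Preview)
Your argument is correct and matches the paper's own treatment: the paper simply declares the fact an immediate consequence of the semantics of $\FOtpercgl$, and your proof is precisely the detailed unfolding of that semantics (partition from the biconditional, cardinality from the $\Maj$ quantifier). Nothing more is needed.
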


Second, we need to ensure that in the $i$-th layer of $\str{M}$, the elements satisfying $\Avar{u_i}$ or $\Avar{v_i}$ are in the first half, whereas elements satisfying $\Avar{w_i}$ are in the second half. 
We do it with:
\begin{center}
\begin{description} \label{form:phipartition}
\itemsep 0 cm 
\item[($\phipartition^{i}(u_i,v_i,w_i)$)] \;
        $
        \forall{x} \left( 
        [(\Avar{u_i}(x) {\vee} \Avar{v_i}(x)) \rightarrow \FHalf{i}(x)] 
        \wedge [\Avar{w_i}(x) \rightarrow \SHalf{i}(x)] 
        \right)
        $
\end{description}
\end{center}
\begin{fact} \label{fact:phipartition}
$\str{M} \models \phipartition^{i}(u_i,v_i,w_i)$ holds iff for all elements~$a \in M$, if $a$ satisfies $\Avar{u_i}(x) {\vee} \Avar{v_i}(x)$
then $a$ also satisfies $\FHalf{i}(x)$ and if $a$ satisfies $\Avar{w_i}(x)$ then $a$ also satisfies $\SHalf{i}(x)$.
\end{fact}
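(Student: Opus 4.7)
The plan is to prove the fact by straight unfolding of the Tarskian semantics of $\phipartition^{i}(u_i,v_i,w_i)$; the statement is essentially a translation of the formula back into English. Abbreviating the two conjuncts inside the universal quantifier as $\alpha(x) := (\Avar{u_i}(x) \vee \Avar{v_i}(x)) \to \FHalf{i}(x)$ and $\beta(x) := \Avar{w_i}(x) \to \SHalf{i}(x)$, the formula has the shape $\forall x\,(\alpha(x) \wedge \beta(x))$. By the standard semantics of $\forall$ and $\wedge$, this holds in $\str{M}$ iff for every $a \in M$ both $\str{M},x/a \models \alpha(x)$ and $\str{M},x/a \models \beta(x)$. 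Unwinding each material implication then yields exactly the two conditions in the statement: if $a$ satisfies $\Avar{u_i}(x)\vee\Avar{v_i}(x)$ then it satisfies $\FHalf{i}(x)$, and if it satisfies $\Avar{w_i}(x)$ then it satisfies $\SHalf{i}(x)$.

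There is no real obstacle here; this fact is a bookkeeping step, playing the same role for $\phipartition^{i}$ as Facts~\ref{fact:phiuequalone} and~\ref{fact:phihalves} play for their respective formulas. The reason to state it explicitly is that it fixes the semantic content on which the later reduction will rely: namely, that in every model of $\phipartition^{i}$ the $\Avar{u_i}$- and $\Avar{v_i}$-elements are confined to the first half of the $i$-th layer while the $\Avar{w_i}$-elements live in the second half. This separation is precisely what makes the equicardinality trick $\phiequality(\predHalf,\predR,\predJ)$ of Section~\ref{subsec:playing-with} applicable to the forthcoming formulas $\phiaddition$ and $\phimultiplication$. I would therefore expect the actual proof in the paper to be a one-line dereferencing to the semantics of $\forall$, $\wedge$, and $\to$, or indeed to be omitted as immediate.
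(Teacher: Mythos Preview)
Your proposal is correct and matches the paper's approach exactly: the paper dispatches this fact (together with Facts~\ref{fact:phiuequalone} and~\ref{fact:phihalves}) in a single line as ``an immediate consequence of the semantics of $\FOtpercgl$''. Your explicit unfolding of $\forall$, $\wedge$, and $\to$ is precisely that immediate consequence spelled out, and your prediction about the paper's treatment is accurate.
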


Gathering the presented formulae, we call a structure~$\str{M}$ \emph{well-prepared}, if it satisfies the conjunction of all previous formulae over~$1 \leq i \leq |\varepsilon|$ and over all entries~$\varepsilon_i$ from the system~$\varepsilon$. 
The forthcoming encodings will be given under the assumption of \emph{well-preparedness}.

Now, for the encoding of addition, assume that~$\varepsilon_i$ is of the form~$u_i + v_i = w_i$.
Thus in our encoding, we would like to express that $\card{\Avar{u_i}(x)}{\str{M}}{} + \card{\Avar{v_i}(x)}{\str{M}}{}  = \card{\Avar{w_i}(x)}{\str{M}}{}$, which is clearly equivalent to $\card{\Avar{w_i}(x)}{\str{M}}{} - \card{\Avar{u_i}(x)}{\str{M}}{} - \card{\Avar{v_i}(x)}{\str{M}}{} = 0$ and also to $\card{\Avar{w_i}(x)}{\str{M}}{} + \card{\FHalf{i}(x)}{\str{M}}{} - \card{\Avar{u_i}(x)}{\str{M}}{} - \card{\Avar{v_i}(x)}{\str{M}}{} = \card{\FHalf{i}(x)}{\str{M}}{}$.
Knowing that exactly~$50\%$ of domain elements of an intended model satisfy $\FHalf{i}$ and that $\Avar{u_i}, \Avar{v_i}$ and $\Avar{w_i}$ label disjoint parts of the model, we can write the obtained equation as an~$\FOtpercgl$ formula:
\begin{center}
\begin{description} \label{form:addition}
\itemsep 0 cm 
\item[($\phiaddition^{i}(u_i,v_i,w_i)$)] \;
        $
    \Maj{x} \left( 
            \Avar{w_i}(x) \vee (\FHalf{i}(x) \wedge  {\neg}\Avar{u_i}(x) \wedge {\neg}\Avar{v_i}(x)) 
    \right)
        $
\end{description}
\end{center}
Note that the above formula is exactly the $\phiequality(\predHalf, \predR, \predJ)$ formula from~\cref{subsec:playing-with}, with $\predHalf = \FHalf{i}(x)$, $\predJ = \Avar{w_i}$ and $\predR$ defined as a union of $\Avar{u_i}$ and $\Avar{v_i}$. Hence, we conclude:
\begin{lemma} \label{lemma:addition}
A well-prepared~$\str{M}$ satisfies~$\phiaddition^{i}(u_i,v_i,w_i)$ iff~$\card{\Avar{u_i}(x)}{\str{M}}{} {+} \card{\Avar{v_i}(x)}{\str{M}}{} {=} \card{\Avar{w_i}(x)}{\str{M}}{}$.
\end{lemma}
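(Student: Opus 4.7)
The proof plan is to reduce $\phiaddition^{i}$ to an instance of the equi-cardinality template $\phiequality$ from \cref{subsec:playing-with}. A quick syntactic check shows that $\phiaddition^{i}(u_i,v_i,w_i)$ is, up to reordering of the disjunction under $\Maj{x}$, exactly $\phiequality(\FHalf{i}, \predR, \Avar{w_i})$, where $\predR$ stands for the disjunction $\Avar{u_i}(x) \vee \Avar{v_i}(x)$. Hence the whole task boils down to checking that \cref{fact:equality} applies and then rewriting its cardinality equality into the claimed form.

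First I would verify that any well-prepared $\str{M}$ is $(\FHalf{i}, \predR, \Avar{w_i})$-separated. Condition~(a), namely that exactly half of the domain satisfies $\FHalf{i}$, is granted by $\phihalves^{i}$ via \cref{fact:phihalves}. Conditions~(b) and~(c), namely that $\predR$ implies $\FHalf{i}$ and $\Avar{w_i}$ implies $\neg\FHalf{i}$, both follow from $\phipartition^{i}(u_i,v_i,w_i)$ via \cref{fact:phipartition}, using that $\phihalves^{i}$ identifies $\SHalf{i}^{\str{M}}$ with $M \setminus \FHalf{i}^{\str{M}}$. \cref{fact:equality} then yields
\[
\str{M} \models \phiaddition^{i}(u_i,v_i,w_i) \quad\text{iff}\quad \card{\Avar{u_i}(x) \vee \Avar{v_i}(x)}{\str{M}}{} = \card{\Avar{w_i}(x)}{\str{M}}{}.
\]

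The final step is to split the left-hand cardinality into a sum. By the definition of $\SHTP$ the variables $u_i$ and $v_i$ are distinct elements of $\Var(\varepsilon)$, so $\phivar$ forbids any element from satisfying both $\Avar{u_i}$ and $\Avar{v_i}$. Consequently $\Avar{u_i}^{\str{M}}$ and $\Avar{v_i}^{\str{M}}$ are disjoint, which gives $\card{\Avar{u_i}(x) \vee \Avar{v_i}(x)}{\str{M}}{} = \card{\Avar{u_i}(x)}{\str{M}}{} + \card{\Avar{v_i}(x)}{\str{M}}{}$ and completes the argument.

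I do not expect any genuine obstacle: the proof is a direct bookkeeping exercise built on top of \cref{fact:equality}. The only subtle point is the disjointness of the variable predicates, without which the left-hand cardinality above would merely upper-bound the sum and break the iff. This is precisely what $\phivar$, combined with $u_i \neq v_i$, rules out.
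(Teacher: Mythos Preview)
Your proposal is correct and follows exactly the paper's approach: the paper's proof simply states that it is ``identical to the proof of \cref{fact:equality}, by taking $\predHalf = \FHalf{i}(x)$, $\predJ = \Avar{w_i}$ and $\predR$ defined as the union of $\Avar{u_i}$ and $\Avar{v_i}$.'' Your write-up is in fact more explicit than the paper's, spelling out the verification of $(\predHalf,\predR,\predJ)$-separatedness and the use of $\phivar$ for disjointness.
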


The only missing part is the encoding of multiplication. 
Take $\varepsilon_i$ of the form $u_i \cdot v_i = w_i$. 
As already described in the overview, our definition of multiplication requires three steps:
\begin{center}
\begin{description}
\itemsep 0 cm 
\item[(link)] A binary relation~$\Mult{i}{}^{\str{M}}$ links each element from $\Avar{w_i}^{\str{M}}$ to some element from $\Avar{u_i}^{\str{M}}$.
\item[(count)] Each element from $M$ satisfying $\Avar{u_i}(x)$ has exactly $\card{\Avar{v_i}(x)}{\str{M}}{}$ $\Mult{i}{}^{\str{M}}$-successors.
\item[(bfunc)] The binary relation $\Mult{i}{}^{\str{M}}$ is backward-functional.
\end{description}
\end{center}

Such properties can be expressed with the help of $\Maj{}$ quantifier, as presented below:

\begin{center}
\begin{description}
\itemsep 0 cm 
\item[($\philink^i(u_i, w_i)$)] $ \forall{y} \; \Avar{w_i}(y) \rightarrow \exists{x} \; \Mult{i}{(x,y)} \wedge\forall{x} \forall{y} \; \Mult{i}{(x,y)} \rightarrow \left( \Avar{u_i}(x) \wedge \Avar{w_i}(y) \right) $
\item[($\phicount^i(u_i, v_i, w_i)$)] $\forall{x} \; \Avar{u_i}(x) \rightarrow \Maj{y} \left( [\SHalf{i}(y) \wedge \neg \Mult{i}(x,y)] \vee \Avar{v_i}(y) \right)$
\item[($\phibfunc^i(u_i, v_i, w_i)$)] $\forall{x} \; \Avar{w_i}(x) \rightarrow \Maj{y} \left( [\SHalf{i}(y) \wedge x \neq y] \vee \Mult{i}{(y,x)} \right)$
\end{description}
\end{center}
While the first formula, namely $\philink^i(u_i, w_i)$, is immediate to write, the next two are more involved.
A careful reader can notice that they are actually instances of $\phiequality(\predHalf{}, \predR, \predJ)$ formula from~\cref{subsec:playing-with}. In the case of $\phicount^i(u_i, v_i, w_i)$ we have $\predHalf{} = \SHalf{i}$, $\predJ = \Avar{v_i}$ and the $\Mult{i}{}$-successors of $x$ play the role of elements labelled by $\predR$.
For the last formula one can see that we remove exactly one element from $\SHalf{i}$ ($y$ that is equal to $x$) and we replace it with the $\Mult{i}{}$-predecessors of $x$, which implies that there is the unique such predecessor.
We summarise the mentioned facts as follows:

\begin{lemma}\label{lemma:link-count-and-bfunc}
Let $\str{M}$ be a well-prepared structure satisfying $\philink^i(u_i, w_i)$. We have that (i) $\str{M}$ satisfies $\phicount^i(u_i, v_i, w_i)$ iff every $a \in M$ satisfying~$\Avar{u_i}$ is connected via~$\Mult{i}{}$ to exactly~$|\Avar{v_i}|$ elements satisfying~$\Avar{w_i}$ and (ii) $\str{M}$ satisfies $\phibfunc^i(u_i, v_i, w_i) $ iff the binary relation~$\Mult{i}{}^{\str{M}}$ linking elements satisfying~$\Avar{u_i}(x)$ with those satisfying~$\Avar{w_i}(x)$ is backward-functional.
\end{lemma}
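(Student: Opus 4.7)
The plan is to recognise that both $\phicount^i$ and $\phibfunc^i$ are, point-wise in the outer-quantified element $a$, nothing but instances of the equi-cardinality template $\phiequality(\predHalf,\predR,\predJ)$ from~\cref{subsec:playing-with}. Hence I would fix an arbitrary witness $a$ (in $\Avar{u_i}^{\str{M}}$ for~(i), in $\Avar{w_i}^{\str{M}}$ for~(ii)), verify the $(\predHalf,\predR,\predJ)$-separation hypothesis using well-preparedness together with $\philink^i$, and then read off the desired cardinality equality from~\cref{fact:equality}.

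For part~(i) I would take $\predHalf := \SHalf{i}$, $\predR(y) := \Mult{i}(a,y)$ and $\predJ(y) := \Avar{v_i}(y)$. Exactly half of $M$ satisfies $\SHalf{i}$ by~\cref{fact:phihalves}; every $\Mult{i}$-successor of $a$ lies in $\Avar{w_i}^{\str{M}}$ by $\philink^i$, and hence in $\SHalf{i}^{\str{M}}$ by~\cref{fact:phipartition}; finally every $\Avar{v_i}$-element lies in $\FHalf{i}^{\str{M}}$, \ie in the complement of $\SHalf{i}^{\str{M}}$, again by~\cref{fact:phipartition}. Thus~\cref{fact:equality} gives $\str{M},x/a\models \phicount^i$ iff the number of $\Mult{i}$-successors of $a$ equals $|\Avar{v_i}^{\str{M}}|$; combined once more with $\philink^i$ this says that $a$ is linked to exactly $|\Avar{v_i}^{\str{M}}|$ elements of $\Avar{w_i}^{\str{M}}$, and universally quantifying over such $a$'s yields the claim.

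For part~(ii) I would instantiate the template with $\predHalf := \SHalf{i}$, $\predR(y) := (y = a)$ and $\predJ(y) := \Mult{i}(y,a)$. The separation conditions are again straightforward: the singleton $\{a\}$ sits inside $\SHalf{i}^{\str{M}}$ because $\Avar{w_i}(a)$ entails $\SHalf{i}(a)$ via~\cref{fact:phipartition}; and every $\Mult{i}$-predecessor of $a$ is an $\Avar{u_i}$-element by $\philink^i$, hence lies in $\FHalf{i}^{\str{M}}$ by~\cref{fact:phipartition}. Applying~\cref{fact:equality} then produces $|\{y : y = a\}| = |\{y : \Mult{i}(y,a)\}|$, so that $a$ has exactly one $\Mult{i}$-predecessor; letting $a$ range over $\Avar{w_i}^{\str{M}}$ this is precisely backward-functionality of $\Mult{i}^{\str{M}}$.

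The only mildly delicate point, for which one has to look twice, is the verification of the separation hypothesis in~(ii): there $\predR$ is a \emph{singleton}, so one must ensure that the singleton's inhabitant itself lies in the $\SHalf{i}$-half. This is the place where well-preparedness of $\str{M}$, and in particular $\phipartition^i$, is indispensable. Once this is noticed the remainder is a purely mechanical application of~\cref{fact:equality}.
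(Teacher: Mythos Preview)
Your proposal is correct and follows precisely the route taken in the paper: both parts are recognised as pointwise instances of $\phiequality(\predHalf,\predR,\predJ)$ with $\predHalf=\SHalf{i}$, and the result then follows by verifying $(\predHalf,\predR,\predJ)$-separation (via $\phihalves^i$, $\phipartition^i$ and $\philink^i$) and invoking \cref{fact:equality}. Your write-up is in fact more detailed than the paper's own proof, which merely names the instantiation and leaves the separation checks implicit.
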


Putting the last three properties together, we encode multiplication as their conjunction:
\begin{center}
\begin{description} \label{form:multiplication}
\itemsep 0 cm 
\item[($\phimultiplication^{i}(u_i,v_i,w_i)$)] \;
        $
    \philink^i(u_i, v_i, w_i) \wedge \phicount^i(u_i, v_i, w_i) \wedge \phibfunc^i(u_i, v_i, w_i)
        $
\end{description}
\end{center}
\begin{lemma}\label{lemma:multiplication}
If a well-prepared $\str{M}$ satisfies $\phimultiplication^{i}(u_i,v_i,w_i)$, then~$\card{\Avar{u_i}(x)}{\str{M}}{} {\cdot} \card{\Avar{v_i}(x)}{\str{M}}{} {=} \card{\Avar{w_i}(x)}{\str{M}}{}$.
\end{lemma}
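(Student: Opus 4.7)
The plan is a straightforward double-counting argument on the edges of the binary relation $\Mult{i}{}^{\str{M}}$, once \cref{lemma:link-count-and-bfunc} has been invoked on each conjunct of $\phimultiplication^{i}$.

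First I would unpack $\phimultiplication^i(u_i,v_i,w_i)$: since $\str{M}$ is well-prepared and satisfies all three conjuncts, \cref{lemma:link-count-and-bfunc} gives both \textbf{(i)} and \textbf{(ii)}. In addition, the formula $\philink^i(u_i,w_i)$ directly forces $\Mult{i}{}^{\str{M}} \subseteq \Avar{u_i}^{\str{M}} \times \Avar{w_i}^{\str{M}}$, and moreover guarantees that every $b \in \Avar{w_i}^{\str{M}}$ has at least one $\Mult{i}{}^{\str{M}}$-predecessor. Combined with backward-functionality from (ii), each $b \in \Avar{w_i}^{\str{M}}$ has \emph{exactly one} $\Mult{i}{}^{\str{M}}$-predecessor, which lies in $\Avar{u_i}^{\str{M}}$.

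Now I count $|E|$ where $E := \Mult{i}{}^{\str{M}}$. Summing over sources: by (i), every $a \in \Avar{u_i}^{\str{M}}$ contributes exactly $\card{\Avar{v_i}(x)}{\str{M}}{}$ outgoing edges, and no element outside $\Avar{u_i}^{\str{M}}$ contributes any edge (by the link condition), so $|E| = \card{\Avar{u_i}(x)}{\str{M}}{} \cdot \card{\Avar{v_i}(x)}{\str{M}}{}$. Summing over targets: every $b \in \Avar{w_i}^{\str{M}}$ contributes exactly one incoming edge, and nothing else is a target, so $|E| = \card{\Avar{w_i}(x)}{\str{M}}{}$. Equating the two expressions yields the claimed identity.

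There is no real obstacle here once \cref{lemma:link-count-and-bfunc} has been established; the only subtle point to flag is that one must use all three conjuncts of $\philink^i$ (existence of a predecessor, its being an $\Avar{u_i}$-element, and the target being an $\Avar{w_i}$-element) in order to be able to view $\Mult{i}{}^{\str{M}}$ as a bipartite relation between $\Avar{u_i}^{\str{M}}$ and $\Avar{w_i}^{\str{M}}$ that is surjective on the right, so that the double-count is valid on both sides.
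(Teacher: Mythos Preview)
Your proposal is correct and follows essentially the same approach as the paper: invoke \cref{lemma:link-count-and-bfunc} together with the direct consequences of $\philink^i$, and then count the edges of $\Mult{i}{}^{\str{M}}$. The paper phrases the count as first establishing $\card{\Avar{w_i}(x)}{\str{M}}{} \geq \card{\Avar{u_i}(x)}{\str{M}}{} \cdot \card{\Avar{v_i}(x)}{\str{M}}{}$ via disjointness of successor sets and then upgrading to equality via the existence of predecessors, whereas you package the same bookkeeping as a single double-count of $|E|$; the underlying argument is identical.
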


Let~$\phireduction{\varepsilon}$ be~$\phivar$ supplemented with a conjunction of formulae $\phientry$, where $\phientry$ is respectively: 
(i)~$\phiuequalone$ if~$\varepsilon_i$ is equal to~$u_i {=} 1$,
(ii)~$\phihalves^{i} \wedge \phipartition^{i}(u_i,v_i,w_i) \wedge \phiaddition^{i}(u_i,v_i,w_i)$ for~$\varepsilon_i$ of the form~$u_i + v_i = w_i$ and 
(iii)~$\phihalves^{i} \wedge \phipartition^{i}(u_i,v_i,w_i) \wedge \phimultiplication^{i}(u_i,v_i,w_i)$ for~$\varepsilon_i$ of the form~$u_i \cdot v_i = w_i$. 
As the last piece in the proof we show that each solution of the system~$\varepsilon$ corresponds to some model of~$\phireduction{\varepsilon}$. 
Its proof is routine and relies on the correctness of all previously announced facts (consult~\cref{appendix:negative} for more details).
%
Hence, by the undecidability of $\SHTP$, we immediately conclude:
\begin{theorem} \label{thm:fo2undec}
The finite satisfiability problem for $\FOtpercgl$ is undecidable, even when the only percentage quantifier allowed is $\exists^{=50\%}$.
\end{theorem}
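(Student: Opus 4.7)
The plan is to establish undecidability by polynomial-time reducing $\SHTP$ (shown undecidable earlier) to the finite satisfiability problem of $\FOtpercgl$ via the formula $\phireduction{\varepsilon}$ already constructed. Concretely, I would prove: $\varepsilon$ has a solution over $\N$ iff $\phireduction{\varepsilon}$ has a finite model. Since $\phireduction{\varepsilon}$ uses only the $\Maj$ quantifier (every percentage quantifier appearing in $\phivar$, $\phiuequalone$, $\phihalves^{i}$, $\phipartition^{i}$, $\phiaddition^{i}$, $\philink^{i}$, $\phicount^{i}$, $\phibfunc^{i}$ is $\exists^{=50\%}$), the sharper form of the theorem follows immediately.

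For the easier direction ($\Rightarrow$, model yields solution), I would take any finite $\str{M}\models \phireduction{\varepsilon}$ and set $\sigma(u) := \card{\Avar{u}(x)}{\str{M}}{}$ for every $u \in \Var(\varepsilon)$. The conjunct $\phivar$ guarantees that the predicates $\Avar{u}$ carve out pairwise disjoint subsets of $M$, so $\sigma$ is well defined. For each entry $\varepsilon_i$ one then reads off the corresponding equation: Fact~3.4 handles entries $u_i = 1$; \cref{lemma:addition} handles addition entries (using that $\str{M}$ is well-prepared thanks to $\phihalves^{i}\wedge\phipartition^{i}$); and \cref{lemma:multiplication} handles multiplication entries. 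Thus $\sigma$ satisfies every entry of $\varepsilon$.

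For the harder direction ($\Leftarrow$, solution yields model), I would start from a solution $\sigma\colon\Var(\varepsilon)\to\N$ and construct a finite $\Sigma$-structure $\str{M}$ as follows. The universe $M$ contains, for each variable $u$, exactly $\sigma(u)$ fresh elements labelled by $\Avar{u}$ and by no other $\Avar{v}$, plus a reservoir of auxiliary unlabelled elements. For every entry $\varepsilon_i$ of the form $u_i \cdot v_i = w_i$, I would set the relation $\Mult{i}{}^{\str{M}}$ so that its edges form a bijection between $\Avar{w_i}^{\str{M}}$ and $\Avar{u_i}^{\str{M}} \times \{1,\ldots,\sigma(v_i)\}$; this makes $\Mult{i}{}^{\str{M}}$ backward-functional, ensures each $\Avar{u_i}$-element has exactly $\sigma(v_i)$ successors in $\Avar{w_i}$, and relies on the equation $\sigma(u_i)\sigma(v_i)=\sigma(w_i)$ to fit exactly. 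Finally, for each layer $i$, the predicates $\FHalf{i},\SHalf{i}$ are chosen independently to split $M$ into two halves, placing $\Avar{u_i},\Avar{v_i}$ into $\FHalf{i}$ and $\Avar{w_i}$ into $\SHalf{i}$ as required by $\phipartition^{i}$.

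The main obstacle is making the $|\varepsilon|$ many halves constraints $\phihalves^{i}$ simultaneously satisfiable on a single universe while respecting every $\phipartition^{i}$ and leaving enough freedom to support the encodings $\phiaddition^{i}$ and $\phimultiplication^{i}$. My proposed fix is to choose the reservoir of auxiliary elements so large and symmetric that, for each layer $i$ independently, there is enough slack to distribute the unlabelled elements between $\FHalf{i}$ and $\SHalf{i}$ in the exact proportion forced by $\Maj{x}\,\FHalf{i}(x)$; a short pigeonhole-style bookkeeping argument (padding the universe to have, say, at least $4\cdot|M_0|$ auxiliary elements, where $M_0$ is the union of all $\Avar{u}^{\str{M}}$) suffices. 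Once a model of $\phireduction{\varepsilon}$ exists and the two directions of the reduction are established, the undecidability of $\SHTP$ transfers to $\FOtpercgl$ restricted to the quantifier $\exists^{=50\%}$.
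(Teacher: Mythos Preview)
Your proposal is correct and follows essentially the same route as the paper: both directions of the equivalence ``$\varepsilon$ has a solution iff $\phireduction{\varepsilon}$ is finitely satisfiable'' are argued exactly as in the appendix, invoking \cref{fact:phiuequalone}, \cref{lemma:addition}, and \cref{lemma:multiplication} for the model-to-solution direction, and building a sufficiently padded universe with the $\Avar{u}$, $\FHalf{i}$/$\SHalf{i}$, and $\Mult{i}{}$ data for the converse. Two cosmetic remarks: your arrows $\Rightarrow$/$\Leftarrow$ are swapped relative to the biconditional you stated, and when padding the universe you should additionally ensure $|M|$ is even (otherwise no set can realise exactly $50\%$ and $\phihalves^{i}$ fails); the paper records this explicitly as condition~(a).
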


Note that in our proof above, all the presented formulas can be easily transformed to formulae under the local semantics of percentage quantifiers as follows.
First, we introduce a fresh binary symbol $U$ and enforce it to be interpreted as the universal relation with $\forall{x} \forall{y} \; U(x,y)$.
Then, we replace every occurrence of $\existsperc{=}{50}x \; \varphi$ by $\existspercwithrel{=}{50}{U}x \; \varphi$.
Obviously, the resulting formula is $\FOt$ formula with local percentage quantifiers.
Thus we conclude:
\begin{corollary}
\label{cor:local-percentage}
The finite satisfiability problem for $\FOtpercloc$ is undecidable.
\end{corollary}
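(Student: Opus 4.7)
The plan is to reduce finite satisfiability of $\FOtpercgl$, which is undecidable by \cref{thm:fo2undec}, to finite satisfiability of $\FOtpercloc$, by simulating the global semantics with the local one taken with respect to a universal binary relation. Given any $\FOtpercgl$ sentence $\varphi$ (in the intended application, $\varphi=\phireduction{\varepsilon}$), I pick a binary relation symbol $U$ not occurring in $\varphi$ and form
\[
\varphi' \;:=\; \bigl(\forall x\,\forall y\; U(x,y)\bigr) \;\wedge\; \widehat{\varphi},
\]
where $\widehat{\varphi}$ is obtained from $\varphi$ by replacing every occurrence of $\existsperc{=}{q}z\,\psi$ (and its analogues for other thresholds) by $\existspercwithrel{=}{q}{U}z\,\psi$. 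The resulting sentence still uses only two variables, and every percentage quantifier in it is local and guarded by a binary symbol, so $\varphi' \in \FOtpercloc$.

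For correctness, I would observe that any finite model $\str{M}$ of $\forall x\,\forall y\; U(x,y)$ interprets $U$ as $M \times M$, which yields $\card{U(x,y)}{\str{M}}{x/a}=|M|$ and $\card{U(x,y)\wedge\psi(x,y)}{\str{M}}{x/a}=\card{\psi(x,y)}{\str{M}}{x/a}$ for every $a\in M$ and every formula $\psi(x,y)$. Unfolding the definitions of the two kinds of percentage quantifier then shows that $\str{M},x/a\models\existspercwithrel{=}{q}{U}y\,\psi$ iff $\card{\psi(x,y)}{\str{M}}{x/a}=\tfrac{q}{100}\cdot|M|$, which is precisely the global clause from the definition of $\existsperc{=}{q}y\,\psi$. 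A straightforward structural induction on $\varphi$ then gives $\str{M}\models\varphi$ iff the expansion $\str{M}[U:=M\times M]$ satisfies $\varphi'$, so $\varphi$ has a finite model iff $\varphi'$ does, and undecidability transfers.

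There is no real obstacle: the only thing to verify carefully is that the translation is sound for every shape of global percentage quantifier used in \cref{thm:fo2undec}, including $\Maj{y}$-subformulae nested under an outer $\forall x$ as in $\phicount^i$ and $\phibfunc^i$, where the counted set genuinely depends on the value of $x$. This case is covered by the identity above, which holds uniformly in the parameter $a$, so the free variable of the quantified subformula is handled correctly.
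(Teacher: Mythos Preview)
Your proposal is correct and follows essentially the same approach as the paper: introduce a fresh binary symbol $U$, force it to be the full relation via $\forall x\,\forall y\;U(x,y)$, and replace each global percentage quantifier by its $U$-local counterpart so that over models with $U^{\str{M}}=M\times M$ the two semantics coincide. The only difference is presentational: the paper applies the translation directly to the formulae from the $\SHTP$ reduction rather than phrasing it as a general reduction from $\FOtpercgl$ to $\FOtpercloc$, but the content is the same.
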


\subsection{Guarded Fragment}
We now focus on the second seminal fragment of $\FO$ considered in this paper, namely on the guarded-fragment~$\GF$.
%
We start from the global semantics of percentage quantifiers.
Consider a unary predicate $\predH$, whose interpretation is constrained to label exactly half of the domain with $\Maj{x}\; \predH(x)$.
We then employ the formula
\[
\forall{x} \; x=x \to \Maj{y} \left[ \predU(x,y) \land \predH(y) \right] \land \Maj{y} \left[ \predU(x,y) \land \neg \predH(y) \right],
\] 
whose satisfaction by $\str{M}$ entails that $U^{\str{M}}$ is the universal relation.
Hence, by putting $\predU$ as a dummy guard in every formula in the undecidability proof of $\FOtpercloc$, we conclude:
\begin{corollary}
The finite satisfiability problem for $\GFpercgl$ is undecidable, even when restricted to its two-variable fragment $\GFtpercgl$.
\end{corollary}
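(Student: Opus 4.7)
The plan is to reduce finite satisfiability of $\FOtpercgl$ --- undecidable by~\cref{thm:fo2undec} --- to finite satisfiability of $\GFtpercgl$, following the hint sketched in the excerpt. Given an $\FOtpercgl$ sentence $\phi$, I would add a fresh binary symbol $\predU$ together with universality axioms that force $\predU^{\str{M}} = M \times M$ on every finite model, and then guard every first-order quantifier of $\phi$ by $\predU$. Because $\predU$ is universal, each rewrite preserves truth, so the rewritten sentence is finitely satisfiable iff $\phi$ is, and by construction it lies in $\GFtpercgl$.

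First, I would verify the universality axiom. Introducing a fresh unary $\predH$ and conjoining $\Maj{x}\;\predH(x)$ with the displayed formula of the excerpt gives, on the one hand, $|\predH^{\str{M}}| = |\neg\predH^{\str{M}}| = |M|/2$, and on the other hand, for every $a \in M$, both $|\{b : \str{M}\models \predU(a,b)\wedge \predH(b)\}| = |M|/2$ and $|\{b : \str{M}\models \predU(a,b)\wedge \neg\predH(b)\}| = |M|/2$. Since the first set is contained in $\predH^{\str{M}}$ and the second in $\neg\predH^{\str{M}}$ --- each already of size $|M|/2$ --- the containments must be equalities, forcing $\predU^{\str{M}}(a,\cdot) = M$ for every $a$, i.e.\ $\predU^{\str{M}} = M \times M$.

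Next I would describe the syntactic translation of $\phi$: a subformula $\exists y\, \psi(x,y)$ is rewritten as $\exists y\, \predU(x,y) \wedge \psi(x,y)$; a subformula $\forall y\, \psi(x,y)$ as $\forall y\, \predU(x,y) \to \psi(x,y)$; a single-variable quantifier $\exists x\,\psi(x)$ or $\forall x\,\psi(x)$ is guarded analogously with $\predU(x,x)$; and the global percentage quantifiers $\existsperc{=}{q}x\,\psi$ appearing in the $\FOtpercgl$ proof are left untouched, since $\GFpercgl$ admits them in place of existentials without a further guard. Universality of $\predU^{\str{M}}$ in every model of the axioms makes each rewrite logically equivalent to its unguarded original, so the resulting formula $\phi'$ is $\GFtpercgl$ and preserves finite satisfiability.

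The only remaining check is that the universality axiom itself lies in $\GFtpercgl$: the outer $\forall x$ is licensed by the equality guard $x{=}x$, admitted as a relational atom in $\GF$ with equality (matching the excerpt's usage), while its body is built from global percentage quantifiers which need no guarding. I do not foresee any substantial obstacle beyond this book-keeping; the mildly delicate point is the reliance on equality as a unary guard (the standard convention for $\GF$), but this can be bypassed by first deriving $\predU(x,x)$ from the axiom and then re-using $\predU(x,x)$ as the guard. Combining the reduction with~\cref{thm:fo2undec} yields undecidability of $\GFtpercgl$, and since $\GFtpercgl \subseteq \GFpercgl$, the claim for $\GFpercgl$ follows as well.
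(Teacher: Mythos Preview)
Your proposal is correct and follows essentially the same route as the paper: introduce a fresh binary $\predU$, force it to be universal via the half-predicate trick, and then use $\predU$ as a dummy guard throughout the $\FOtpercgl$ reduction. One minor point: by the paper's definition of $\GFpercgl$ the global percentage quantifiers occupy the position of guarded existentials and hence also carry a guard, so rather than leaving $\existsperc{=}{q}y\,\psi$ untouched you should write $\existsperc{=}{q}y\,\predU(x,y)\wedge\psi$ (or use an equality guard in the one-variable case); since $\predU$ is universal this is semantically harmless and your equivalence argument goes through unchanged.
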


It turns out that the undecidability still holds for $\GF$ once we switch from the global to the local semantics of percentage counting.
In order to show it, we present a reduction from $\GF^3[F]$ (\ie the three-variable fragment of $\GF$ with a distinguished binary $F$ interpreted as a functional relation), whose finite satisfiability was shown to be undecidable in~\cite{Gradel99}.
\begin{theorem}\label{thm:gf-local-undec}
The finite satisfiability problem for $\GFpercloc$ (and even $\GFpercloc^3$) is undecidable.
\end{theorem}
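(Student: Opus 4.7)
The plan is to reduce from the finite satisfiability problem for $\GF^3[F]$ (undecidable by~\cite{Gradel99}), where $F$ is a distinguished binary relation interpreted as functional. Given $\psi\in\GF^3[F]$, I would construct $\psi'\in\GFpercloc^3$ of the form $\hat\psi\wedge\alpha$, with $\hat\psi$ a rewriting of $\psi$ that invokes $F$ through local percentage quantifiers, and $\alpha$ forcing a structural ``effective functionality'' of $F$ via uniformity of its successor sets. The rewriting $\hat\psi$ replaces each subformula of the shape $\exists y\, F(x,y)\wedge\varphi(x,y)$ by $(\exists y\, F(x,y))\wedge (\existspercwithrel{=}{100}{F}y\,\varphi(x,y))$, asserting both that some $F$-successor exists and that every $F$-successor satisfies $\varphi$; universal subformulas $\forall y\, F(x,y)\to\varphi(x,y)$ are left intact. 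This stays within three variables by reusing $y$ across scopes, and, under the effective functionality imposed by~$\alpha$, captures exactly the intended functional semantics of~$\psi$.

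The constraint $\alpha$ enforces local uniformity of $F$-successors: for each atomic unary $P$ and each atomic binary $R$ occurring in $\psi$, $\alpha$ contains axioms such as $\forall x\, (\existspercwithrel{=}{100}{F}y\, P(y))\vee (\existspercwithrel{=}{0}{F}y\, P(y))$ and $\forall x\, (\existspercwithrel{=}{100}{F}y\, R(x,y))\vee (\existspercwithrel{=}{0}{F}y\, R(x,y))$, together with symmetric variants involving $R(y,x)$ and $F^-$. Collectively these force every pair of $F$-successors of a common $x$ to share the same 2-type relative to $x$. The forward direction of the reduction is then immediate: a finite model of $\psi$ with functional $F$ trivially satisfies $\alpha$, and on such a model $\hat\psi$ collapses back to $\psi$ because singleton successor sets make ``some'' and ``all'' coincide.

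The delicate part is the backward direction: given a finite model of $\hat\psi\wedge\alpha$ in which $F$ may be non-functional, one must exhibit a finite model of $\psi$ with functional $F$. The intended recipe is to pick, for each element $x$, a representative $F$-successor and restrict $F$ to these choices, obtaining a functional $F'\subseteq F$; one then checks that the resulting modified structure still satisfies the original $\psi$, using the uniformity axioms of $\alpha$ to ensure that the truth of each subformula is preserved when switching from $F$ to~$F'$.

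I expect the main obstacle to be that the uniformity enforced by $\alpha$ must be strong enough to cover subformulas of $\psi$ involving a third variable $z$---e.g.\ properties of $F$-successors $y$ of $x$ of the form $\exists z\, R(y,z)\wedge S(x,z)$---so that two distinct $F$-successors of $x$ cannot be distinguished by their relations to a third element. The two-variable restriction on the inner formula of a local percentage quantifier prevents a direct uniformity assertion with respect to an arbitrary third element~$z$. Overcoming this requires an indirect encoding via further fresh unary and binary predicates paired with carefully arranged percentage and guarded constraints, and this indirect combinatorial device is the technical heart of the proof.
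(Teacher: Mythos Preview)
Your reduction starts from the right place ($\GF^3[F]$), but the route you take is much harder than necessary and the gap you identify at the end is real and, as you outline it, unresolved.

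The paper's argument is far simpler and avoids the three-variable uniformity obstacle altogether: instead of tolerating a non-functional $F$ and trying to make all $F$-successors of $x$ indistinguishable (which, as you note, would have to propagate through arbitrary third elements $z$), the paper \emph{directly expresses functionality of $F$} in $\GFpercloc$. It introduces two fresh binary symbols $H$ and $R$ and writes a single two-variable formula $\varphi_{\textit{func}}$, using a local variant of the ``half trick'' from Section~\ref{subsec:playing-with}: for each $x$, the set $R(x,\cdot)$ plays the role of a local universe, $H(x,\cdot)$ carves out half of it, and a $\existspercwithrel{=}{50}{R}$ quantifier forces the number of $F$-successors of $x$ to equal the number of elements removed by $x\neq y$, namely at most one. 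One then shows (i) every model of $\varphi_{\textit{func}}$ has functional $F$, and (ii) every model with functional $F$ extends to a model of $\varphi_{\textit{func}}$ by suitably interpreting $H,R$. The reduction is then simply $\psi\mapsto \psi\wedge\varphi_{\textit{func}}$; $\psi$ itself is untouched.

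In your plan, by contrast, the backward direction hinges on ``pick a representative $F$-successor and restrict $F$'', and for this to preserve truth of $\psi$ you need the representatives to be interchangeable with respect to \emph{all} subformulas of $\psi$, including those that quantify over a third variable $z$ and speak about $R(y,z)$ for $z$ unrelated to $x$. Your $\alpha$ only synchronises the $1$-type of $y$ and the $2$-type of $(x,y)$; it says nothing about the neighbourhood of $y$ beyond $x$. Patching this via ``further fresh predicates and guarded constraints'' is not a small technicality: you would essentially need to internalise a bisimulation among $F$-successors, and it is far from clear this can be done within $\GFpercloc^3$. The paper's approach makes this entire detour unnecessary.
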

\begin{proof}[Proof sketch.]
By reduction from $\GF^3[F]$ it suffices to express that $F$ is functional.
Let $H, R$ be fresh binary relational symbols.
We use a similar trick to the one from~\cref{subsec:playing-with}, where $H(x, \cdot)$ plays the role of $\predHalf$ (note that $H$ may induce different partitions for different~$x$), $R(\cdot,y)$ plays the role of $\predR$ and $y$ in $x=y$ plays the role of $\predJ$. 

The functionality of $F$ can be expressed with:
\[ \varphi_{\textit{func}} := \forall{x} \; x=x \to [(\forall{y} \;  F(x,y) \to R(x,y)) \land  (\existspercwithrel{=}{50}{R}y \;  H(x,y)) \land \]
\[ \qquad (\forall{y} \; F(x,y) \to (\neg H(x,y) \lor x=y)) \land (\existspercwithrel{=}{50}{R}y \;  ((H(x,y) \land x \neq y) \vee F(x,y)))]  \]
In the appendix we will show that if $\str{M} \models \varphi_{\textit{func}}$ then $F$ is indeed functional and every structure $\str{M}$ with functional $F$ can be extended by $H$ and $R$, such that $\varphi_{\textit{func}}$ holds. \qedhere
\end{proof}

The similar proof techniques do not work for $\GF^2$, since $\GFt$ with counting is decidable~\cite{Pratt-Hartmann07}.
Thus, in the forthcoming section we show that decidability status transfers not only to $\GFt$ with percentage counting, but also with Presburger arithmetics.
This can be then applied to infer decidability of several modal and description logics, see~\cref{app:presburger-ml}.


\section{Positive results}\label{sec:decidable} 

We next show that the finite satisfiability problem for $\GFtpres$ is decidable, as stated below.
\begin{theorem}\label{theo:decidable}
The finite satisfiability problem for $\GFtpres$ is decidable.
\end{theorem}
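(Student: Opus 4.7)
The plan is to reduce the finite satisfiability problem for $\GFtpres$ to the finite satisfiability problem for the two-variable logic with counting $\Ct$, which is decidable~\cite{GradelOR97,PacholskiST00,Pratt-Hartmann05}. Let $\Phi$ be an input $\GFtpres$ formula. I would first bring $\Phi$ into a Scott-like normal form by renaming subformulae with fresh unary predicates, so that $\Phi$ becomes equi-satisfiable with a conjunction of universal two-variable guarded sentences, guarded existential sentences $\forall x\,(\gamma(x)\to\exists y\,\beta(x,y))$, and Presburger sentences of the shape $\forall x\,(C(x)\to\Psi(x))$, where $C$ is a fresh unary predicate and $\Psi(x)$ is a single local Presburger atom. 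Letting $\eta_1,\ldots,\eta_{\ell}$ enumerate the non-null $2$-types over the resulting signature and naming each guarded subformula $r_i(x,y)\wedge\varphi_i(x,y)$ occurring in $\Psi$ by a fresh binary predicate, I may further assume that each $\Psi(x)$ is a Presburger-definable constraint on the degree vector $\deg_{\str{M}}(x)\in\N^{\ell}$.

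Next, by Theorem~\ref{theo:semilinear} each $\Psi$ effectively defines a semilinear set $S_{\Psi}\subseteq\N^{\ell}$, which decomposes as a finite union $\bigcup_{p} L(\vv_{p,0};\vv_{p,1},\ldots,\vv_{p,k_p})$ of linear sets. I introduce fresh unary predicates $B_{\Psi,p}$ that, for each $x$ with $C(x)$, ``guess'' which linear set of the decomposition contains $\deg_{\str{M}}(x)$, enforced by $\forall x\,(C(x)\to\bigvee_{p}B_{\Psi,p}(x))$ together with pairwise-disjointness. For each individual linear set $L(\vv_0;\vv_1,\ldots,\vv_k)$ I then introduce, per non-null $2$-type $\eta_j$, fresh binary predicates $R_{0,j},R_{1,j},\ldots,R_{k,j}$ intended to partition the $\eta_j$-neighbours of every element $a$ marked by the corresponding $B_{\Psi,p}$: the $R_{0,j}$-neighbours realise the offset $\vv_0$, while the $R_{i,j}$-neighbours realise the contribution of $n_i$ copies of the period $\vv_i$, for some $n_i\in\N$ chosen at $a$. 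The offset condition reduces to the fixed-count statement ``$a$ has exactly $v_0[j]$ many $R_{0,j}$-neighbours of $2$-type $\eta_j$'', which is directly expressible by $\Ct$ counting quantifiers (with numerals encoded in binary). For each period $i$, one has additionally to enforce that the vector whose $j$-th entry is the number of $R_{i,j}$-neighbours of $a$ is a nonnegative integer multiple of the fixed vector $\vv_i$.

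The principal obstacle is exactly this last condition: $\Ct$ counts can only be compared to fixed constants, not to each other, whereas enforcing a common scalar multiplier $n_i$ across all coordinates \emph{prima facie} seems to demand a three-variable bijection between $\eta_j$- and $\eta_{j'}$-neighbours of $a$. I plan to bypass it by exploiting the fact that each period vector $\vv_i$ is a \emph{fixed} integer vector: for each period $i$, introduce a fresh binary ``bundle'' relation $S_i$ so that the $S_i$-successors of $a$ serve as $n_i$ anchor elements, each of which is functionally attached back to $a$ via $S_i^{-}$ (the functionality being itself a $\Ct$-expressible ``at most one predecessor'' axiom), and force each anchor to carry a fixed bouquet of auxiliary typed attachments encoding the coordinates $v_i[j]$. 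Because every anchor is identified by its unique $S_i$-predecessor, the bookkeeping needed to link anchors back to typed neighbours of $a$ can be localised to the $2$-type of the pair $(a,y)$ and to per-anchor fixed-count constraints, letting the whole translation stay within $\Ct$. Correctness of the reduction then follows by standard expansion/reduct arguments: any finite $\str{M}\models\Phi$ admits, for each $a$, a decomposition of $\deg_{\str{M}}(a)$ into offset plus periodic contributions together with a compatible choice of anchors, which guides the interpretation of the fresh predicates; conversely, forgetting the auxiliary predicates of any finite $\Ct$-model of the translation yields a model of the original formula $\Phi$.
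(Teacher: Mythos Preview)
Your plan coincides with the paper's up to the point where one must express, per element $x$, that $\deg_{\str{M}}(x)$ lies in the chosen linear set $L(\vv_0;\vv_1,\ldots,\vv_k)$. There, however, the anchor/bouquet construction does not go through. The step you flag as the ``principal obstacle'' and then claim to bypass is genuinely ternary: an anchor $c$ knows $a$ through $S_i^{-}$, and an $R_{i,j}$-labelled $\eta_j$-neighbour $b$ of $a$ knows $a$ through $\eta_j$, but asserting that the anchor to which $b$ is assigned has the \emph{same} $S_i$-predecessor as $b$'s $R_{i,j}$-predecessor requires speaking about $a$, $b$, $c$ simultaneously. No refinement of the $2$-type of $(a,y)$ helps, because a single $b$ may be an $\eta_j$-neighbour of several distinct centres $a$, so neither $R_{i,j}$ nor the assignment relation can be made backward-functional, and a global double-counting argument only yields $\sum_a |R_{i,j}\text{-nbrs of }a| = v_i[j]\cdot\sum_a n_i(a)$, not the required per-$a$ equality. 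Consequently the backward direction of your reduction fails: in a finite model of the $\Ct$ translation the anchors of $a$ and the $R_{i,j}$-neighbours of $a$ can be completely decoupled, and the reduct need not satisfy $\deg(a)\in L(\vv_0;\vv_1,\ldots,\vv_k)$.

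The paper avoids encoding the multiple altogether by a model-theoretic manoeuvre. Its $\Ct$ formula $\Psi^*$ does \emph{not} say ``$\deg(x)\in S$''; it only says that $\deg(x)$ equals one of the finitely many fixed offset or period vectors (directly $\Ct$-expressible via $\exists^{=d_j}y\,\eta_j(x,y)$), together with the clause that every element whose degree is a period vector has a same-$1$-type witness whose degree is a compatible offset vector. Equisatisfiability with $\Psi$ is then established by \emph{splitting} each $a$ with $\deg(a)=\vv_0+\sum_j n_j\vv_j$ into $1+\sum_j n_j$ elements of degrees $\vv_0,\vv_1,\ldots$, and conversely \emph{merging} periodic elements into offset elements across $3N$ disjoint copies of a model of $\Psi^*$. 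Both operations only touch pairs whose $2$-type is $\etanull$, so every conjunct $\forall x\forall y\,e_i(x,y)\to\alpha_i(x,y)$ and $\forall x\,\gamma(x)$ is preserved. This is the missing idea: instead of expressing ``$n$ times a period vector'' inside $\Ct$, change the model so that each occurrence of a period vector becomes a separate element with a fixed degree.
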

It is also worth pointing out that \cref{theo:decidable} together with a minor modification of existing techniques~\cite{BaaderBR19} yields decidability of conjunctive query entailment problem for $\GFtpres$, \ie a problem of checking if an existentially quantified conjunction of atoms is entailed by $\GFtpres$ formula. 
This is a fundamental object of study in the area of logic-based knowledge representation. 
All the proofs and appropriate definitions are moved to~\cref{appendix:querying}.
\begin{theorem}\label{thm:result-on-query-answering}
Finite conjunctive query entailment for $\GFtpres$ is decidable.
\end{theorem}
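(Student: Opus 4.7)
The plan is to reduce the finite conjunctive query entailment problem for $\GFtpres$ to its finite satisfiability problem, which by~\cref{theo:decidable} is decidable. The reduction follows the approach of~\cite{BaaderBR19}, with minor adjustments for Presburger quantifiers.

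Given a $\GFtpres$ formula $\varphi$ over a signature $\Sigma$ and a CQ $q(\bar{x}) = \exists \bar{y}\ \bigwedge_k \alpha_k(\bar{x},\bar{y})$, the entailment $\varphi \models q$ fails over finite models iff some finite $\Sigma$-structure $\str{M}$ satisfies $\varphi$ while admitting no homomorphism from $q$ mapping $\bar{x}$ to any tuple from $M$. First, I would reduce to Boolean CQ entailment by introducing, for each $x_i$, a fresh unary predicate $A_i$ forced (within $\GFtpres$) to be a singleton, and substituting $A_i$ for $x_i$ in $q$. Then I would enumerate the (at most exponentially many) \emph{match templates} $\tau$ for $q$: each such $\tau$ consists of a partition of the query variables into classes, a $1$-type $\pi_v$ for every class $v$, and a $2$-type $\eta_{v,v'}$ for every pair of classes co-occurring in some atom of $q$, with these types locally satisfying every atom.

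For each admissible template $\tau$, I would construct a $\GFtpres$ formula $\psi_\tau$ asserting that $\tau$ cannot be realized in the intended model. The subtlety is that $\tau$ may involve many query variables whereas $\GFtpres$ has only two. Following~\cite{BaaderBR19}, one introduces fresh unary predicates marking \emph{partial matches} of $\tau$ together with guarded universal sentences that iteratively forbid extending any such partial match by one further element; this lets the forbidden pattern be encoded using only two variables at a time. The resulting conjunction $\varphi \land \bigwedge_\tau \psi_\tau$ then lies in $\GFtpres$ and is finitely satisfiable iff $\varphi$ does not finitely entail $q$, at which point~\cref{theo:decidable} completes the decision procedure.

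The main obstacle is ensuring that this incremental encoding is compatible with the Presburger quantifiers of~$\varphi$, which count neighbours in the original relations and could be unintentionally triggered by the newly introduced auxiliary symbols. The required adaptation is to keep all Presburger constraints restricted to the original signature $\Sigma$ and to govern the fresh auxiliaries only by standard guarded quantification; since the auxiliaries serve purely as guards and unary decorations, they neither interfere with counting nor violate the two-variable restriction. With this discipline in place, the encoding of~\cite{BaaderBR19} transfers essentially verbatim, and the full argument can be carried out within $\GFtpres$.
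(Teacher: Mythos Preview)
Your reduction has a genuine gap: the ``partial match'' encoding you sketch cannot forbid \emph{cyclic} conjunctive queries within a two-variable guarded logic. The incremental scheme---mark partial matches with fresh unary predicates and extend by one element at a time---works precisely when each newly added query variable has a \emph{single} neighbour among the already-matched variables, i.e.\ when the template is tree-shaped. As soon as the query contains a cycle (say $R(x,y)\wedge R(y,z)\wedge R(z,x)$), closing the cycle requires the last element to be simultaneously connected to two previously chosen elements; a unary predicate on one of them carries no information about the identity of the other, and guarded two-variable quantification cannot recover it. Your enumeration of ``match templates'' with a $2$-type for every pair of classes does not help: the difficulty is not in describing the template but in expressing, with two reusable variables, that no tuple realises it.

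The paper (and~\cite{BaaderBR19}, which it follows) addresses exactly this obstacle by a model-theoretic detour you omit. It only attempts to block \emph{tree-shaped} matches---for which the rolling-up technique does yield a $\GF^2$ concept $\conceptC_{q'}$ per treeification $q'$---and then proves that this suffices: any finite countermodel can be \emph{pumped} to a finite countermodel of girth exceeding $|q|$, so that every match of $q$ is automatically tree-shaped. The key lemma is that pumping preserves each element's degree vector exactly, hence all local Presburger constraints are preserved; this is where the ``minor modification'' for $\GFtpres$ actually lives. Your proposal has the Presburger-compatibility remark but is missing the girth-increasing model transformation that makes a two-variable encoding sufficient in the first place.
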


The rest of this section will be devoted to the proof of Theorem~\ref{theo:decidable}, which goes by reduction to the two-variable fragment of $\FO$ with counting quantifiers $\exists^{=k}, \exists^{\leq k}$ for $k \in \N$ with their obvious semantics.
Since the finite satisfiability of $\Ct$ is decidable~\cite{Pratt-Hartmann05},~\cref{theo:decidable} follows.\footnote{Note that we propose a reduction into $\Ct$, not into the \emph{guarded} $\Ct$, which might seem to be more appropriate. As we will see soon, a bit of non-guarded quantification is required in our proof.}

\subsection{Transforming $\GFtpres$ formulae into $\Ct$}
\label{subsec:positive-transformation}

It is convenient to work with formulae in the appropriate normal form.
Following a routine renaming technique (see \eg~\cite{Kazakov04}) 
we can convert in linear time a $\GFtpres$ formula into the following equisatisfiable normal form (over an extended signature):
\begin{eqnarray*}
\label{eq:normal-form-0}
\Psi_0 & := & \forall x \ \gamma(x)
\wedge
\bigwedge_{i=1}^n\Big( \forall x \forall y \ e_i(x,y)  \to  \alpha_i(x,y)\Big)
\wedge
\bigwedge_{i=1}^m \forall x \Big(\sum_{j=1}^{n_i} \lambda_{i,j}\cdot \#_{y}^{r_{i,j}}[x\neq y] \circledast \delta_i\Big),
\end{eqnarray*}
where $\gamma(x)$ and each $\alpha_i(x,y)$ are quantifier-free formulae,
each $e_i(x,y)$ is atomic predicate and all $\lambda_{i,j}$'s and $\delta_i$'s are integers, and $\circledast$ is as in~\cref{subsec:local-presb-quantifiers}.

Then, for every non-null type $\eta$, we replace each of the expressions $\#_{y}^{r_{i,j}}[x\neq y]$ 
with the sum of all the degrees $\deg_{\eta}(x)$ with $\eta$ containing $r_{i,j}(x,y)$,
\ie the sum $\sum_{r_{i,j}(x,y)\in \eta}\ \deg_{\eta}(x)$.
Moreover, since $\bigwedge \forall$ commutes, we obtain the following formula:
\[
\Psi' \ := \ \forall x \ \gamma(x)
\wedge
\bigwedge_{i=1}^n\Big( \forall x \forall y \ e_i(x,y)  \to  \alpha_i(x,y)\Big)
\wedge
\forall x \bigwedge_{i=1}^m \Big(\sum_{j=1}^{n_i} \lambda_{i,j}\cdot\sum_{r_{i,j}(x,y)\in \eta} \deg_{\eta}(x) \ \circledast \ \delta_i\Big)
\]

Note that the conjunction $\bigwedge_{i=1}^m \Big(\sum_{j=1}^{n_i} \lambda_{i,j}\cdot\sum_{r_{i,j}(x,y)\in \eta} \deg_{\eta}(x) \ \circledast \ \delta_i\Big)$
is a Presburger formula with free variables $\deg_{\eta}(x)$'s, for every non-null type $\eta$.\footnote{Technically speaking, 
in the standard definition of Presburger formula,
the equality $f\equiv_d g$ is not allowed. 
However, it can be rewritten as $\exists x_1 \exists x_2 ( f +x_1d = g+x_2d)$.}
Thus, by Theorem~\ref{theo:semilinear}, we can compute 
a set of tuples of vectors $\{(\vv_{1,0},\vc_{1,1},\ldots,\vv_{1,k_1}),\ldots,(\vv_{p,0},\vv_{p,1},\ldots,\vv_{p,k_p})\}$
and further rewrite $\Psi'$ into the following formula:
\[
\Psi =
\forall x \ \gamma(x)
\ \wedge\
\textstyle\bigwedge_{i=1}^n\Big( \forall x \forall y \ e_i(x,y)  \to  \alpha_i(x,y)\Big)
\ \wedge\ 
\forall x \ \deg(x) \in S
\]
where $S = \bigcup_{i=1}^{p}\ L(\vc_{i,0};\vc_{i,1},\ldots,\vc_{i,k_i})$.
We stress that technically $\Psi$ is no longer in $\GFtpres$.

In the following we will show how to transform $\Psi$ into a $\Ct$ formula $\Psi^*$
such that they are (finitely) equi-satisfiable.
For every $i=1,\ldots,p$, 
let $S_i = L(\vv_{i,0};\vv_{i,1},\ldots,\vv_{i,k_i})$.
Recall that $\offset {S_i}$ is the offset vector $\vv_{i,0}$
and $\prd {S_i}$ is the set of periodic vectors of $S_i$, \ie $\{\vv_{i,1},\ldots,\vv_{i,k_i}\}$.
Consider the following formulae $\xi$ and $\phi$.
\[
\xi \; := \;
\forall x 
\bigvee_{i=1}^p
\underline{\deg(x) {=} \offset {S_i}}
\ \vee \
\underline{\deg(x) {\in} \prd {S_i}}, \; \;
\phi \; := \;
\forall x \bigwedge_{i=1}^p\underline{\deg(x) {\neq} \offset {S_i}}
\ {\to} \ \exists y \ \varphi(x,y)
\]
where $\varphi(x,y)$ is the conjunction expressing the following properties:
\begin{itemize}
\item 
The $1$-types of $x$ and $y$ equal. It can be expressed with the formula $\bigwedge_{U} U(x)\leftrightarrow U(y)$, where $U$ ranges over unary predicates appearing in $\Psi$.
\item
$\deg(x)\in \prd{S_j}$ and $\deg(y)=\offset {S_j}$
for some $1\leq j \leq p$.
\end{itemize}
Note that $\underline{\deg(x)= \offset {S_i}}$ can be written as a $\Ct$ formula.
For example, if $\vv_{i,0}=(d_1,\ldots,d_{\ell})$,
it is written as $\bigwedge_{j=1}^{\ell} \exists^{=d_{j}}y \ \eta_j(x,y)$.
We can proceed with $\underline{\deg(x)\in \prd {S_i}}$ similarly, since $\prd{S_i}$ contains only finitely many vectors. 
Finally, we put $\Psi^*$ to be
\begin{eqnarray*}
\label{eq:c2-reduction}
\Psi^* & := & \forall x \ \gamma(x)
\ \wedge\
\bigwedge_{i=1}^n \forall x \forall y \ e_i(x,y)  \to  \alpha_i(x,y)
\quad
\wedge\quad\xi\quad\wedge\quad\phi.
\end{eqnarray*}

We will show that $\Psi$ and $\Psi^*$ are finitely equi-satisfiable, as stated formally below.
\begin{lemma}
\label{lem:fin-equi-sat}
$\Psi$ is finitely satisfiable if and only if
$\Psi^*$ is.
\end{lemma}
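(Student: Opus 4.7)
The plan is to prove the two directions of finite equi-satisfiability separately.

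For the ($\Rightarrow$) direction the idea is to \emph{unpack} the semilinear decomposition. Given a finite $\str{M} \models \Psi$, for every $a \in M$ we have $\deg_{\str{M}}(a) = \vv_{i,0} + \sum_{\ell=1}^{k_i} n^a_\ell \vv_{i,\ell}$ for some index $i$ and $n^a_\ell \in \N$. I would construct $\str{M}^*$ by replacing each $a$ with one offset clone $\hat a_0$ (of target degree $\vv_{i,0}$) and $n^a_\ell$ period clones (of target degree $\vv_{i,\ell}$) for each $\ell$, all inheriting the $1$-type of $a$. The non-null 2-type edges incident to $a$ in $\str{M}$ would then be redistributed among the clones so that each realises its prescribed degree; this is a straightforward counting argument since the sum of the clones' prescribed $\eta$-degrees equals $\deg_{\str{M},\eta}(a)$ for every non-null $\eta$. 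The universal formulas $\gamma$ and $e_i \to \alpha_i$ are preserved by the redistribution; $\xi$ holds by design; and $\phi$ is witnessed, for every period clone, by the offset clone of the same original element (they share a $1$-type).

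The ($\Leftarrow$) direction reverses the process by \emph{absorbing} period elements into offset elements. Let $\str{M}^* \models \Psi^*$. By $\phi$, every period element $b$ with $\deg(b) \in \prd(S_j)$ admits an offset witness $f(b)$ of the same $1$-type satisfying $\deg(f(b)) = \offset(S_j)$. I would take $N$ disjoint copies of $\str{M}^*$ for some large $N$, and form a quotient $\str{M}$ in which each period occurrence $b^{(k)}$ is identified with an offset witness taken from a \emph{different} copy. The identifications are chosen so that any two resulting equivalence classes---the \emph{clusters}---share at most one of the $N$ copies; such an assignment is obtainable for $N$ large enough by a combinatorial argument (e.g., by allocating pairwise disjoint copy sets to the individual clusters, which is affordable once $N$ is big enough).

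It remains to verify that $\str{M}$ satisfies $\Psi$. Because each cluster's members lie in pairwise distinct copies of $\str{M}^*$, the cluster's $1$-type is unambiguously that of its constituents (no self-loop atoms are fabricated) and its $\eta$-degree equals the sum of the members' $\eta$-degrees, landing in $S_j \subseteq S$. Because any two clusters share at most one copy, each non-null 2-type in $\str{M}$ originates from a single pair in a single copy of $\str{M}^*$, where $e_i \to \alpha_i$ already held, so the implication transfers. The main obstacle in this direction is precisely the combinatorial packing of identifications while still covering every period element; this is where the blow-up to sufficiently large $N$ is essential, and where the witnessing clause in $\phi$ is used to guarantee that a matching offset is actually present whenever periods of a given type occur.
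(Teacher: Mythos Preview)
Your forward direction is essentially the paper's argument: split every element of $\str{M}$ into one ``offset'' clone plus the appropriate number of ``period'' clones and redistribute edges. That is fine.

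The backward direction, however, has a genuine gap. You propose to take $N$ disjoint copies of $\str{M}^*$ and form a quotient in which every period occurrence is glued to an offset witness in a different copy, subject to the two constraints (i) each cluster meets every copy at most once, and (ii) any two clusters share at most one copy. You then assert that ``such an assignment is obtainable for $N$ large enough\ldots\ by allocating pairwise disjoint copy sets to the individual clusters.'' This does not go through: the number of clusters itself scales with $N$. Every copy contains the same $P$ period elements and $O$ offset elements, so with $N$ copies you have $NP$ period occurrences to absorb, hence at least $\Theta(N)$ non-singleton clusters; you cannot give these pairwise disjoint copy sets drawn from only $N$ copies. More fundamentally, the clusters are \emph{determined by} the identifications you choose, so you cannot fix the clusters first and then allocate copies to them. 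The packing problem you set yourself is a genuine combinatorial design question, and you have not solved it.

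The paper sidesteps this by a concrete device you do not have: it takes $3N$ copies arranged in a $3\times N$ grid and merges cyclically---periodic elements of row~$0$ into offset elements of row~$1$, row~$1$ into row~$2$, row~$2$ back into row~$0$---with the column index of the target copy determined by the enumeration index of the periodic element. This explicit scheme is what makes the merging go through; the point is not merely that ``$N$ is large enough'' but that the assignment has a specific cyclic structure that guarantees every periodic element finds a compatible offset partner while keeping the merges well-defined. Your proposal recognises the correct invariants (clusters in distinct copies, no double-counting of $2$-types), but it lacks precisely this organising construction, and the hand-wave you offer in its place is circular.
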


We delegate the proof of~\cref{lem:fin-equi-sat} to the next section.
We conclude by stating that the complexity of our decision procedure is $\ThreeNExpTime$.
For more details of our analysis, see~\cref{subsec:complexity}.
Note that if we follow the decision procedure described in~\cite{semilinear}
for converting a system of linear equations to its semilinear set representation
we will obtain a non-elementary complexity.
This is because we need to perform $k{-}1$ intersections,
where $k$ is the number of linear constraints in the formula $\Psi'$,
and the procedure in~\cite{semilinear} for handling each intersection 
yields an exponential blow-up.
Instead, we use the results in~\cite{Domenjoud91,Pottier91,ChistikovH16}
and obtain the complexity $\ThreeNExpTime$, which though still high, 
falls within the elementary class.

\subsection{Correctness of the translation}

Before we proceed with the proof,
we need to define some terminology.
Let $\str{M}$ be a finite model.
Let $a,b\in A$ be such that the 2-type of $(a,b)$ is $\etanull$, \ie the null-type
and that $a$ and~$b$ have the same 1-type.
Suppose $c_1,\ldots,c_s$ are all elements such that the 2-type of each $(a,c_j)$,
denoted by $\eta_j'$, is non-null.
Likewise, $d_1,\ldots,d_t$ are all the elements such that 
the 2-type of $(b,d_j)$, denoted by $\eta_j''$, is non-null.
Moreover, $c_1,\ldots,c_s,d_1,\ldots,d_t$ are pair-wise different.

``Merging'' $a$ and $b$ into one new element $\hat{a}$
is defined similarly to the one in the graph-theoretic sense
where $a$ and $b$ are merged into $\hat{a}$ such that the following holds.
\begin{itemize}
\item 
The 2-types of each $(\hat{a},c_j)$ are equal to the original 2-types of $(a,c_j)$,
for all~$j=1,\ldots,s$.
\item 
The 2-types of each $(\hat{a},d_j)$ are equal to the original 2-types of $(a,d_j)$,
for all~$j=1,\ldots,t$.
\item
The 2-types of $(\hat{a},a')$ is the null type, for every $a'\notin \{c_1,\ldots,c_2,d_1,\ldots,d_t\}$. 
\item
The 1-type of $\hat{a}$ is  the original 1-type of $a$
(which is the same as the 1-type of $b$).
\end{itemize}

\begin{center}
\begin{picture}(340,75)(0,15)


\put(20,75){\circle*{3}}\put(10,72){$a$}
\put(20,75){\vector(4,1){60}}\put(50,87){\scriptsize $\eta_1'$}
\put(80,90){\circle*{3}}\put(83,87){$c_1$}
\multiput(80,82)(0,-8){3}{\circle*{1}}
\put(20,75){\vector(4,-1){60}}\put(50,70){\scriptsize $\eta_s'$}
\put(80,60){\circle*{3}}\put(83,57){$c_s$}

\put(20,25){\circle*{3}}\put(10,22){$b$}
\put(20,25){\vector(4,1){60}}\put(50,37){\scriptsize $\eta_1''$}
\put(80,40){\circle*{3}}\put(83,37){$d_1$}
\multiput(80,32)(0,-8){3}{\circle*{1}}
\put(20,25){\vector(4,-1){60}}\put(50,20){\scriptsize $\eta_t''$}
\put(80,10){\circle*{3}}\put(83,7){$d_t$}

\put(166,47){$\Longrightarrow$}

\put(260,50){\circle*{3}}\put(250,47){$\hat{a}$}

\qbezier(260,50)(290,70)(320,90)\put(319,89){\vector(3,2){0}}
\put(290,77){\scriptsize $\eta_1'$}
\put(320,90){\circle*{3}}\put(323,87){$c_1$}

\multiput(320,82)(0,-8){3}{\circle*{1}}

\qbezier(260,50)(290,55)(320,60)\put(319,60){\vector(4,1){0}}
\put(294,60){\scriptsize $\eta_s'$}
\put(320,60){\circle*{3}}\put(323,57){$c_s$}

\qbezier(260,50)(290,45)(320,40)\put(319,40){\vector(4,-1){0}}
\put(290,37){\scriptsize $\eta_1''$}
\put(320,40){\circle*{3}}\put(323,37){$d_1$}

\multiput(320,32)(0,-8){3}{\circle*{1}}

\qbezier(260,50)(290,30)(320,10)\put(319,10){\vector(3,-2){0}}
\put(280,23){\scriptsize $\eta_t''$}
\put(320,10){\circle*{3}}\put(323,7){$d_t$}

\end{picture}
\end{center}
Note that we require that the original 2-type of $(a,b)$ is the null type.
Thus, after the merging, the degree of $\hat{a}$
is the sum of the original degrees of $a$ and $b$.
Moreover, the 1-type of $\hat{a}$ is the same as the original 1-type of $a$ and $b$.
Thus, if $\forall x \forall y\ e_i(x,y)\to \alpha_i(x,y)$ holds in $\str{M}$,
after the merging, it will still hold.
Likewise, if $\forall x\ \gamma(x)$ holds in $\str{M}$,
it will still hold after the merging.

For the inverse,
we define the ``splitting'' of an element $\hat{a}$ into two elements $a$ and $b$
as illustrated above,
where the 1-type of $a$ and $b$ is the same as the 1-type of $\hat{a}$
and the 2-type of $(a,b)$ is set to be $\etanull$.
After the splitting,
the sum of the degrees of $a$ and $b$
is the same as the original degree of $\hat{a}$.
Moreover, since the 2-type of $(a,b)$ is $\etanull$,
$\str{M},x/a,y/b\not\models e_i(x,y)$.
Thus, if $\forall x \forall y\ e_i(x,y)\to \alpha_i(x,y)$ holds in the original $\str{M}$, it will still hold after the splitting.

\begin{lemma}\label{lem:fin-equi-sat-only-if}
If $\Psi$ is finitely satisfiable then $\Psi^*$ is.
\end{lemma}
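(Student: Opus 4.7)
The plan is to build a finite model $\str{M}^*$ of $\Psi^*$ from any finite model $\str{M}$ of $\Psi$ by iteratively applying the splitting operation defined just above. Since $\str{M} \models \forall x\ \deg(x){\in}S$ and $S = \bigcup_{i=1}^p L(\vv_{i,0};\vv_{i,1},\ldots,\vv_{i,k_i})$, for every $a \in M$ I first fix an index $j_a \in \{1,\ldots,p\}$ together with non-negative integers $n_{a,1},\ldots,n_{a,k_{j_a}}$ that witness $\deg_{\str{M}}(a) = \vv_{j_a,0} + \sum_{k} n_{a,k}\vv_{j_a,k}$. The aim of the construction is to replace each such $a$ by $1 + \sum_k n_{a,k}$ copies that all share the $1$-type of $a$: a single ``offset copy'' of degree $\offset{S_{j_a}} = \vv_{j_a,0}$ and, for each $k$ with $n_{a,k}>0$, precisely $n_{a,k}$ ``period-$k$ copies'' of degree $\vv_{j_a,k} \in \prd{S_{j_a}}$.

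To realise this I iterate the two-way splitting from the preceding paragraphs. At a typical stage I pick an element $\hat{a}$ whose current degree has the form $\vv_{j,0} + \sum_k m_k\vv_{j,k}$ with some $m_{k_0}\geq 1$, and split $\hat{a}$ into two elements: one of degree exactly $\vv_{j,k_0}$ and a second of degree $\vv_{j,0} + \sum_{k\neq k_0} m_k\vv_{j,k} + (m_{k_0}{-}1)\vv_{j,k_0}$. Because the decomposition holds componentwise and $m_{k_0}\geq 1$, for every non-null $2$-type $\eta$ we have $\deg_{\eta}(\hat{a})\geq (\vv_{j,k_0})_\eta$, so we can select exactly $(\vv_{j,k_0})_\eta$ of the current $\eta$-neighbours of $\hat{a}$ to become the $\eta$-neighbours of the first new element and reassign the rest to the second. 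By the properties of the splitting operation already established, this preserves $\forall x\ \gamma(x)$ and each $\forall x\forall y\ e_i(x,y)\to\alpha_i(x,y)$, keeps the $1$-type of both new elements equal to that of $\hat{a}$, and makes the $2$-type between them null. The coefficient vector of the residual second element is still non-negative, so its degree still lies in $L(\vv_{j,0};\vv_{j,1},\ldots,\vv_{j,k_j})$; the process strictly decreases $\sum_k m_k$ and therefore terminates, leaving exactly the target configuration. Call the resulting structure $\str{M}^*$.

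It remains to verify $\str{M}^* \models \Psi^*$. The universal conjuncts survive by the preservation property just noted. For $\xi$, every element of $\str{M}^*$ is by construction either an offset copy (degree $\offset{S_{j_a}}$) or a period-$k$ copy (degree $\vv_{j_a,k}\in\prd{S_{j_a}}$) of some $a \in M$, so the required disjunction is satisfied. For $\phi$, if $x \in M^*$ satisfies $\deg(x)\neq\offset{S_i}$ for every $i$, then $x$ cannot be an offset copy, hence it is a period-$k$ copy of some $a$ and $\deg(x)=\vv_{j_a,k}\in\prd{S_{j_a}}$; the offset copy of the same $a$ then witnesses $\exists y\ \varphi(x,y)$, since it shares the $1$-type of $x$ and has degree $\offset{S_{j_a}}$. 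The one genuinely technical step will be producing, at each splitting round, a legal partition of the $\eta$-neighbourhoods between the two pieces of $\hat{a}$; this reduces uniformly to the componentwise bound $\deg_{\eta}(\hat{a})\geq (\vv_{j,k_0})_\eta$ coming from the semilinear decomposition chosen at the start, so no global coordination between the splittings of different elements is ever required.
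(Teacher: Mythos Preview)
Your argument is correct and follows essentially the same approach as the paper: fix a semilinear decomposition of $\deg_{\str{M}}(a)$ for each $a$, then split $a$ into one offset copy plus the appropriate number of period copies, and verify $\xi$ and $\phi$ directly. The paper's own proof is a six-line sketch that simply states the split into $N=1+\sum_j n_j$ pieces and declares the verification ``straightforward''; you have supplied exactly the details that sketch omits (the componentwise inequality enabling each split, the witness for $\phi$ being the offset copy of the same original element, and termination).
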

\begin{proof}
Let $\str{M}$ be a finite model of $\Psi$.
We will construct a finite model $\str{M}^*\models \Psi^*$
by splitting every element in $\str{M}$ into several elements so that
their degrees are either one of the offset vectors of $S$
or one of the period vectors.

Let $a\in A$ and $\deg_{\str{M}}(a)\in S_i$, for some $1\leq i \leq p$.
Suppose $\deg_{\str{M}}(a) = \vv_{i,0} + \sum_{j=1}^{k_i} n_j \vv_{i,j}$,
for some $n_1,\ldots,n_{k_i}\geq 0$.
Let $N=1+\sum_{j=1}^{k_i} n_j$.
We split $a$ into $N$ elements $b_1,\ldots,b_N$.
Let $\str{M}^*$ denote the resulting model after such splitting.
Note that it should be finite since the degree of $a$ is finite.
It is straightforward to show that $\str{M}^*\models \Psi^*$.
\end{proof}

\begin{lemma}\label{lem:fin-equi-sat-if}
If $\Psi^*$ is finitely satisfiable then $\Psi$ is.
\end{lemma}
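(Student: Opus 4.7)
The plan is to reverse the splitting construction of \cref{lem:fin-equi-sat-only-if}: starting from a finite model $\str{M}^*\models\Psi^*$, I would take many disjoint copies of $\str{M}^*$ and then \emph{merge} each period-degree element into a suitable offset-degree element of matching $1$-type, producing a finite model of $\Psi$.

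First, by $\xi$ every $a\in M^*$ is either \emph{offset-type} (i.e., $\deg_{\str{M}^*}(a)=\offset{S_i}$ for some $i$) or \emph{period-type} (its degree lies in some $\prd{S_j}$ but equals no offset). For each period-type $a$, formula $\phi$ provides a witness index $j_a$ with $\deg_{\str{M}^*}(a)\in\prd{S_{j_a}}$ together with at least one same-$1$-type offset-type element of degree $\offset{S_{j_a}}$ in $\str{M}^*$.

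Next, let $K=|M^*|+1$ and form the disjoint union $\str{N}=\str{M}_1\sqcup\cdots\sqcup\str{M}_K$ of $K$ copies of $\str{M}^*$; any two elements sitting in different copies have the null $2$-type in $\str{N}$. For every pair $(\pi,j)$, let $r_{\pi,j}$ and $q_{\pi,j}$ count, respectively, the period-type elements with chosen witness $j$ and the offset-type elements of degree $\offset{S_j}$, both of $1$-type $\pi$, inside one copy of $\str{M}^*$. Since $\phi$ forces $q_{\pi,j}\geq 1$ whenever $r_{\pi,j}\geq 1$ and trivially $r_{\pi,j}\leq K-1$, the inequality $(K-1)q_{\pi,j}\geq r_{\pi,j}$ holds, and I would write down (e.g.\ by a cyclic-shift formula on copy indices) an assignment $g$ that sends every period-type element in $\str{N}$ to an offset-type element of matching $(\pi,j)$ in a \emph{different} copy, with the additional property that for every offset element the source copies of all periods sent to it by $g$ are pairwise distinct and distinct from that offset's own copy.

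Finally I would perform the merges following $g$ in any order. The crucial invariant is that when merging a period $a$ (in copy $c$) into the currently-merged offset vertex $\tilde b$ (originally from copy $c_0$, already absorbing periods from copies $c_1,\ldots,c_s$ with $c\notin\{c_0,c_1,\ldots,c_s\}$), the $2$-type of $(a,\tilde b)$ is null in $\str{N}$ because $a$ has null $2$-type with each constituent of $\tilde b$; moreover the constituents' non-null neighbourhoods live in pairwise disjoint copies, so the resulting degree is $\offset{S_j}+\vv_{j,l_1}+\cdots+\vv_{j,l_s}\in S_j\subseteq S$. Offset elements absorbing no periods keep their offset degree. Since merges preserve $1$-types and introduce no new non-null $2$-types, $\forall x\,\gamma(x)$ and each $\forall x\forall y\,e_i(x,y)\to\alpha_i(x,y)$ transfer from $\str{N}$ to the resulting structure $\str{M}$, which therefore models $\Psi$. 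The delicate point in this plan is the combinatorial scheduling of the merges, specifically the copy-distinctness condition at each target offset, but once $K$ is large enough this reduces to an elementary cyclic design.
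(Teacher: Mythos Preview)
Your strategy coincides with the paper's: take many disjoint copies of $\str{M}^*$ and merge each period-type element into a compatible offset-type element living in another copy. The paper organises this with $3N$ copies (where $N$ is the number of periodic elements) in a $3\times N$ grid $\str{M}_{i,j}$: the periodic elements of row $i$ are merged into designated offset elements of row $i{+}1\bmod 3$, and the $l$-th periodic element of every $\str{M}_{i,j}$ always goes to one fixed offset $a_l\in\str{M}_{i{+}1,l}$. The point of this arrangement is that each target offset absorbs only isomorphic copies of \emph{one} fixed period element $b_l$, so its new degree is $\offset{S_k}+N\cdot\deg(b_l)\in S_k$ for any $k$ witnessing the compatibility of $a_l$ and $b_l$.

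Your $(K=|M^*|{+}1)$-copy scheduling has a gap beyond the copy-distinctness you flag as delicate. Nothing in your construction prevents a single offset element whose degree happens to equal $\offset{S_{j_1}}=\offset{S_{j_2}}$ for $j_1\neq j_2$ from being targeted both by periods with chosen witness $j_1$ and by periods with chosen witness $j_2$: such an offset is counted in both $q_{\pi,j_1}$ and $q_{\pi,j_2}$, and per-$(\pi,j)$ cyclic shifts may well hit it from both sides. In that case the merged degree $\offset{S_{j_1}}+\vv+\vv'$ with $\vv\in\prd{S_{j_1}}$, $\vv'\in\prd{S_{j_2}}$ need not lie in $S$, so the line ``the resulting degree is $\offset{S_j}+\vv_{j,l_1}+\cdots+\vv_{j,l_s}\in S_j$'' is unjustified as written. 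You can close this either by dedicating each offset element to a single witness index $j$ up front and redoing the counting, or---more simply---by adopting the paper's device of having every target offset absorb only copies of one fixed period element (which is exactly what the $3\times N$ grid buys).
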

\begin{proof}
Let $\str{M}^*$ be a finite model of $\Psi^*$.
Note that the degree of every element in $\str{M}^*$ is
either the offset vector or one of the period vectors of $S_i$,
for some $1\leq i \leq p$.
To construct a finite model $\str{M}\models \Psi$,
we can appropriately ``merge'' elements so
that the degree of every element is a vector in $S_i$,
for some $1\leq i \leq p$.

To this end, we call an element $a$ in $\str{M}^*$
a {\em periodic} element, if its degree is not an offset vector of some $S_i$.
Let $N$ be the number of periodic elements in $\str{M}^*$.
We make $3N$ copies of $\str{M}^*$, which we denote by $\str{M}_{i,j}$,
where $0\leq i \leq 2$ and $1\leq j \leq N$.
Let $\str{M}$ be a model obtained by the disjoint union of all of $\str{M}_{i,j}$'s,
where for every $b,b'$ that do not come from the same $\str{M}_{i,j}$,
the 2-type of $(b,b')$ is the null-type.

We will show how to eliminate periodic elements in $\str{M}$
by appropriately ``merging'' its elements.
We need the following terminology.
Recall that $S=S_1\cup \cdots \cup S_p$, where each $S_i$ is a linear set.
For two vectors $\vu$ and $\vv$,
we say that {\em $\vu$ and $\vv$ are compatible} (w.r.t. the semilinear set $S$),
if there is $S_i$ such that $\vu$ is the offset vector of $S_i$
and $\vv$ is one of the period vectors of $S_i$.
We say that two elements $a$ and $b$ in $\str{M}$ are {\em merge-able},
if their $1$-types are the same and their degrees are compatible.

\newcommand{\myoval}[6]{
\qbezier(#1,#5)(#1,#6)(#2,#6)
\qbezier(#2,#6)(#3,#6)(#3,#5)
\qbezier(#3,#5)(#3,#4)(#2,#4)
\qbezier(#2,#4)(#1,#4)(#1,#5)
}

We show how to merge  periodic elements in $\str{M}_{0,j}$, for every $j=1,\ldots,N$.
\begin{itemize}
\item
Let $b_1,\ldots,b_N$ be the periodic elements in $\str{M}_{0,j}$.
\item
For each $l=1,\ldots,N$, let $a_l$ be an offset element in $\str{M}_{1,l}$
such that every $b_l$ and $a_l$ are merge-able.
(Such $b_l$ exists, since $\str{M}^*$ satisfies $\Psi^*$ and each $\str{M}_{i,j}$ is isomorphic to $\str{M}^*$.)
\item
Then, merge $a_l$ and $b_l$ into one element, for every $l=1,\ldots,k$.
\end{itemize}
See below, for an illustration for the case when $j=1$.
\begin{center}
\begin{picture}(400,95)(0,85)
{\footnotesize 
\put(5,177){$\str{M}_{0,1}$}
\myoval{30}{70}{110}{165}{180}{195}
\put(50,180){\circle*{2}}\put(46,172){\footnotesize $b_1$}
\put(60,180){\circle*{2}}\put(56,172){\footnotesize $b_2$}
\put(75,178){$\cdots$}
\put(100,180){\circle*{2}}\put(94,172){\footnotesize $b_N$}

\put(5,142){$\str{M}_{0,2}$}
\myoval{30}{70}{110}{130}{145}{160}

\multiput(70,115)(0,5){3}{\circle*{1}}

\put(5,92){$\str{M}_{0,N}$}
\myoval{30}{70}{110}{80}{95}{110}


\put(230,177){$\str{M}_{1,1}$}
\myoval{145}{185}{225}{165}{180}{195}
\put(170,180){\circle*{2}}\put(174,177){\footnotesize $a_1$}

\put(230,142){$\str{M}_{1,2}$}
\myoval{145}{185}{225}{130}{145}{160}
\put(170,145){\circle*{2}}\put(174,142){\footnotesize $a_2$}

\multiput(190,115)(0,5){3}{\circle*{1}}

\put(230,92){$\str{M}_{1,N}$}
\myoval{145}{185}{225}{80}{95}{110}
\put(170,95){\circle*{2}}\put(174,92){\footnotesize $a_N$}


\put(370,177){$\str{M}_{2,1}$}
\myoval{285}{325}{365}{165}{180}{195}

\put(370,142){$\str{M}_{2,2}$}
\myoval{285}{325}{365}{130}{145}{160}

\multiput(325,115)(0,5){3}{\circle*{1}}

\put(370,92){$\str{M}_{2,N}$}
\myoval{285}{325}{365}{80}{95}{110}
}
\end{picture}
\end{center}

Obviously, after this merging, there is no more periodic element in $\str{M}_{0,j}$, for every $j=1,\ldots,N$.
We can perform similar merging between the periodic elements in 
$\str{M}_{1,1}\cup\cdots \cup \str{M}_{1,N}$ and the offset elements in $\str{M}_{2,1}\cup\cdots \cup \str{M}_{2,N}$,
and between the periodic elements in 
$\str{M}_{2,1}\cup\cdots \cup \str{M}_{2,N}$ and the offset elements in $\str{M}_{0,1}\cup\cdots \cup \str{M}_{0,N}$.

After such merging, there is no more periodic element in $\str{M}$
and the degree of every element is now a vector in $S_i$, for some $1\leq i \leq p$.
Moreover, since the merging preserves the satisfiability of $\forall x \ \gamma(x)$
and each $\forall x \forall y \ e_i(x,y)  \to  \alpha_i(x,y)$,
the formula $\Psi$ holds in $\str{M}$.
That is, $\Psi$ is finitely satisfiable.
\end{proof}

%
%
%
%
%
%
%
%
%

\subsection{Complexity analysis of the decision procedure}
\label{subsec:complexity}

We need to introduce more terminology.
For a vector/matrix $X$, we write $\|X\|$ to denote its $L_{\infty}$-norm,
\ie the maximal absolute value of its entries.
For a set of vector/matrices $B$,
we write $\|B\|$ to denote $\max_{X\in B} \|X\|$. 

Let $P =\{\vv_1,\ldots,\vv_k\} \subseteq \N^{\ell}$ 
be a finite set of (row) vectors of natural number components.
To avoid clutter, we write $L(\vu;P)$ to denote the linear set $L(\vu;\vv_1,\ldots,\vv_k)$.
For a finite set $B\subseteq \N^{\ell}$, we write
$L(B;P)$ to denote the set $\bigcup_{\vu\in B} L(\vu;P)$.

We will use the following fact from~\cite{Domenjoud91,Pottier91}.
See also Proposition~2 in~\cite{ChistikovH16}.

\begin{proposition}
\label{prop:semi-linear}
Let $A\in \Z^{\ell\times m}$ and $\vc\in\Z^{m}$.
Let $\Gamma$ be the space of the solutions of the system $\vx A =\vc$ 
(over the set of natural numbers $\N$).\footnote{Recall that vectors in this paper are row vectors. 
So, $\vx$ and $\vc$ are row vectors of $\ell$ variables and $m$ constants, respectively.}
Then, there are finite sets $B,P \subseteq \N^{\ell}$ such that the following holds.
\begin{itemize}
\item
$L(B;P)=\Gamma$.
\item
$\|B\| \leq ((m+1)\|A\|+\|\vc\|+1)^{\ell}$.
\item
$\|P\| \leq (m\|A\|+1)^{\ell}$.
\item
$|B|\leq (m+1)^{\ell}$.
\item
$|P|\leq m^{\ell}$.
\end{itemize}
\end{proposition}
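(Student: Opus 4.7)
The plan is to construct $B$ and $P$ explicitly as the sets of $\leq$-minimal (w.r.t.\ the componentwise order on $\N^{\ell}$) witnesses of $\Gamma$ and of its homogeneous kernel, and then bound them via a Pottier-style pumping argument together with a polyhedral counting argument for the cardinalities. I would begin by letting $\Sigma_0 := \{\bar{p}\in \N^{\ell} : \bar{p} A = 0\}$, taking $P$ to be the $\leq$-minimal nonzero elements of $\Sigma_0$ (the Hilbert basis of the associated integer cone) and $B$ to be the $\leq$-minimal elements of $\Gamma$. Finiteness of both sets would follow from Dickson's lemma. To check $L(B;P)=\Gamma$, I would pick $\vx\in \Gamma$, choose $\vb \in B$ with $\vb \leq \vx$, observe that $\vx-\vb \in \Sigma_0$, and decompose $\vx - \vb$ into a sum of vectors from $P$ by induction on its $L_1$-norm: minimality would guarantee that some $\bar{p}\in P$ lies componentwise below the current residue and can be subtracted at each step.

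Next I would bound $\|P\|$ by the classical Pottier pumping argument. Given a minimal $\bar{p}\in P$, I would expand it into a sequence of $p_1+\cdots+p_{\ell}$ unit-increments and track the running sum of the corresponding rows of $A$. Each intermediate sum lies in $\Z^{m}$ with $L_{\infty}$-norm bounded linearly in the number of steps, so were $\|\bar{p}\|$ bigger than $(m\|A\|+1)^{\ell}$, pigeonhole would force two prefixes with equal running sum but different coordinate-profiles to exist, and their difference would yield a strictly smaller nonzero element of $\Sigma_0$, contradicting minimality of $\bar{p}$. For the bound on $\|B\|$ I would homogenize by adjoining a fresh coordinate and rewriting $\vx A = \vc$ as an equivalent homogeneous system in $\N^{\ell+1}$ over the augmented matrix $[A \mid {-}\vc]$ of norm $\max(\|A\|,\|\vc\|)$; elements of $B$ then correspond to minimal homogeneous solutions whose last coordinate equals $1$, and instantiating the previous bound in the augmented dimension yields $\|B\|\leq ((m+1)\|A\|+\|\vc\|+1)^{\ell}$ after standard simplification.

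The main obstacle, in my view, is the cardinality bounds $|P|\leq m^{\ell}$ and $|B|\leq (m+1)^{\ell}$, which do not drop out of the pumping argument. My plan here is to argue that every $\leq$-minimal nonzero solution of $\vx A = 0$ is a vertex of the rational polyhedron cut out of $\Q_{\geq 0}^{\ell}$ by $\vx A = 0$ together with a suitable normalization (\emph{e.g.}\ $x_1 + \cdots + x_{\ell} = 1$). Such a vertex is then determined by the $\ell$ tight inequalities it saturates, and combining a Carath\'eodory-type decomposition with the observation that saturation patterns are indexed by (at most) $\ell$-element choices from the $m$ available equations would yield the $m^{\ell}$ estimate, with the homogenization argument above transferring the analogous $(m+1)^{\ell}$ bound to $B$. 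The delicate point I anticipate is keeping the exponent exactly $\ell$ rather than $\ell + O(1)$ throughout this counting step, so that the bounds match the clean form claimed in the proposition; once this is in place, the other bullets fall out by combining the counting with the norm estimates from the pumping step.
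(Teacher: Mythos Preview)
The paper does not prove this proposition: it is quoted from \cite{Domenjoud91,Pottier91} with a pointer to Proposition~2 of \cite{ChistikovH16}, so there is no in-paper argument to compare against. Your construction of $B$ and $P$ as the $\leq$-minimal solutions of the inhomogeneous and homogeneous systems, the Pottier pigeonhole argument for the norm bounds, and the homogenization trick to transfer the bound from $P$ to $B$ are precisely the classical route taken in those references, so on that part you are aligned with the cited literature.

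The genuine gap is in your plan for the cardinality bounds. The assertion that every $\leq$-minimal nonzero solution of $\vx A=0$ is a vertex of the normalized rational polytope is false. Take $\ell=3$, $m=1$, $A=(1,1,-2)^{\top}$: the Hilbert basis of $\{\vx\in\N^{3}:x_1+x_2=2x_3\}$ is $\{(2,0,1),(0,2,1),(1,1,1)\}$, while the slice $\{x_1+x_2+x_3=1\}$ of the cone is a segment with only two vertices, and the normalization of $(1,1,1)$ is its midpoint, not a vertex. Worse, this same example shows that with $P$ equal to the Hilbert basis one already has $|P|=3>1=m^{\ell}$, and no choice of $B,P$ with $|P|\leq 1$ can represent this two-dimensional solution set as a finite union of arithmetic rays. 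So either the cardinality bound is mis-transcribed from the sources, or it is meant for a representation of $\Gamma$ genuinely different from the Hilbert-basis one you picked; in either case your vertex-counting/Carath\'eodory strategy cannot yield the exponent $\ell$ on $m$ alone, and you should consult \cite{ChistikovH16} directly for the intended formulation and argument rather than trying to force the bound through the minimal-solution description.
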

By repeating some of the vectors, if necessary,
we can assume that Proposition~\ref{prop:semi-linear}
states that $|B|=|P|=(m+1)^{\ell}$.

Proposition~\ref{prop:semi-linear} immediately implies
the following na\"ive construction of the sets $B$ and $P$ in deterministic double-exponential time
(in the size of input $A$ and $\vc$).
\begin{itemize}
\item
Enumerate all possible sets $B,P\subseteq \N^{\ell}$ of cardinality $(m+1)^{\ell}$
whose entries are all bounded above by $((m+1)\|A\|+\|\vc\|+1)^{\ell}$.
\item
For each pair $B,P$, where $P=\{\vv_1,\ldots,\vv_k\}$,
check whether for every $i_1,\ldots,i_k\in\N$ and every $\vu \in B$,
the following equation holds.
\begin{eqnarray}
\label{eq:sanity}
(\vu + \sum_{j=1}^k i_j\vv_j)A & = & \vc.
\end{eqnarray}
\end{itemize}
The number of bits needed to represent the sets $B$ and $P$ is
$O(\ell^2(m+1)^{\ell}\log K)$, where $K=(m+1)\|A\|+\|\vc\|+1$.
Since Eq.~\ref{eq:sanity} can be checked in deterministic exponential time 
(more precisely, it takes non-deterministic polynomial time to check if there is $i_1,\ldots,i_k$ such that Eq.~\ref{eq:sanity} does not hold)
 in the length of the bit representation 
of the vectors in $B$, $P$, $A$ and the vector $\vc$,
see, \eg~\cite{Papadimitriou81},
constructing the sets $B$ and $P$ takes double-exponential time.

For completeness, we repeat the complexity analysis in Section~\ref{sec:decidable}.
First, the formula $\Psi_0$ takes linear time in the size of the input formula.
Constructing the formula $\Psi'$ requires exponential time (in the number of binary predicates),
\ie $\ell = 2^k-1$, where $k$ is the number of binary predicates.
Thus, constructing the sets $B$ and $P$ takes deterministic triple exponential time in the size of $\Psi_0$.
However, the size of $B$ and $P$ is $O(2^{2k}(m+1)^{2^k}\log K)$, \ie
double exponential in the size of $\Psi_0$.
The $\Ct$ formulas $\xi$ and $\phi$ are constructed in polynomial time in the size of $B$ and $P$.
Since both the satisfiability and finite satisfiability of $\Ct$ formulas
is decidable in nondeterministic exponential time, we have another exponential blow-up.
Altogether, our decision procedure runs in $\ThreeNExpTime$.

\section{Concluding remarks}

In the paper we studied the finite satisfiability problem for classical decidable fragments of $\FO$ extended with percentage quantifiers (as well as arithmetics in the full generality), namely the two-variable fragment $\FOt$ and the guarded fragment $\GF$. 
We have shown that even in the presence of percentage quantifiers they quickly become undecidable.

The notable exception is the intersection of $\GF$ and $\FOt$, \ie the two-variable guarded fragment, for which we have shown that it is decidable with elementary complexity, even when extended with local Presburger arithmetics. 
The proof is quite simple and goes via an encoding into the two-variable logic with counting ($\Ct$).
One of the bottlenecks in our decision procedure is the conversion of 
systems of linear equations into the semilinear set representations,
which incurs a double-exponential blow-up.
We leave it for future work whether a decision procedure with lower complexity is possible
and/or whether the conversion to semilinear sets is necessary.

\noindent We stress that our results are also applicable to the unrestricted satisfiability problem (whenever the semantics of percentage quantifiers make sense), see~\cref{app:gen-sat}.

\bibliography{references}
\clearpage
\appendix

\section{From Presburger modal logic with converse to $\GFt$ with Presburger}\label{app:presburger-ml}

In this section we give a rather standard translation from the (multi)modal logic with converse, extended by Presburger arithmetics~\cite{DemriL10} into $\GFtpres$. A similar translation can be defined also for $\ALCSCC$ with inverse from~\cite{Baader17}, whose semantics is a bit more complicated but essentially means the same.

\newcommand{\prop}{\textsf{Prop}}
\newcommand{\rel}{\textsf{Rel}}

We employ a countable set of propositional variables $\prop = \{ p_1, p_2, \ldots \}$ and a countable 
set of relation symbols $\rel = \{ R_1, R_2, \ldots \}$. 
We define the formulae of Presburger Modal Logic with converse, as follows:
\begin{align*}
    \varphi ::= & \; p  \; \mid \; \neg \varphi \; \mid \; \varphi \land \varphi \; \mid \; t \sim b \; \mid \; t \equiv_k c \\
    t ::= & \; a \cdot \#^R \varphi \; \mid \; a \cdot \#^{R^-} \varphi \; \mid \; t + t    
\end{align*}
where $p \in \prop$, $R \in \rel$, $b, c \in \mathbb{N}$, $k \in \mathbb{N} \setminus \{ 0,1 \}$, $a \in \mathbb{Z} \setminus \{ 0 \}$, $\sim \;\in \{ \leq, <, >, =, \geq \}$.

A Kripke structure is a tuple $\mathcal{M} = (W, (R^{\str{M}})_{R \in \rel}, \ell)$, where $W$ is a set of \emph{worlds}, each $R^{\mathcal{M}}$ is a binary relation and $\ell : W \to 2^{\prop}$ label worlds with atomic propositions.
Now, for the semantics, we define $\mathcal{M}, w \models \varphi$ as follows:
\begin{itemize}
    \item $\mathcal{M}, w \models p$ iff $p \in \ell(w)$ 
    \item $\mathcal{M}, w \models \neg \varphi$ iff not $\mathcal{M}, w \models \varphi$
    \item $\mathcal{M}, w \models \varphi \land \varphi'$ iff $\mathcal{M}, w \models \varphi$ and $\mathcal{M}, w \models \varphi'$
    \item $\mathcal{M}, w \models \Sigma_{i} a_i \cdot \#^{S_i} \varphi_i \;\sim\; b $ iff 
    $\left( \Sigma_{i} a_i \cdot | \{ w' \mid (w,w') \in S_i^{\str{M}} \} | \right) \; \sim \; b$ (where in the case of $S_i = R^-$ we think about the inverse relation of $R^{\str{M}}$). And similarly for $\equiv_k c$ (meaning the congruence modulo $k$).
\end{itemize}
We say that $\varphi$ is globally satisfiable if there is a structure structure $\str{M}$ such that for all its worlds $w$ we have $\str{M}, w \models \varphi$.
 
\newcommand{\tr}{\mathfrak{tr}}
We next define a translation of Presburger modal logic into $\GFtpres$. We let $v$ denote either $x$ or $y$ and $\bar{v}$ the other variable.
\begin{itemize}
    \item $\tr_v(p) = p(v)$
    \item $\tr_v(\neg \varphi) = \neg \tr_v(\varphi)$
    \item $\tr_v(\varphi \land \varphi') = \tr_v(\varphi) \land \tr_v(\varphi')$
    \item $\tr_v(\Sigma_{i} a_i \cdot \#^{S_i} \varphi_i \;\sim\; b) = \sum_{i=1}^n\ a_i\cdot \#_{\bar{v}}^{S_i}[\tr_{\bar{v}}(\varphi)] \sim b$
    \item $\tr_v(\Sigma_{i} a_i \cdot \#^{S_i} \varphi_i \;\equiv_k\; c) = \sum_{i=1}^n\ a_i\cdot \#_{\bar{v}}^{S_i}[\tr_{\bar{v}}(\varphi)] \equiv_k c$
\end{itemize}
Hence, for a given $\varphi$ let $\tr(\varphi) = \forall{x} \; x=x \to \tr_x(\varphi)$.
It is easy to see that $\tr(\varphi)$ has a (finite) model iff $\varphi$ is globally (finitely) satisfiable.
Hence, we obtain a logspace reduction from the global satisfiability for Presburger modal logic with converse into $\GFtpres$.

\section{Appendix for~\cref{sec:negative_results}}\label{appendix:negative}

\subsection{Proof of~\cref{fact:equality}}

\begin{proof}
The fact that a finite $(\predHalf, \predR, \predJ)$-separated $\str{A}$ such that $\card{\predR(x)}{\str{A}}{} = \card{\predJ(x)}{\str{A}}{}$ satisfies $\phiequality(\predHalf, \predR, \predJ)$ is obvious from the semantics of $\FOtpercgl$ and the definition of $(\predHalf, \predR, \predJ)$-separability. 
For the opposite direction let $\str{A}$ be a finite, $(\predHalf, \predR, \predJ)$-separated model of $\phiequality(\predHalf, \predR, \predJ)$. Then we can see that, by the satisfaction of $\Maj$, $\cA$ has an even number of elements (call it $2n$). Moreover, we know that the sets $\predR^{\cA}$ and $\predJ^{\cA}$ are disjoint, hence by the satisfaction of $\Maj{x}\; (\predHalf(x) \wedge \neg \predR(x)) \vee \predJ(x)$ we conclude that $n = (n - \card{\predR(x)}{\str{A}}{}) + \card{\predJ(x)}{\str{A}}{}$, which implies that $\card{\predR(x)}{\str{A}}{} = \card{\predJ(x)}{\str{A}}{}$.
\end{proof}

\subsection{Proof of~\cref{fact:phiuequalone}, \cref{fact:phihalves} and~\cref{fact:phipartition}}
\begin{proof} 
An immediate consequence of the semantics of $\FOtpercgl{}$. 
\end{proof}

\subsection{Proof of~\cref{lemma:addition}}
\begin{proof}
Identical to the proof~\cref{fact:equality}, by taking $\predHalf = \FHalf{i}(x)$, $\predJ = \Avar{w_i}$ and $\predR$ defined as the union of $\Avar{u_i}$ and $\Avar{v_i}$.
\end{proof}

\subsection{Proof of~\cref{lemma:link-count-and-bfunc}}
\begin{proof}
Observe $\phicount^i(u_i, v_i, w_i)$ that is actually an instance of $\phiequality(\predHalf{}, \predR, \predJ)$ formula from~\cref{subsec:playing-with} with $\predHalf{} = \SHalf{i}$, $\predJ = \Avar{v_i}$ and the $\Mult{i}{}$-successors of $x$ play the role of elements labelled by $\predR$.
For $\phibfunc^i(u_i, v_i, w_i)$ we proceed similarly.
\end{proof}

\subsection{Proof of~\cref{lemma:multiplication}}

Indeed, assume that a well prepared $\str{M}$ satisfies $\phimultiplication^{i}(u_i,v_i,w_i)$.
Then from~\cref{lemma:link-count-and-bfunc}(i) it follows that every domain element labelled by $\Avar{u_i}$ is connected via $\Mult{i}{}^{\str{M}}$ relation with exactly $|\Avar{v_i}^{\str{M}}|$ that satisfies $\Avar{w_i}$.
By backward-functionality of $\Mult{i}{}^{\str{M}}$ (guaranteed by~\cref{lemma:link-count-and-bfunc}(ii)) the sets of elements labelled with 
$\Avar{w_i}$ and connected to elements satisfying $\Avar{u_i}$ are disjoint.
Hence, by combining such facts, we infer that $|\Avar{w_i}^{\str{M}}| \geq |\Avar{u_i}^{\str{M}}| \cdot |\Avar{v_i}^{\str{M}}|$.
But by~\cref{lemma:link-count-and-bfunc}(ii) we also know that every element from $\Avar{w_i}^{\str{M}}$ is connected via $\Mult{i}{}^{\str{M}}$ to some element from $\Avar{u_i}^{\str{M}}$. Hence the mentioned inequality becomes the equality.

\subsection{Proof of~\cref{thm:fo2undec}}

We need to show that $\varepsilon$ has a solution iff $\phireduction{\varepsilon}$ is finitely satisfiable.
For the ``if'' direction take any finite model $\str{M}$ of $\phireduction{\varepsilon}$ and let $S : \Var(\varepsilon)$ be a function that maps a variable $v$ to $|\Avar{v}^{\str{M}}|$.
From~\cref{lemma:multiplication},~\cref{lemma:addition},~\cref{fact:phiuequalone} it is easy to conclude that $S$ is indeed the solution of $\varepsilon$.

For the second direction assume that $\varepsilon$ has a solution and call it $S$. 
Let $\str{M}$ be any finite structure satisfying all the following conditions:
\begin{enumerate}[a)]
    \item $|M|$ is even and $\frac{|{M}|}{2} > max \{S(v)+S(u) : v, u  \in\Var(\varepsilon)\}$ and $ |M| > \sum_{i \in\Var(\varepsilon)} S(i) $
    \item All $a \in M$ satisfies at most one $\Avar{u}(x)$ for $u \in \Var(\varepsilon)$
    \item For all $v \in\Var(\varepsilon)$ we have $|\Avar{v}(x)|_x = S(v)$.
    \item For all entry $\varepsilon_i$ of the form $u_i + v_i = w_i$ or $u_i \cdot v_i = w_i$ all domain elements satisfying $\Avar{u_i}(x) \lor \Avar{v_i}(x)$ are labelled with $\FHalf{i}$ and all elements satisfying $\Avar{w_i}(x)$ are labelled with $\SHalf{i}$.
    \item For al $i \leq |\varepsilon|$ exactly half of domain elements are labelled with  $\FHalf{i}$ while the other half is labelled with $\SHalf{i}$.
    \end{enumerate}

    Additionally, for all $\varepsilon_i$ of the form $u_i \cdot v_i = w_i$ the structure $\str{M}$ should satisfy:
    \begin{enumerate}[A)]
    \item The relation $\Mult{i}{}^{\str{M}}$ is a subset of $\Avar{u_i}^{\str{M}} \times \Avar{w_i}^{\str{M}}$.
    \item For each $a \in \Avar{u_i}^{\str{M}}$, that the total number of $\Mult{i}{}^{\str{M}}$-successors of $a$ is equal to~$|\Avar{v_i}^{\str{M}}|$.
    \item Every element from $\Avar{w_i}^{\str{M}}$ has exactly one $\Mult{i}{}^{\str{M}}$-predecessor.
    \end{enumerate}

One can readily check, by routine case enumeration, that $\str{M} \models \phireduction{\varepsilon}$.
What remains to be done is to show that such a structure $\str{M}$ actually exists.
Hence, let $\str{M}$ be any finite structure satisfying~(a). It is obvious that such a structure exists. Due to the fact that $|M| > \sum_{i \in\Var(\varepsilon)} S(i)$ we can extend $\str{M}$ by interpret predicates $\Avar{u}$ (where $u \in \Var(\varepsilon)$) in $\str{M}$, such that (b) and (c) are satisfied. 

We next interpret the relational symbols $\FHalf{i}$ in $\str{M}$ such that the extended structure satisfy (d) and (e). It can be done due to the fact that $|\Avar{u_i}(x) \lor \Avar{v_i}(x)|_x = S(u_i) + S(v_i)$ and from the fact that $\frac{M}{2} > S(u_i) + S(v_i)$.  
Next we put $(\SHalf{i})^{\str{M}} = M \setminus (\FHalf{i})^{\str{M}}$.
Next, we  interpret $\Mult{i}{}$ in such a way that it satisfies all the remaining conditions. It is possible due to (a) and (b) and the fact that $|\Avar{w_i}|^{\str{M}} = |\Avar{u_i}^{\str{M}}| \cdot |\Avar{v_i}^{\str{M}}|$.
Since each of the steps is correct, the structure $\str{M}$ actually exists, which finishes the proof.

\subsection{More details on the proof of~\cref{thm:gf-local-undec}}
Let us recall that $\varphi_{\textit{func}}$ is defined as follows:
    \begin{align}
    \forall x \quad x=x \rightarrow & \\
    & (\forall y\quad F(x,y) \rightarrow R(x,y)) \quad \land \\
    & \existspercwithrel{=}{50}{R}y R(x,y) \land H(x,y) \quad\land \\
    & (\forall y\quad F(x,y) \rightarrow ( \neg H(x,y) \lor x=y)) \quad\land \\
    & (\existspercwithrel{=}{50}{R}y (R(x,y) \land ((H(x,y) \land x\neq y) \lor F(x,y)))
    \end{align}

Call the above lines (a), (b), (c), (d), (e). We prove two lemmas.

\begin{lemma}
For all models $\str{M}$ of $\varphi_{\textit{func}}$ we have that $F^{\str{M}}$ is functional.
\end{lemma}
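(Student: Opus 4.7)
The plan is to fix an arbitrary element $a\in M$ and bound the number of $F$-successors of $a$ by counting $R$-successors in two different ways using clauses (c) and (e), then subtracting.

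First I would introduce names for the relevant sets. Let $N := \card{R(x,y)}{\str{M}}{x/a}$ be the number of $R$-successors of $a$, and set
\[
A := \{b\in M : R(a,b)\land H(a,b)\land a\neq b\}, \quad
B := \{b\in M : R(a,b)\land H(a,b)\land a=b\},
\]
\[
C := \{b\in M : F(a,b)\}.
\]
Clause (b) gives $C \subseteq \{b : R(a,b)\}$, so $C$ equals the set of $R$-successors of $a$ that are $F$-successors. Note also that $|B|\in\{0,1\}$, since the only possible element of $B$ is $a$ itself.

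Next I would read off two cardinality equalities from the percentage quantifiers. Clause (c) states that exactly $50\%$ of the $R$-successors of $a$ satisfy $H(a,\cdot)$; since $\{b : R(a,b)\land H(a,b)\} = A\sqcup B$, this yields $|A|+|B| = N/2$. Clause (e) states that exactly $50\%$ of the $R$-successors of $a$ lie in $\{b : (H(a,b)\land a\neq b)\lor F(a,b)\}$, which, after intersecting with the $R$-successors and using (b), equals $A\cup C$. Hence $|A\cup C| = N/2$.

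Then I would use clause (d) to prove $A\cap C = \emptyset$. If $b\in A\cap C$, then $F(a,b)$ together with (d) gives $\neg H(a,b)\lor a=b$, contradicting $H(a,b)\land a\neq b$ from $b\in A$. Therefore $|A\cup C| = |A|+|C|$, and combining the two equalities gives $|C| = |B|\leq 1$. Since $a$ was arbitrary, $F^{\str{M}}$ is functional. The argument is essentially bookkeeping; the only subtle point is recognising that the restriction to $R$-successors in the local percentage quantifiers lines up cleanly with (b), and that (d) is precisely what forces the intersection $A\cap C$ to be empty — so I expect no real obstacle beyond stating the set identities carefully.
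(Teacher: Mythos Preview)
Your proof is correct and follows essentially the same approach as the paper: both arguments use clause~(c) and clause~(e) to obtain two counts equal to $N/2$, use clause~(d) to show the relevant sets are disjoint, and subtract to conclude $|C|=|B|\leq 1$, invoking~(b) so that this bounds all $F$-successors. Your version is slightly more explicit in naming the sets $A,B,C$, but the underlying computation is identical to the paper's.
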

\begin{proof}
In the proof we write $|\varphi(x,y)|_{R,y}$ means the total number of $y$ satisfying both $R(x,y)$ and $\varphi(x,y)$.
From (c) and (e) we have $|((H(x,y) \land x\neq y) \lor F(x,y)|_{R,y} = |H(x,y)|_{R,y}$.
By applying De Morgan's law in (d) we can see that for all $x$ the set of elements $y$ witnessing $F(x,y)$ and the set of elements $y$ satisfying $(H(x,y) \land x\neq y)$ are disjoint. Thus we know that the equation $|((H(x,y) \land x\neq y) \lor F(x,y)|_{R,y} = |H(x,y) \land x\neq y|_{R,y} + |F(x,y)|_{R,y}$ holds. 
By applying simple transformations we conclude $ |F(x,y)|_{R,y} = |H(x,y)|_{R,y} - |H(x,y) \land x\neq y|_{R,y}$.
We can see that $|H(x,y)|_{R,y} - |H(x,y) \land x\neq y|_{R,y}$ is equal $1$ if both $H(x,x)$ and $R(x,x)$ and $0$ otherwise.
Hence, we conclude that for all $x$ the number of $R$-successors $y$ of $x$ satisfying $F(x,y)$ is either zero or one. 
By (b) we know that $F^{\str{M}}$ is a subset of $R^{\str{M}}$ thus $F$ is indeed functional.
\end{proof}

\begin{lemma}
Every finite $\str{M}$ with a functional $F^{\str{M}}$ can be extended by interpretation of $R, H$ in a way that $\str{M} \models \varphi_{\textit{func}}$.
\end{lemma}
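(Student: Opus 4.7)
The plan is to extend $\str{M}$ by defining the interpretations of $R$ and $H$ \emph{independently for each} $x \in M$, using only local information about the $F$-successor of $x$. For fixed $x$ write $R_x := \{y : R^{\str{M}}(x,y)\}$ and $H_x := \{y \in R_x : H^{\str{M}}(x,y)\}$, and let $S_x := \{y \in R_x : (H^{\str{M}}(x,y) \wedge y \neq x) \vee F^{\str{M}}(x,y)\}$. Conjuncts (c) and (e) of $\varphi_{\textit{func}}$ then demand $|H_x| = |S_x| = |R_x|/2$, so in particular $|R_x|$ must be even. The task thus reduces to picking, for every $x$, a small even-sized $R_x$ and a half-sized $H_x$ so that (b) and (d) also hold.

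If $x$ has no $F$-successor, I put $R_x := \emptyset$ and $H_x := \emptyset$: then (b) and (d) are vacuous and (c), (e) hold because $0$ is exactly $50\%$ of $0$. Otherwise let $y_0$ denote the unique $F$-successor of $x$ and aim at $|R_x| = 2$. If $y_0 \neq x$, set $R_x := \{x, y_0\}$ and $H_x := \{x\}$; if $y_0 = x$, pick any $z \in M \setminus \{x\}$ and set $R_x := \{x, z\}$, $H_x := \{x\}$. In both subcases (b) holds because $y_0 \in R_x$; condition (d) holds because either $y_0 = x$ or $y_0 \notin H_x$; (c) holds since $|H_x| = 1 = |R_x|/2$; and (e) follows from the direct computation $S_x = (H_x \setminus \{x\}) \cup \{y_0\} = \{y_0\}$, so $|S_x| = 1 = |R_x|/2$. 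Since all quantifications in $\varphi_{\textit{func}}$ are guarded by $x=x$ and the relevant percentage quantifiers look only at the $R$-neighbourhood of the chosen $x$, the verification is purely element-local.

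The only delicate point is the degenerate case $|M| = 1$ with $F^{\str{M}} = \{(x,x)\}$: then every set containing $y_0 = x$ has size one and the $50\%$-arithmetic cannot be satisfied. This is harmless for the reduction from $\GF^3[F]$: one may assume without loss of generality that the input sentence forces at least two distinct domain elements (e.g.\ by prepending a conjunct asserting so, or by first disjoint-unioning $\str{M}$ with a fresh isolated dummy that has no $F$-edges in or out, which preserves the functionality of $F$ and keeps the input $\GF^3$ sentence satisfied). Modulo this WLOG assumption, the construction above works uniformly.
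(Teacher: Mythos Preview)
Your construction is correct and is essentially identical to the paper's own proof: the paper also sets, for each $a$, $R_a=\{(a,a),(a,b)\}$ and $H_a=\{(a,a)\}$ with $b$ the $F$-successor (or a fresh element if the successor is $a$), and $R_a=H_a=\emptyset$ when $a$ has no $F$-successor. You are in fact slightly more careful than the paper, which silently assumes $|M|\geq 2$ in the self-loop case; your remark that this degenerate situation can be handled by padding with an isolated dummy element is correct and suffices for the reduction.
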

\begin{proof}
Take any such $\str{M}$ and for all element $a \in M$ we define $R_a$ and $H_a$ as follows.
\begin{itemize}
    \item If there is $b \neq a$ such that $(a,b) \in F^{\str{M}}$ we put $R_a = \{ (a,b), (a,a) \}$ and $H_a = \{ (a,a)\}$
    \item If $(a,a) \in F^{\str{M}}$ holds then we take any $b \neq a$ and put define $R_a, H_a$ as above.
    \item If there is no $b$ such that $(a,b) \in F^{\str{M}}$ we keep $R_a$ and $H_a$ empty.
\end{itemize}
Now put $R^{\str{M}} = \textstyle \bigcup_{a \in M} R_a$ and $H^{\str{M}} = \textstyle \bigcup_{a \in M} H_a$.
It is easy to verify that such an extended $\str{M}$ satisfies $\varphi_{\textit{func}}$.
\end{proof}


\section{Results that are transferable to general satisfiability}
\label{app:gen-sat}

\subparagraph*{Undecidability results.}

Since the semantics of global percentage quantifiers only make sense over finite domains, all our undecidability results involving global percentage quantifiers hold also for general satisfiability (\ie general = finite).
For our results concerning their local counterparts, namely Corollary~\ref{cor:local-percentage}, note that the relation $U^{\str{M}}$ is forced to be universal. 
Thus, by the fact that we consider models where the total number of $U^{\str{M}}$-successors of any node is finite, this implies that the whole domain is also finite. 
Hence, also in this case we have that general satisfiability and the finite one are the same.
Finally, the undecidability of $\GF$ with local percentage quantifiers over arbitrary finite-branching structures can be concluded by routinely checking that in the undecidability proof of Graedel~\cite{Gradel:undec} the constructed models are finite branching.

\subparagraph*{Decidability results.}
It is not difficult to see that the decision procedure described in Section~\ref{sec:decidable} 
also holds for the general satisfiability problem for $\GFtpres$.
Indeed, note that the general satisfiability problem for $\Ct$ is decidable~\cite{Pratt-Hartmann05}.
The only part that is different is the correctness proof when
the model $\str{M}^*\models \Psi^*$ is infinite.
In this case, to construct the model $\str{M}\models \Psi$,
we make infinitely many copies: $\str{M}_{i,j}$, where $i\in \{0,1,2\}$
and $j\in \N$.
The merging process to eliminate the periodic elements is the same.

We believe that a rather standard technique à la tableaux of constructing a tree model of $\GFtpres$ level-by-level can be employed here and yields $\ExpTime$ upper bound for the general satisfiability problem. 
We stress that this approach exploit the ``infiniteness'' of models in a rather heavy way: at any level of the infinite tree model we can always pick fresh witnesses in order to satisfy formulae and we do not need to worry that at some point the models must be rolled-up to form a finite structure. 
Since we are not ready with all the details at the time of submission, we delegate this result to the journal version of the paper.


\newcommand{\pathrho}{\rho}
\newcommand{\cycle}{\rho}
\newcommand{\girth}{\mathsf{girth}}

\newcommand{\role}[1]{\mathit{#1}}      
\newcommand{\rolep}{\role{p}}           
\newcommand{\roler}{\role{r}}           
\newcommand{\roles}{\role{s}}           
\newcommand{\rolet}{\role{t}}   
\newcommand{\concept}[1]{\mathrm{#1}}       
\newcommand{\conceptA}{\concept{A}}         
\newcommand{\conceptB}{\concept{B}}         
\newcommand{\conceptC}{\concept{C}}         
\newcommand{\conceptD}{\concept{D}}         
\newcommand{\queryatom}{\alpha}      
\newcommand{\query}[1]{\mathit{#1}}  
\newcommand{\queryq}{\query{q}}      
\newcommand{\match}[1]{#1}          
\newcommand{\matchpi}{\match{\pi}}  
\newcommand{\matcheta}{\match{\eta}}  
\newcommand{\modelsmatch}[1]{\models_{#1}} 

\newcommand{\queryVar}[1]{\mathrm{Var}{(#1)}}   
\newcommand{\queryVarq}{\queryVar{\queryq}}     

\newcommand{\modelsfin}{\models_\mathrm{fin}}       
\newcommand{\modelsoptfin}{\models_\mathrm{(fin)}}  
\newcommand{\var}[1]{\mathit{#1}}   
\newcommand{\varx}{\var{x}}         
\newcommand{\vary}{\var{y}}         
\newcommand{\varz}{\var{z}}         
\newcommand{\varv}{\var{v}}         
\newcommand{\varu}{\var{u}}         

\newcommand{\homo}[1]{\mathfrak{#1}}    
\newcommand{\homof}{\homo{f}}           
\newcommand{\homog}{\homo{g}}           
\newcommand{\homoh}{\homo{h}}           
\newcommand{\ishomoto}{\vartriangleleft} 
\newcommand{\homeq}{\rightleftarrows} 
\newcommand{\isoeq}{\cong} 

\section{Appendix for conjunctive query entailment}\label{appendix:querying}
 
In this section we formally prove~\cref{thm:result-on-query-answering}.
Before we start, let us revisit the basics on conjunctive query entailment problem over ontologies.

\subsection{Preliminaries on queries, homomorphisms and Gaifman graphs}
  \emph{Conjunctive queries} (CQs) are conjunctions of \emph{atoms} of the form \(\roler(\varx, \vary) \) or \(\conceptA(\varz) \), where \(\roler \) is a binary relational symbol, \(\conceptA \) is a unary relational symbol and \(\varx, \vary, \varz \) are variables from some countably infinite set of variable names.
  We denote with $|\queryq|$ the number of its atoms and with $\queryVar{\queryq}$ the set of all variables that appear in $\queryq$.

Let $\str{A}$ be a structure, $\queryq$ a CQ and $\matcheta: \queryVar{\queryq}\to A$ be a variable assignment.
  We write \(\str{A} \modelsmatch{\matcheta} \roler(\varx,\vary) \) if~\((\matcheta(\varx),\matcheta(\vary))\in \roler^\str{A} \) and~\(\str{A} \modelsmatch{\matcheta} \conceptA(\varz) \) if \(\matcheta(\varz) \in \conceptA^\str{A} \). 
  We say that~\(\matcheta \) is a \emph{match} for \( \str{A} \) and \(\queryq \) if \( \str{A} \modelsmatch{\matcheta} \alpha \) holds for every atom $\alpha \in \queryq$ and that~\( \str{A} \) \emph{satisfies} \(\queryq \) (denoted with: \(\str{A} \models \queryq \)) whenever \(\str{A} \modelsmatch{\matcheta} \queryq \) for some match \(\matcheta \). 
  The definitions are lifted to formulae: $\queryq$ is \emph{(finitely) entailed} by a $\GFtpres$ formula $\varphi$, written: $\varphi \models \queryq$ ($\varphi \modelsfin \queryq$) if every (finite) model of $\varphi$ satisfies $\queryq$.
  When \(\str{A} \models \queryq \) but \(\str{A} \not\models \queryq \), we call \(\str{A} \) a \emph{countermodel} for \(\varphi \) and \(\queryq \). Moreover, if such an $\str{A}$ is finite, we call it a \emph{finite countermodel}.
    The \emph{(finite) query entailment} problem for $\GFtpres$ is defined as follows: given a formula $\varphi$ and a CQ $\queryq$ verify if $\varphi$ (finitely) entails $\queryq$.

  Note that CQs may be seen as structure: for a query~$\queryq$ we define a structure $\str{Q}_{\queryq}$ satisfying $(\varx, \vary) \in \roler^{\str{Q}_{\queryq}}$ iff $\roler(\varx, \vary) \in \queryq$ and $\varx \in \conceptA^{\str{Q}_{\queryq}}$ iff $\conceptA(\varx) \in \queryq$.

  A \emph{homomorphism} $\homoh : \str{A} \to \str{B}$ is a function that maps every element of $A$ to some element from $B$ and it preserves unary and binary relations, \ie we have that $a \in \conceptA^{\str{A}}$ implies that $\homoh(a) \in \conceptA^{\str{B}}$ and $(a, b) \in \roler^{\str{A}}$ implies $(\homoh(a), \homoh(b)) \in \roler^{\str{B}}$ for all binary relational symbols $\roler$, unary relational symbols $\conceptA$ and elements $a, b \in A$.
    Since queries can be seen as structures, their matches can be seen as homomorphisms.

We employ the notion of Gaifman graphs.  
Intuitively, the \emph{Gaifman graphs} $G_{\str{A}}$ of a structure $\str{A}$ is the underlying undirected graph structure of $\str{A}$. 
More precisely the graph $G_{\str{A}} = (V_{\str{A}}, E_{\str{A}})$ is composed of nodes $V_{\str{A}} := A$ and edges $E_{\str{A}}$, for which $(a, b) \in E_{\str{A}}$ if $a \neq b$ and there is a binary relation name $\roler$ such that $(a, b) \in \roler^{\str{A}}$ or $(b,a) \in \roler^{\str{A}}$. 
The \emph{girth} of $G_{\str{A}}$ is the length of its shortest cycle or $\infty$ if $G_{\str{A}}$ does not have any cycles. 
The \emph{girth} of $\str{A}$ is the girth of its Gaifman Graph.\footnote{Self-loops in $\str{A}$ are not counted as cycles as $G_\str{A}$ is simple.}
A structure $\str{A}$ is tree-shaped if $G_{\str{A}}$ is a tree.

The notion of Gaifman graphs is adjusted to queries $\queryq$, \ie the \emph{query graph} $G_{\queryq}$ is simply the Gaifman graph of its corresponding structure. Similarly, we speak about tree-shaped queries as well as tree-shaped matches, \ie query matches whose induced substructures are tree-shaped.

\subsection{Reducing conjunctive query entailment to satisfiability}

Our goal is to reduce (finite) conjunctive query entailment problem  to (finite) satisfiability (in exponential time, which is optimal).
We rely on previous results by the first author that appeared in the workshop paper~\cite{BaaderBR19}.
Our proof methods rely on two facts:
\begin{enumerate}
    \item Tree-shaped query matches (\ie those whose underlying graph forms a tree, more precisely a graph of treewidth $1$) can be efficiently blocked.
    \item If there is a countermodel $\str{A}$ for a formula $\varphi$ and a query $\queryq$ then there is also a countermodel $\str{B}$ of high \emph{girth}, meaning that it is sufficiently tree-like for a query $\queryq$ to match only in a tree-shaped way. In other words, from a given structure we ``eliminate'' all non-tree-shaped query matches of $\queryq$. We achieve this by an appropriate model transformation, called \emph{pumping}, similar to the one recently introduced by the first author and his coauthors in the workshop paper~\cite{BaaderBR19}.
\end{enumerate}

\noindent We discuss them in separate sections.

\subsection{Eliminating tree-shaped query matches}

We start from the first point. 
A query $\queryq'$ is a \emph{treeification} of $\queryq$ if it is tree-shaped and can be obtained from $\queryq$ by selecting, possibly multiple times, two of its variables and identifying them.
The set $\mathsf{Tree}(\queryq)$, \ie the set of all its treeifications of $\queryq$, is clearly of at most exponential size in $|\queryq|$.  

To detect tree-shaped query matches we employ the well-known rolling-up technique~\cite{Tessaris01,GlimmLHS08} that for a given tree-shaped $\queryq$ produces a $\GFt$ defining an unary relation $\conceptC_{\queryq}$, whose interpretation is non-empty in $\str{A}$ iff there is a match of $\queryq$ in $\str{A}$.
Hence, by imposing that such unary predicates are empty in $\str{A}$ for all treeifications of $\queryq$, we can conclude that $\str{A}$ does not have any tree-shaped query matches.
Let $\preceq$ be a descendant ordering on variables of a tree-shaped~$\queryq$.
Then for any leaf $\varv$ of $G_\queryq = (V_\queryq, E_\queryq)$ (\ie the $\preceq$-maximal element of $\preceq$) we define a unary relation $\conceptC_{\queryq}^{\varv}$ as
\[
    \conceptC_{\queryq}^{\varv}(x) := \bigwedge_{\conceptA(\varv) \in \queryq} \conceptA(x) \land \bigwedge_{\roler(\varv,\varv) \in \queryq} \roler(x,x),
\]
while for non-leaf node $\varv$ the unary relation $\conceptC_{\queryq}^{v}$ is defined with as above but we additionally append:
\[
    \bigwedge_{ (\varv,\varu) \in E_{\queryq}, \varu \prec \varv }  \exists{y}\; \left( \bigwedge_{\roler(\varu,\varv) \in \queryq}\roler(\varx,\vary) \; \wedge \bigwedge_{\roler(\varv,\varu) \in \queryq}\roler(\vary, \varx) \right) \wedge \conceptC_{\queryq}^{\varu}(y)
\]
We put $\conceptC_{\queryq} := \conceptC_{\queryq}^{\varv_r}$, where $\varv_r$ is the root variable of $\queryq$ (\ie the $\preceq$-minimal variable). Note that the size of $\conceptC_{\queryq}$ is linear in $|\queryq|$.

\begin{fact}\label{prop:rolling-up}
For a tree-shaped query $\queryq$ we have $\str{A} \models \queryq$ if and only if $(\conceptC_{\queryq})^{\str{A}} \neq \emptyset$.
\end{fact}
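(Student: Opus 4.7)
The plan is to prove the stronger statement by structural induction on the tree $G_{\queryq}$ ordered by $\preceq$: for every variable $\varv \in V_{\queryq}$ and every $a \in A$, we have $a \in (\conceptC_{\queryq}^{\varv})^{\str{A}}$ if and only if there exists a match of the subquery $\queryq_{\varv}$ (consisting of all atoms mentioning variables in the subtree rooted at $\varv$) sending $\varv$ to $a$. The desired fact is then the special case $\varv = \varv_r$, since a match of $\queryq_{\varv_r} = \queryq$ exists iff some $a \in A$ witnesses $\conceptC_{\queryq}^{\varv_r} = \conceptC_{\queryq}$.

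For the base case, when $\varv$ is a leaf of $G_{\queryq}$, the subquery $\queryq_{\varv}$ consists solely of the unary atoms on $\varv$ and the self-loop binary atoms $\roler(\varv,\varv)$. The definition of $\conceptC_{\queryq}^{\varv}(x)$ is exactly the conjunction of the corresponding atoms on $x$, so the equivalence is immediate. For the inductive step, suppose $\varv$ is not a leaf and the claim holds for every child $\varu \prec \varv$ adjacent to $\varv$ in~$G_{\queryq}$. For the ``if'' direction, take a match $\matcheta$ of $\queryq_{\varv}$ with $\matcheta(\varv) = a$; its restriction to each subtree witnesses $\conceptC_{\queryq}^{\varu}$ at $\matcheta(\varu)$ by the inductive hypothesis, and together with the binary atoms between $\varv$ and $\varu$ realized at $(a, \matcheta(\varu))$, this supplies the required existential witness $y$ for each conjunct of $\conceptC_{\queryq}^{\varv}(a)$. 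Conversely, if $a \in (\conceptC_{\queryq}^{\varv})^{\str{A}}$, then for each child $\varu$ there is a witness $b_{\varu} \in A$ with the appropriate binary edges to $a$ and $b_{\varu} \in (\conceptC_{\queryq}^{\varu})^{\str{A}}$; by the inductive hypothesis we obtain matches $\matcheta_{\varu}$ of $\queryq_{\varu}$ sending $\varu$ to $b_{\varu}$, and we glue them into a single match $\matcheta$ of $\queryq_{\varv}$ by setting $\matcheta(\varv) := a$ and $\matcheta(\varw) := \matcheta_{\varu}(\varw)$ for $\varw$ in the subtree rooted at $\varu$.

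The only subtle point, and the main obstacle to a sloppy proof, is justifying the gluing step in the ``only if'' direction: because $G_{\queryq}$ is a tree, the subtrees rooted at the distinct children of $\varv$ share no variables, so the piecewise-defined function $\matcheta$ is well-defined and each query atom of $\queryq_{\varv}$ lies within exactly one piece (either the atoms at $\varv$ itself, the binary atoms between $\varv$ and some child $\varu$, or an atom of some $\queryq_{\varu}$). Without tree-shapedness this argument would break, which is precisely why treeifications are considered separately. Note that we do not need matches to be injective; homomorphic matches suffice throughout, so the construction tolerates two different query variables being mapped to the same element of $\str{A}$.
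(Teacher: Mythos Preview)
Your proposal is correct and follows essentially the same approach as the paper: both proceed by structural induction on the tree ordering $\preceq$, proving the stronger claim that $a \in (\conceptC_{\queryq}^{\varv})^{\str{A}}$ iff there is a match of the subquery rooted at $\varv$ sending $\varv$ to $a$, with the same base case and the same witness-extraction in the inductive step. If anything, your write-up is slightly more careful than the paper's, since you make explicit the gluing argument and why tree-shapedness guarantees the piecewise-defined match is well-defined.
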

\begin{proof}

We proceed by induction over $\preceq$, where the inductive assumption is that for all variables $\varu \prec \varv$ we have that $a \in (\conceptC_{\queryq}^{\varu})^{\str{A}}$ iff there is a homomorphism $\homoh$ from the subtree rooted at $\varu$ to $\str{A}$ with $\homoh(\varu) = a$.

\begin{itemize}
    \item Base case. We have that the following unary predicate is non-empty and contains $a$
\[
    \conceptC_{\queryq}^{\varv}(x) := \bigwedge_{\conceptA(\varv) \in \queryq} \conceptA(x) \land \bigwedge_{\roler(\varv,\varv) \in \queryq} \roler(x,x),
\]
    and it basically states that $a$ satisfies all atoms involving $a$ as a sole variable.
    Hence, $\homoh(\varv) := a$ is indeed a homomorphism. 
    For the opposite way, the existence of a homomorphism ensures us that all the conjuncts of $\conceptC_{\queryq}^{\varv}$ are satisfied at $a$.

    \item Take any variable $\varv$ and assume that there is a homomorphism $\homoh$ from the subtree rooted at $\varv$ to $\str{A}$ with $\homoh(\varv) = a$.
    We want to show that $a$ satisfies
\[
    \bigwedge_{ (\varv,\varu) \in E_{\queryq}, \varu \prec \varv }  \exists{y}\; \left( \bigwedge_{\roler(\varu,\varv) \in \queryq}\roler(\varx,\vary) \; \wedge \bigwedge_{\roler(\varv,\varu) \in \queryq}\roler(\vary, \varx) \right) \wedge \conceptC_{\queryq}^{\varu}(y)
\]
    , since the rest is in the base case.
    By the existence of a homomorphism we know that there are domain elements $\homoh(\varu) = a_\varu$.
    Since $\homoh$ is a homomorphism we know that if $\roler(\varv, \varu) \in \queryq$ then also $(a, a_\varu) \in \roler^{\str{A}}$ holds. Similarly for inverse roles. By the inductive assumption we also know that $a_\varu \in (\conceptC^{\varv}_{\queryq})^{\str{A}}$. Hence, $a \in (\conceptC^{\varv}_{\queryq})^{\str{A}}$.
    For the opposite direction the homomorphism is constructed by taking $\homoh(\varv) = a$ and then by selecting a witness $a_{\varu}$ for a formula
\[
    \bigwedge_{ (\varv,\varu) \in E_{\queryq}, \varu \prec \varv }  \exists{y}\; \left( \bigwedge_{\roler(\varu,\varv) \in \queryq}\roler(\varx,\vary) \; \wedge \bigwedge_{\roler(\varv,\varu) \in \queryq}\roler(\vary, \varx) \right) \wedge \conceptC_{\queryq}^{\varu}(y)
\]
     and putting $\homoh(\varu) := a_{\varu}$.
\end{itemize}
\end{proof}

The above fact can be justified by induction over $\preceq$, where the inductive assumption is that for all variables $\varu \prec \varv$ we have that $a \in (\conceptC_{\queryq}^{\varu})^{\str{A}}$ iff there is a homomorphism $\homoh$ from the subtree rooted at $\varu$ to $\str{A}$ with $\homoh(\varu) = a$.

\subsection{Obtaining counter-models of high girth}
\newcommand{\G}[0]{\mathsf{G}}
\newcommand{\two}[0]{\mathsf{2}}
\newcommand{\zeroS}[0]{\mathsf{zero}_{S}}
\newcommand{\EA}{\mathsf{E}_\str{A}}
\newcommand{\GA}{\mathsf{G}_\str{A}}
\newcommand{\VA}{\mathsf{V}_\str{A}}
\newcommand{\zeroEI}[0]{\mathsf{zero}_{\EA}}
\newcommand{\pump}[1]{\mathsf{pump}(#1)}
\newcommand{\pumpA}[0]{\pump{\str{A}}}
\newcommand{\GpumpA}[0]{\mathsf{G}_{\pumpA}}
\newcommand{\EpumpA}[0]{\mathsf{E}_{\pumpA}}

We start from auxiliary definitions.
For a given finite set~$S$ we denote with~$2^S$ the set of all boolean functions
with the domain~$S$. The unique function of the form~$S \rightarrow \{ 0 \}$ is denoted 
with~$\zeroS$. We drop the subscript~$S$, whenever it is known from the context.
Once two boolean functions~$f,g \in 2^S$ and an element~$s \in S$ are given, 
we say that $f$ is~\emph{nearly-$s$-equal}~$g$, 
symbolized by~$f \approx_s g$, if $f$ and~$g$ are equal on 
all arguments from~$S$ except for~$s$, for which~$g(s) = 1{-}f(s)$ holds.

In what follows, we will construct from $\str{A}$  a structure $\pumpA$ having the girth doubled.
The construction of $\pumpA$ is quite technical, so we provide some informal intuitions behind it.
We construct~$\pumpA$ from~$\str{A}$ by equipping each domain 
element from~$A$ with a set of coins (implemented as a boolean function), 
one coin per each edge from the initial structure. Then, we define connections 
between elements in the domain of $\pumpA$ in such a way that two elements~$u$ and~$v$ 
are connected only if they carry nearly the same sets of coins 
except that they differ on the side of unique coin responsible 
for the edge~$(u,v)$. Thus, in every round-trip, visited edges will appear even number of times 
(due to the fact that crossing an edge requires tossing the 
appropriate coin). Hence the girth of the obtained structure 
will be strictly greater than the girth of $\str{A}$.
\begin{definition} \label{def:pump}
Let~$\str{A}$ be a structure and let~$\GA=(\VA, \EA)$ 
be its Gaifman graph. We define the structure~$\pumpA$ with the domain $A \times 2^{\EA}$ as follows:
\begin{itemize}
\item For all elements $a \in A$ and all unary relational names $\conceptC$
we put $(a,f) \in \conceptC^{\pumpA}$ for all boolean function~$f \in 2^{\EA}$
whenever $a \in \conceptC^{\str{A}}$ holds.
\item For any binary relational symbols $\roler$ and any pair~$p=(v_1,v_2)$ of 
domain elements~$v_1 {=} (a_1,f_1)$ and~$v_2 {=} (a_2, f_2)$ with $a_1 {\neq} a_2$ 
we set $p \in \roler^{\pumpA}$ iff $(a_1, a_2) \in \roler^{\str{A}}$ and $f_1 \approx_{(a_1,a_2)} f_2$. 

\item For all domain elements $a$ and binary relational symbols $\roler$ s.t. $(a, a) \in \roler^{\str{A}}$ we put $((a, f), (a,f)) \in \roler^{\pumpA}$ for all~$f \in 2^{\EA}$.
\end{itemize} 
\end{definition}

We next aim to prove that the pumping method works as desired, \ie that it increases the girth and it is model-preserving.
We show the former goal first. The proof hinges upon the fact that in the projection of a cycle in $\pumpA$ some edge from $\str{A}$ is repeated twice.

\begin{lemma}\label{lemma:pumping-increases-girth}
For any structure~$\str{A}$ with a finite girth, the girth of~$\pumpA$ is 
strictly greater than the girth of~$\str{A}$.
Moreover, if $\str{A}$ is finite then $\pumpA$ is also finite.
\end{lemma}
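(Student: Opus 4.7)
The plan is to analyze cycles in $\pumpA$ via the natural projection $\pi \colon \pumpA \to \str{A}$ defined by $\pi(a,f) := a$. I would first observe that $\pi$ maps Gaifman edges to Gaifman edges: if $((a_1, f_1), (a_2, f_2))$ is an edge of $G_{\pumpA}$ then $a_1 \neq a_2$ and $\{a_1, a_2\} \in \EA$, so this is a homomorphism of the underlying simple graphs. The finiteness claim is immediate: if $|A| < \infty$ then $|\EA| < \infty$, hence the carrier $A \times 2^{\EA}$ is finite as well.

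For the girth claim, I would fix a shortest cycle $C = v_0, v_1, \ldots, v_\ell = v_0$ in $G_{\pumpA}$ with $v_i = (a_i, f_i)$ and analyze its projection $W := a_0, a_1, \ldots, a_\ell = a_0$, a closed walk in $G_\str{A}$. Two combinatorial properties of $W$ will drive the argument. First, each traversal $v_i \to v_{i+1}$ flips exactly the coin indexed by $\{a_i, a_{i+1}\}$; since the walk returns to the initial configuration $f_0$, every edge of $\EA$ must be flipped an even number of times along $W$, so in particular every edge actually used appears at least twice. Second, $W$ has no immediate backtracking, i.e.\ $a_i \neq a_{i+2}$ for every $i$: otherwise the step $v_{i+1} \to v_{i+2}$ would flip the same coin as $v_i \to v_{i+1}$, forcing $f_i = f_{i+2}$ and thus $v_i = v_{i+2}$, violating the simplicity of the cycle $C$.

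From non-backtracking I would extract a simple cycle in $\str{A}$ of length at most $\ell$ by the standard argument: taking the smallest $j$ such that $a_j$ coincides with some earlier $a_i$ yields a simple cycle $a_i, a_{i+1}, \ldots, a_j$ of length $j-i \leq \ell$ in $G_\str{A}$, whose length is at least $3$ thanks to the non-backtracking property. Hence $\ell \geq \girth(\str{A})$. To obtain strictness, I would rule out the case $\ell = \girth(\str{A})$: in that case the extracted simple cycle must have length exactly $\ell$, so $W$ consists of precisely those $\ell$ edges, each traversed exactly once. This contradicts the parity constraint proved above, which forces each used edge to appear at least twice in $W$. Therefore $\ell > \girth(\str{A})$, completing the proof.

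The main obstacle is really just the last step: marrying the combinatorial constraint ``each edge is used an even number of times'' with the length lower bound coming from non-backtracking. Once the projection $\pi$ is set up and the two structural properties of $W$ are isolated, the rest is an elementary walk-counting argument that requires no further machinery.
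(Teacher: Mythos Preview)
Your proof is correct and follows essentially the same route as the paper: project a shortest cycle in $G_{\pumpA}$ down to a closed walk in $G_{\str{A}}$ and use the coin-flipping parity to force every traversed edge to appear an even number of times. Your argument is in fact more complete than the paper's, which stops at ``some edge repeats'' without isolating the non-backtracking property; your explicit observation that $a_i \neq a_{i+2}$ (else $v_i = v_{i+2}$) is exactly what is needed to extract a genuine simple cycle from the projected walk and to rule out the boundary case $\ell = \girth(\str{A})$.
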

\begin{proof}
Let~$\GA$ [resp.~$\GpumpA$] be the Gaifman graph of~$\str{A}$ [resp.~$\pumpA$].
Let~$\rho_{\pumpA} := 
(v_1, f_1) \rightarrow^{e_1} \ldots \rightarrow^{e_n} (v_{n+1}, f_{n+1}) {=} (v_1, f_1)$ 
be an arbitrary shortest cycle in~$\GpumpA$. From the construction of~$\pumpA$
we know that~$((v_i, f_i), (v_{i+1},f_{i+1})) \in \EpumpA$ holds 
iff~$(v_i, v_{i+1}) \in \EA$ holds. 
Thus the projection~$\rho_\str{A}$ of~$\rho_{\pumpA}$ onto~$\str{A}$ 
(\ie $\rho_\str{A} = v_1 \rightarrow^{e_1'} \ldots \rightarrow^{e_n'} v_n$) 
is a cycle in~$\str{A}$. To conclude that~$|\rho_\str{A}| > \girth(\str{A})$ it is 
sufficient to prove that some pair of elements~$(v_i, v_{i+1})$ appear 
at least twice in~$\rho_\str{A}$. 
Fix any edge~$e = (u,v)$ from~$\rho_\str{A}$.
W.l.o.g. assume that~$f_1(e) = 0$ and let~$i$ be the 
minimal index, such that~$((u, f_i), (v,f_{i+1})) \in \rho_{\pumpA}$. 
Hence~$f_i(e) = 0$, but~$f_{i+1}(e) = 1$ (from the construction of~$\pumpA$). 
If there would be no~$j > i$, s.t.~$((u, f_j), (v,f_{j+1})) \in \rho_{\pumpA}$ 
then~$f_{n+1}(e) = 1$ (since crossing the edge~$e$ is the only way to flip the 
value of~$f(e)$), which leads to a contradiction with~$f_1 = f_{n+1}$. 
Thus~$e$ appears at least twice on the cycle~$\rho_\str{A}$.
\end{proof}

We next prove the latter, \ie that the pumping method is model-preserving.
\begin{lemma}\label{lemma:pumping-preserves-modelhood}
For all $\GFtpres$ $\varphi$ we have that if $\str{A} \models \varphi$ then $\pumpA \models \varphi$.
\end{lemma}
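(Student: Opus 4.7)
The plan is to show that the first-coordinate projection $\pi : \pumpA \to \str{A}$ given by $(a,f) \mapsto a$ behaves like a covering map: it is a surjective homomorphism, and moreover a bijection on every non-null neighborhood, so it preserves all local information that $\GFtpres$ can see. Directly from~\cref{def:pump} I would verify three structural facts. (a) The $1$-type of $(a, f)$ in $\pumpA$ equals the $1$-type of $a$ in $\str{A}$, since unary predicates and self-loops transfer by the first and third clauses of the definition. (b) For each $(a, f) \in \pumpA$ and each $b \in A$ with $a \neq b$ having non-null $2$-type $\eta$ to $a$ in $\str{A}$, there is exactly one $g \in 2^{\EA}$ such that $((a,f), (b,g))$ has non-null $2$-type in $\pumpA$, namely the unique $g$ with $g \approx_{(a,b)} f$; in this case the $2$-type of $((a,f),(b,g))$ is exactly $\eta$. (c) For $f \neq g$, the pair $((a,f),(a,g))$ has the null $2$-type. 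Combined with the definition of degree, this yields $\deg_{\pumpA, \eta}((a,f)) = \deg_{\str{A}, \eta}(a)$ for every non-null $\eta$.

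With these properties in hand, I would prove by structural induction on $\GFtpres$ formulas the following invariance: \textbf{(I)} for every $\varphi(x)$, every $a \in A$, and every $f \in 2^{\EA}$, $\str{A}, x/a \models \varphi(x)$ iff $\pumpA, x/(a,f) \models \varphi(x)$; and \textbf{(II)} for every $\varphi(x,y)$, every pair $a, b \in A$ with $a \neq b$ and non-null $2$-type in $\str{A}$, and every $f \in 2^{\EA}$, $\str{A}, x/a, y/b \models \varphi(x,y)$ iff $\pumpA, x/(a,f), y/(b,g) \models \varphi(x,y)$, where $g$ is the unique function with $g \approx_{(a,b)} f$. Atoms reduce to (a)--(c); Boolean combinations are immediate. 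For a guarded existential $\exists y\, \gamma(x,y) \land \psi(x,y)$, the binary guard $\gamma$ either is $x = y$ (in which case we reduce to (I) via the substitution $y := x$) or forces a non-null $2$-type between $x$ and $y$, and then (b) puts witnesses on both sides in bijection, so (II) applied to $\psi$ transfers satisfiability. The universal case is dual.

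The main obstacle is the local Presburger quantifier $\sum_i \lambda_i \cdot \#_y^{r_i}[\varphi_i(x,y)] \circledast \delta$ at element $(a, f)$. I would split each count $\card{r_i(x,y) \land \varphi_i(x,y)}{\pumpA}{x/(a,f)}$ into a self-loop contribution and a proper-neighbor contribution. The self-loop part contributes $1$ iff $r_i(a,a) \land \varphi_i(a,a)$ holds in $\str{A}$, which is guaranteed by (a) together with (I) applied to the one-variable formula $\varphi_i(x,x)$. The proper-neighbor part, thanks to (b) and (c), is in bijection with the $r_i$-neighbors $b \neq a$ of $a$ in $\str{A}$; each pair $((a,f),(b,g))$ on the $\pumpA$ side agrees with $(a,b)$ on $\varphi_i$ by (II). Hence every count in $\pumpA$ equals the corresponding count in $\str{A}$, and the linear (in)equality is preserved. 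The lemma follows: a $\GFtpres$ sentence is built from one-free-variable subformulas via outermost universal guarded quantifications, and by the surjectivity of $\pi$ together with (I) these transfer from $\str{A}$ to $\pumpA$. The delicate point that needs care is precisely property (c), which prevents pairs $((a,f),(a,g))$ with $f \neq g$ from spuriously contributing to any Presburger count, together with the treatment of self-loops, which do \emph{not} get duplicated across the many copies sharing a first coordinate.
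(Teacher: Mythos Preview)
Your proposal is correct and rests on exactly the same structural observations as the paper: the projection $(a,f)\mapsto a$ preserves $1$-types (including self-loops), induces a bijection between the non-null neighbourhood of $(a,f)$ and that of $a$ with identical $2$-types, and assigns the null $2$-type to pairs $((a,f),(a,g))$ with $f\neq g$; hence all degrees coincide. The only methodological difference is that the paper first passes to the normal form of Section~\ref{subsec:positive-transformation}, so that the formula decomposes into a unary part $\forall x\,\gamma(x)$, guarded binary implications $\forall x\forall y\ e_i(x,y)\to\alpha_i(x,y)$, and degree constraints---each of which follows immediately from the three observations without a full induction---whereas you carry out an explicit structural induction on arbitrary $\GFtpres$ formulae, which is more laborious but avoids relying on the normal-form transformation. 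One small point to tighten in your inductive step for $\exists y\ \gamma(x,y)\wedge\psi(x,y)$: even when the guard $\gamma$ is a genuine binary atom, the witness may still be $y=x$ via a self-loop, so that case should be folded into the reduction to (I) rather than handled only when $\gamma$ is literally $x=y$; you already treat this correctly for the Presburger counts, and the same split works here.
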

\begin{proof}
We can focus on $\varphi$ in normal forms. First, note that by the construction $a$ satisfies the same unary relations as $(a, f)$ and has the same self-loops.
Thus all atomic concepts, their boolean combinations and ``self-loop'' concepts from $\varphi$ are satisfied in $(a, f)$ in $\pumpA$ iff there are satisfied by $a$ in $\str{A}$.
Second, observe that the ``neighbourhood'' of $a \in A$ is preserved: an element $(a, f)$ is connected to the elements from $\{ (b, f_{b}) \}$ with $a$ and $b$ originally connected in $\str{A}$ and $f \approx_{(a, b)} f_{b}$. Moreover, by the construction, $((a, f), (b, f_{b})) \in \roler^{\pumpA}$ iff $(a, b) \in \roler^\str{A}$. 
Hence, it follows that all two-types of all pairs of elements are not only identical but the degree of every node form $\str{A}$ is the same as the degree of its corresponding elements in $\pumpA$. Thus the satisfaction of Presburger constraints is preserved, finishing the proof.
\end{proof}

\subsection{Exponential reduction from querying to satisfiability}

We conclude the section by applying the pumping method to the (finite) conjunctive query entailment problem.
Take a $\GFtpres$ formula $\varphi$ and a CQ $\queryq$. 
Our reduction simply checks the (finite) satisfiability of $\varphi' = \varphi \land \bigwedge_{\queryq' \in \mathsf{Tree}(\queryq)} \forall{x} \neg\conceptC_{\queryq'}(x)$.
The following lemma claims the correctness:

\begin{lemma}\label{lemma:ziq-reduction}
For any $\GFtpres$ we have that $\varphi \models \queryq$ (in the finite) iff $\varphi'$ is (finitely) unsatisfiable.
\end{lemma}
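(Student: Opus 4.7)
The plan is to prove the contrapositive equivalence: $\varphi \not\models \queryq$ (finitely) iff $\varphi'$ is (finitely) satisfiable. Both directions combine \cref{prop:rolling-up} with the pumping machinery from Lemmas~\ref{lemma:pumping-increases-girth}--\ref{lemma:pumping-preserves-modelhood}; only one direction actually requires pumping.

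For the forward direction ($\Rightarrow$), I would take a (finite) countermodel $\str{A}$ with $\str{A} \models \varphi$ and $\str{A} \not\models \queryq$. Every $\queryq' \in \mathsf{Tree}(\queryq)$ arises from $\queryq$ by identifying variables, so any match of $\queryq'$ in $\str{A}$ composes with that identification to yield a match of $\queryq$, which contradicts the choice of $\str{A}$. Hence $\str{A} \not\models \queryq'$ for every treeification, and \cref{prop:rolling-up} gives $(\conceptC_{\queryq'})^{\str{A}} = \emptyset$; altogether $\str{A} \models \varphi'$.

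For the backward direction ($\Leftarrow$), start with a (finite) model $\str{A} \models \varphi'$ and iterate $\pump{\cdot}$ sufficiently many times to obtain $\str{B}$ with $\girth(\str{B}) > |\queryq|$; this is possible because each application strictly increases girth (and preserves finiteness) by \cref{lemma:pumping-increases-girth}. By (iteratively applying) \cref{lemma:pumping-preserves-modelhood}, $\str{B} \models \varphi$, so it suffices to show $\str{B} \not\models \queryq$. Suppose for contradiction that $\matcheta \colon \queryVarq \to B$ is a match of $\queryq$ in $\str{B}$. Its image has at most $|\queryq|$ elements and, by the girth bound, induces an acyclic Gaifman subgraph. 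Collapsing in $\queryq$ the variables equated by $\matcheta$ produces a query $\queryq'$ whose Gaifman graph embeds injectively into this acyclic image; hence $\queryq'$ is tree-shaped, so $\queryq' \in \mathsf{Tree}(\queryq)$ and $\str{B} \models \queryq'$. Projecting this match back through each layer of pumping via the first-coordinate map (which preserves all unary and binary relations by \cref{def:pump}) yields a match of $\queryq'$ in $\str{A}$; by \cref{prop:rolling-up} we then have $(\conceptC_{\queryq'})^{\str{A}} \neq \emptyset$, contradicting $\str{A} \models \varphi'$.

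I expect the main delicate point to be justifying the final step rigorously: first, that a match in $\pumpA$ pushes down to a match in $\str{A}$ along the first-coordinate projection, and hence iteratively through all $k$ layers of pumping; second, that identifying variables equated by $\matcheta$ really produces a bona fide treeification in $\mathsf{Tree}(\queryq)$. Both are essentially bookkeeping from \cref{def:pump} together with acyclicity of the image, but need a small case analysis on self-loops (clause 3 of \cref{def:pump}) versus proper edges (clause 2).
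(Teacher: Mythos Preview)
Your proof is correct and matches the paper's overall structure: the same contrapositive decomposition, the rolling-up fact for the easy direction, and pumping to raise the girth in the other. The one place where you diverge is in how the contradiction is reached in the pumping direction. You keep the contradiction in $\str{A}$: after finding a treeification $\queryq'$ with a match in $\str{B}$, you push that match back through the layers of pumping via the first-coordinate projection, and then invoke \cref{prop:rolling-up} at $\str{A}$. The paper instead keeps the contradiction in $\str{B}$: it observes that $\varphi'$ is itself a $\GFtpres$ formula (the rolled-up concepts $\conceptC_{\queryq'}$ live in $\GFt$), so \cref{lemma:pumping-preserves-modelhood} already gives $\str{B}\models\varphi'$, and the non-empty $\conceptC_{\queryq'}^{\str{B}}$ contradicts this directly. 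Your route is perfectly sound---the first-coordinate map is a homomorphism straight from \cref{def:pump}, including the self-loop clause---but the paper's shortcut dissolves what you flagged as the ``delicate point'': once you note $\varphi'\in\GFtpres$, there is nothing to project back.
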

\begin{proof}
We show the proof for both finite and unrestricted setting at the same.
For the first direction assume that $\varphi \models \queryq$ (in the finite) and ad absurdum assume that $\varphi'$ has a (finite) model $\str{A}$.
By applying the pumping method to $\str{A}$ at least $|\queryq|$ times, by~\cref{lemma:pumping-increases-girth} we obtain a (finite) $\str{B}$ of girth greater than $|\queryq|$. By~\cref{lemma:pumping-preserves-modelhood} we conclude that $\str{B} \models \varphi'$. Since $\varphi' \models \varphi$ we also know that $\str{B} \models \varphi$. Since $\varphi \models \queryq$ we infer that there is a tree-shaped match of $\queryq$ on $\str{B}$ due to the fact that the girth of $\str{B}$ is greater than the number of atoms of $\queryq$. Hence, there is a treeification $\queryq'$ of $\queryq$ that matches $\str{B}$ implying (by~\cref{prop:rolling-up}) that $\conceptC_{\queryq'}^{\str{B}} \neq \emptyset$. 
It contradicts the satisfaction of an extra conjuncts in $\varphi'$ enforcing that $\conceptC_{\queryq'}^{\str{B}} = \emptyset$.
For the second direction, by contraposition, it suffices to take a (finite) countermodel for $\varphi$ and $\queryq$, which by~\cref{prop:rolling-up} is also a model of $\varphi'$.
\end{proof}

Hence, by the decidability of finite and unrestricted satisfiability problem for $\GFtpres$ we conclude~\cref{thm:result-on-query-answering} and its unrestricted version.

\end{document}